\documentclass[11pt]{article}%
\usepackage{amssymb}
\usepackage{amsfonts}
\usepackage{amsmath}
\usepackage{afterpage}
\usepackage{latexsym}
\usepackage{afterpage}
\usepackage{graphicx}
\usepackage{epstopdf}
\usepackage{hyperref}
\usepackage{float}
\usepackage{appendix}
\restylefloat{table}

\usepackage[nohead]{geometry}

\usepackage[doublespacing]{setspace}

\usepackage[bottom]{footmisc}
\usepackage{indentfirst}
\usepackage{endnotes}
\usepackage{booktabs}
\usepackage{rotating}
\usepackage{comment}
\setcounter{MaxMatrixCols}{30}

\newtheorem{theorem}{Theorem}

\newtheorem{corollary}{Corollary}

\newtheorem{definition}{Definition}
\newtheorem{example}{Example}

\newtheorem{lemma}{Lemma}

\newtheorem{remark}{Remark}

\newenvironment{proof}[1][Proof]{\noindent\textbf{#1.} }{\ \rule{0.5em}{0.5em}}

\makeatletter
\def\@biblabel#1{\hspace*{-\labelsep}}
\makeatother
\geometry{left=1in,right=1in,top=1.00in,bottom=1.0in}

\usepackage{natbib} 
\bibliographystyle{chicago}
\hypersetup{
colorlinks=true,%
citecolor=red,%
filecolor=black,%
linkcolor=red,%
urlcolor=black
}

\begin{document}

\title{Partial Identification of Distributional Parameters\\in Triangular Systems}
\author{Ju Hyun Kim\thanks{Department of Economics, University of North Carolina at Chapel Hill, Chapel Hill, NC, 27599. Email: juhkim@email.unc.edu. I am greatly
indebted to my advisor Bernard Salani\'{e} for his guidance and support
throughout this project. This paper has benefited from discussions with Andrew
Chesher, Toru Kitagawa, Dennis Kristensen, and Ismael Mourifi\'{e}. I thank participants of the econometrics colloquiums at Columbia. All errors are mine.}}
\date{\today }
\maketitle

\begin{abstract}
I study partial identification of distributional parameters in triangular
systems. The model consists of a nonparametric outcome equation and a
selection equation. This allows for general unobserved heterogeneity in
potential outcomes and selection on unobservables. The distributional
parameters considered in this paper are the marginal distributions of
potential outcomes, their joint distribution, and the distribution of
treatment effects. I explore different types of plausible restrictions to
tighten existing bounds on these parameters. The restrictions include stochastic
dominance, quadrant dependence between unobservables, and monotonicity between
potential outcomes. My identification applies to the whole population without
a full support condition on instrumental variables and does not rely on rank
similarity. I also provide numerical examples to illustrate identifying
power of the restrictions. \newline\textbf{Keywords:} Partial Identification,
Triangular Systems, Stochastic Monotonicity, Monotone Treatment Response,
Quadrant Dependence \newline\textbf{JEL Classifications:} C14, C21, C61, C81.

\end{abstract}

\section{Introduction}
\label{c2s1}
In this paper, I consider partial identification of distributional
parameters in triangular systems as follows: 
\begin{eqnarray*}
Y &=&m\left( D,\varepsilon _{D}\right) \text{,} \\
D &=&\boldsymbol{1}\left[ p\left( Z\right) \geq U\right] .
\end{eqnarray*}%
Here $Y$ denotes a continuous observed outcome, $D$ a binary selection
indicator, $Z$ instrumental variables (IV), $\varepsilon _{D}$ a
scalar unobservable, and $U$ a scalar unobservable. Let $Y_{0}$ and $Y_{1}$
denote the potential outcomes without and with some treatment, respectively,
with $Y_{d}=m\left( d,\varepsilon _{d}\right) $ for $d\in \left\{
0,1\right\}$. Note that I suppress covariates included in the outcome equation and the selection equation to keep the notation manageable. The analysis readily extends to accoount for conditioning on these covariates. The distributional parameters that I am interested in are the
marginal distributions of $Y_{0}$ and $Y_{1}$, their joint distribution, and
the distribution of treatment effects (DTE) $P\left( \Delta \leq \delta \right) $ with the treatment effect $%
\Delta =Y_{1}-Y_{0}$ and $\delta \in \mathbb{R}$.

In the context of welfare policy evaluation, various distributional
parameters beyond the average effects are often of fundamental interest.
First, changes in marginal distributions of potential outcomes induced by
policy are one of the main concerns when the impact on total social welfare
is calculated by comparing the distributions of potential outcomes. Examples
include inequality measures such as the Gini coefficient and the Lorenz
curve with and without policy (e.g. \citet{B2007}), and stochastic
dominance tests between the distributions of potential outcomes (e.g. \citet{A2002}). Second, information on the joint distribution
of $Y_{0}$ and $Y_{1}$, and the DTE beyond their marginal distributions is
often required to capture individual specific heterogeneity in program
evaluation. Examples of such information include the distribution of the outcome with treatment given that the potential outcome without
treatment lies in a specific set $P\left( Y_{1}\leq y_{1}|Y_{0}\in \Upsilon
_{0}\right) $ for some set $\Upsilon _{0}$ in $\mathbb{R}$, the fraction of
the population that benefits from the program $P\left( Y_{1}\geq
Y_{0}\right) ,$ the fraction of the population that has gains or losses in a
specific range $P\left( \delta ^{L}\leq Y_{1}-Y_{0}\leq \delta ^{U}\right) $
for $\left( \delta ^{L},\delta ^{U}\right) \in $ $\mathbb{R}^{2}$ with $%
\delta ^{L}\leq \delta ^{U}$, and the $q$ quantile of the impact
distribution $\inf \left\{ \delta :F_{\Delta }\left( \delta \right)
>q\right\} $.

The triangular system considered in this study consists of an
outcome equation and a selection equation. This structure allows for general
unobserved heterogeneity in potential outcomes and selection on
unobservables. The error term in the outcome equation represents unobserved
factors causing heterogeneity in potential outcomes among observationally
equivalent individuals.\footnote{%
Since it determines the relative ranking of such individuals in the
distribution of potential outcomes, it is also referred to as the rank
variable in the literature. See \citet{CH2013}.} The
selection model with a latent index crossing a threshold has been widely
used to model selection into programs. In the model, the latent index $%
p\left( Z\right) -U$ is interpreted as the net expected utility from
participating in the program. \citet{V2002} showed that the model is equivalent to the local average
treatment effect (LATE) framework developed by \citet{IA1994}.\footnote{%
The LATE\ framework consists of two main assumptions: independence and
monotonicity. The former assumes that the instrument is jointly independent
of potential outcomes and potential selection at each value of the
instrument, while the latter assumes that the instrument affects the
selection decision in the same direction for every individual. Since the contribution of \citet{V2002}, the selection structure has been widely recognized
as the model which is not only motivated by economic theory but also as weak
as LATE assumptions.}

In the literature, the identification method has relied on either the full support of IV or rank similarity to
consider the $entire$ population. The full support condition requires IV to change the probability of receiving the treatment from zero to one.\footnote{This type of identification is also referred to as identification at infinity.} As discussed in \citet{H1990}, and \citet{IW2009},  however, the applicability of the identification results is very limited because such instruments are difficult to find in practice. Rank similarity assumes that the
distribution $\varepsilon _{d}$ conditional on $U$ does not depend on $d$ for $d\in \left\{
0,1\right\}$. As
a relaxed version of rank invariance, it allows for a random variation
between ranks with and without treatment.\footnote{%
In this sense, rank similarity is also called \emph{expectational} rank
invariance. See \citet{CH2013}. \citet{BSV2012}, \citet{BSV2008}, \citet{SV2011}, and \citet{M2013} made use of rank
similarity to identify average treatment effects for models with a binary
outcome variable. Note that these results are readily extended to
identification of marginal distributions for continuous outcome variables.}
However, rank similarity is invalid when individuals select
treatment status based on their potential outcomes, as in the Roy
model.

The literature on identification in triangular systems has stressed marginal distributions more than the joint distribution or the
DTE. \citet{H1990} point-identified marginal distributions relying on the full support condition. \citet{CH2005} showed that the marginal distributions are point-identified for the entire population under rank similarity. Without these conditions, most of the
literature has focused on \emph{local} identification for \emph{compliers},
to circumvent complications in considering the whole population. \citet{IR1997}, and \citet{A2002} showed that under LATE assumptions presented by \citet{IA1994},
marginal distributions of potential outcomes are point-identified for \emph{%
compliers} who change their selection in a certain direction according to
the change in the value of IV.  \citet{K2009} contrasts with other work in the
sense that his identification is for the $entire$ population without relying
on the full support of IV and rank similarity. He obtained the
identification region for the marginal distributions under IV conditions.%
\footnote{%
The IV restrictions that he considers are (i) IV independence of each
potential outcome, (ii) IV joint independence of the pair of potential
outcomes, and (iii) LATE restrictions.} The joint distribution and the DTE
have not been investigated in these studies.

The literature on identification of the joint distribution and the DTE\ is
relatively small. \citet{FW2010} established sharp bounds on the joint
distribution and the DTE in semiparametric triangular systems using Fr\'{e}%
chet-Hoeffding bounds and Makarov bounds, respectively. Their identification
is for the entire population under the full support of IV. Also, \citet{GH2012} point-identified the DTE based on a random coefficients
specification for the selection equation. To do this, they also relied on
the full support of the IV. \citet{P2013} studied identification of the joint
distribution and the DTE in the extended Roy model, a particular case of
triangular systems.\footnote{%
The extended Roy model models individual self-selection based on the
potential outcomes and observable characteristics without allowing for any
additional selection unobservables.} Although he point-identified the joint
distribution and the DTE by taking advantage of the particular structure of
the extended Roy model, his identification only applies to the group of
compliers. \citet{HSC1997}, \citet{CHH2003}, and Aakvik et al.
(2005) considered factor structures in outcome unobservables and assumed the
presence of additional proxy variables to identify the joint distribution.
\citet{HM2014} considered Roy models with a binary outcome
variable. They derived sharp bounds on the marginal distributions and the
joint distribution of the potential outcomes. Although they did not assume
the full support of IV and rank similarity, for the joint distribution
bounds they focused on a one-factor structure, as proposed in
\citet{AHV2005}.

    The main contribution of this paper is to partially identify the joint distribution and the DTE as well as marginal distributions for the entire population without the full support condition of IV and rank similarity. To avoid strong assumptions and impose plausible information on the model, I consider weak restrictions on dependence between unobservables and between potential outcomes. First, I obtain sharp bounds on the distributional parameters for the worst case, which only assumes the latent index model of \citet{V2002}. Next, I explore three different types of restrictions to tighten the worst bounds and investigate how each restriction contributes to improving the identification regions of these parameters.

The first restriction that I consider is negative stochastic monotonicity (NSM) between $\varepsilon _{d}$ and $U$ for $d\in \left\{0,1\right\}$. NSM means that $\varepsilon _{d}$  increases as U increases for $d\in \left\{0,1\right\}$. This assumption has been adopted in the literature including \citet{JPX2011} for its plausibility in practice.\footnote{\citet{C2005} also considered stochastic monotonicity to identify triangular systems with a multivalued discrete endogenous variable. However, his setting does not allow for the binary selection.} The role of NSM in my paper is different from theirs: I use this condition to bound the counterfactual marginal distributions for the whole population, while they use this condition to identify a particular structure in the outcome equation for individuals who change their selection by variation in IV. Another type of restriction that I discuss is conditional positive quadrant dependence (CPQD) for the dependence between  $\varepsilon _{0}$ and $\varepsilon _{1}$ conditional on $U$. CPQD means that $\varepsilon _{0}$ and $\varepsilon _{1}$ are positively dependent conditionally on $U$. Finally, I consider monotone treatment response (MTR) $P\left( Y_{1}\geq Y_{0}\right) =1$, which assumes that each individual benefits from the treatment. Unlike other two restrictions, MTR restricts the support of potential outcomes. 
 
Interesting conclusions emerge from the results of this paper. First, NSM has identifying power on the marginal distributions only. CPQD improves the bounds on the joint distribution only. On the other hand, MTR yields substantially tighter identification regions for all three distributional parameters.

In the next section, I\ give a formal description of my problem, define the
parameters of interest, and discuss assumptions considered for the
identification. In Section \ref{c2s3}, I\ establish sharp bounds on the
distributional parameters. Section \ref{c2s4} discusses testable implications and
considers bounds when some of the restrictions are jointly imposed. Section
\ref{c2s5} provides numerical examples to illustrate the identifying power of each
restriction and Section \ref{c2s6} concludes. Technical proofs are collected in Appendix.

\section{Basic Model and Assumptions}
\label{c2s2}
\subsection{Model}
\label{c2s2s1}
Consider the triangular system:%
\begin{eqnarray}
Y &=&m\left( D,\varepsilon _{D}\right) ,  \label{model} \\
D &=&\boldsymbol{1}\left[ p\left( Z\right) \geq U\right] ,  \notag
\end{eqnarray}%
where $Y$ is an observed scalar outcome, $D$ is a binary indicator for
treatment participation, $\varepsilon _{D}$ is a scalar unobservable in the
outcome equation$,$ and $U$ is a scalar unobservable in a selection
equation. Since $Y$ is an realized outcome as a result of selection $D,$ $Y$
can be written as $Y=D\times Y_{1}+(1-D)\times Y_{0}$, where $Y_{0}$ and $%
Y_{1}$ are potential outcomes for the treatment status $0$ and $1$,
respectively.\ Let $Z$ denote a scalar or vector-valued IV that is excluded
from the outcome equation and $\mathcal{Z}$ denote the support of $Z$.$\ $For each 
$z\in \mathcal{Z} $, let $D_{z}$\ be the potential treatment participation when $Z=z$%
.\ 

Note that I allow the distribution of outcome unobservables to vary with the
selection $D$. Also, I\ do not impose an additively separable structure on
the unobservable in the outcome equation. In the selection equation, $%
p\left( Z\right) -U$ can be interpreted as the net utility from treatment
participation.\footnote{%
\citet{V2006} showed that selection equation in the model (\ref{model}) is
equivalent to the most general form of the latent index selection model \ $D=%
\boldsymbol{1}\left[ s\left( Z,V\right) \geq 0\right] $ where $s$ is unknown
function and $V$ is a (possibly) vector-valued unobservable under
monotonicity of the selection in the instruments. Technically, the condition
means that for any $z$ and $z^{\prime }$ in $\mathcal{Z}$, if $s\left(
z,v_{0}\right) >s\left( z^{\prime },v_{0}\right) $ for some $v_{0}\in 
\mathcal{V},$ $s\left( z,v\right) >s\left( z^{\prime },v\right) $ for almost
every value of $v\in \mathcal{V}$ where $\mathcal{V}$ is the support of $V.$
Intuitively, this implies that the sign of the change in net utility caused
by the instruments does not depend on the value of the unobservable $V$.}
Note that selection on unobservables arises from dependence between $%
\varepsilon _{D}$ and $U.$

\begin{remark}
Without loss of generality, I assume that $U\sim Unif\left( 0,1\right) $ for
normalization. Then $p\left( z\right) =P\left[ D=1|Z=z\right] $ is
interpreted as a propensity score.
\end{remark}

Throughout this study, I\ impose the following assumptions on the model (\ref%
{model}).

\begin{description}
\item[M.1 (Monotonicity)] $m\left( d,\varepsilon _{d}\right) $ is strictly
increasing in a scalar unobservable $\varepsilon _{d}$ for each $d\in
\left\{ 0,1\right\} .$

\item[M.2 (Continuity)] For $d\in \left\{ 0,1\right\} $, the distribution function of $\varepsilon _{d}$ is
absolutely continuous with respect to the Lebesgue measure on $\mathbb{R}$.

\item[M.3 (Exogeneity)] $Z\perp \!\!\!\perp \left( \varepsilon
_{0},\varepsilon _{1},U\right) $.

\item[M.4 (Propensity Score)] The function $p\left( \cdot \right) $ is a
nonconstant and continuous function for the continuous element in $Z$.
\end{description}

$M.1$ and $M.2$ ensure the continuous distribution of $Y_{d}$ and
invertibility of the function $m\left( d,\varepsilon _{d}\right) $ in the
second argument, which is a standard assumption in the literature on
nonparametric models with a nonseparable error. $M.3$ is an instrument
exogeneity condition. That is, the instrument $Z$ exogenously affects
treatment selection and it affects the outcome only through the treatment
status. Furthermore, $Z$ does not affect dependence among unobservables $%
\varepsilon _{0},\varepsilon _{1}$, and $U$. $M.4$ is necessary to ensures sharpness of the bounds. It requires that when some elements of the IV are continuous, the propensity score function $p\left( \cdot \right) $ be continuous for the continuous elements of IV when the discrete elements of IV are held constant. See \citet{SV2011} for details.
\begin{remark}
\citet{V2002} showed that under $M.3$, the selection equation in the model (%
\ref{model}) is equivalent to the assumptions in the LATE framework
developed by \citet{IA1994}:\ independence and monotonicity. The
LATE independence condition assumes that $Z\perp \!\!\!\perp \left(
Y_{0},Y_{1},U\right) $ and that the propensity score $p\left( z\right) $ is
a nonconstant function. The LATE monotonicity condition assumes\ that either $%
D_{z}\geq D_{z^{\prime }}$ or $D_{z^{\prime }}\geq D_{z}$ with probability
one for $\left( z,z^{\prime }\right) \in \mathcal{Z} \times \mathcal{Z} $ with $z\neq
z^{\prime }.$
\end{remark}

Numerous examples fit into the model (\ref{model}).\ I\ refer to the
following three examples throughout the paper.

\begin{example}
(The effect of job training programs on wages) Let $Y$ be a wage and $D$ be
an indicator of enrollment for the program. Let $Z$ be the random assignment
for the training service when the program designs randomized offers in the
early application process. Note that such a randomized assignment has been
widely used as a valid instrument in the LATE framework, which is equivalent
to the model (\ref{model}) considered in this paper.
\end{example}

\begin{example}
(College premium) Let $Y$ be a wage and $D$ be the college education
indicator. The literature including \citet{CHV2011} has
used the distance to college, local wage, local unemployment rate,\ and
average tuition for public colleges in the county of residence as IV.
\end{example}

\begin{example}
(The effect of smoking on infant birth weight) Let $Y$ be an infant birth
weight and $D$ be a smoking indicator. In the empirical literature, state
cigarette taxes, policy interventions including tax hikes, and randomized
counselling have been used as IV.
\end{example}

\subsection{Objects of Interest and Assumptions}
\label{c2s2s2}
The objects of interest here are the marginal distribution functions of $%
Y_{0}$ and $Y_{1},$ $F_{0}\left( y_{0}\right) $ and $F_{1}\left(
y_{1}\right) $, their joint distribution function $F\left(
y_{0},y_{1}\right) $, and the DTE $F_{\Delta }\left( \delta \right) =P\left(
Y_{1}-Y_{0}\leq \delta \right) $ for fixed $y_{0}$, $y_{1}$, and $\delta $
in $\mathbb{R}$. I\ obtain sharp bounds on $F_{0}\left( y_{0}\right) ,$ $%
F_{1}\left( y_{1}\right) ,$ $F\left( y_{0},y_{1}\right) ,$ and $F_{\Delta
}\left( \delta \right) $ under various weak restrictions. First, I\ derive
worst case bounds making use of only $M.1-M.4$ in the model (\ref{model}).
The conditions $M.1-M.4$ are maintained throughout this study. Second, I\
impose negative stochastic monotonicity (NSM) between each outcome unobservable and
the selection unobservable, and show how identification regions improve
under the additional restriction. Third, I\ consider conditional positive
quadrant dependence (CPQD) as a restriction between two outcome
unobservables $\varepsilon _{0}$ and $\varepsilon _{1}$ conditional on the
selection unobservable $U$. I\ also explore identifying
power of this restriction on each parameter, when it is imposed on top of $%
M.1-M.4$. Lastly, I\ consider monotonicity between two potential outcomes
as a different type of restriction. Henceforth, I\ call this monotone
treatment response (MTR). I derive sharp bounds under MTR in addition to $%
M.1-M.4$.

First, I\ present the definition of NSM, CPQD, and MTR. I\ also illustrate them using a toy model and discuss the underlying
intuition with economic examples.

\begin{description}
\item[NSM (Negative Stochastic Monotonicity)] Both $\varepsilon _{0}$ and $\varepsilon
_{1}$ are first order stochastically nonincreasing in $U.$ That is, $P\left(
\varepsilon _{d}\leq e|U=u\right) $ is nondecreasing in $u\in \left(
0,1\right) $ for any $e\in \mathbb{R}$ and $d\in \left\{ 0,1\right\} .$

\item[CPQD (Conditional Positive Quadrant Dependence)] $\varepsilon _{0}$ and 
$\varepsilon _{1}$ are positively quadrant dependent conditionally on $U.$
That is, for $\left( \varepsilon _{0},\varepsilon _{1}\right) \in \mathbb{%
R\times R}$ and $u\in \left( 0,1\right) ,$ 
\begin{equation*}
P\left[ \varepsilon _{0}\leq e_{0},\varepsilon _{1}\leq e_{1}|U=u\right]
\geq P\left[ \varepsilon _{0}\leq e_{0}|U=u\right] P\left[ \varepsilon
_{1}\leq e_{1}|U=u\right] .
\end{equation*}
\end{description}
To better understand these restrictions, consider a particular case
where $\varepsilon _{0}$ and $\varepsilon _{1}$ have a one-factor
structure as follows: for $d\in \left\{ 0,1\right\} $%
\begin{equation}
\varepsilon _{d}=\rho _{d}U+\nu _{d},  \label{factor}
\end{equation}%
where $\left( \nu _{0},\nu _{1}\right) \perp \!\!\!\perp U.$ Here$\ U$ is the
unobservable in the selection equation, while $\nu _{0}$ and $\nu _{1}$ represent treatment
specific heterogeneity.\footnote{%
This one-factor structure has been discussed in the context of the
effects of employment programs in the literature including \citet{AHV2005} and \citet{HM2014}.}

In this setting, NSM requires that $\rho _{0}$ and $\rho _{1}$
be nonpositive. Note that the direction of the sign of the monotonicity is not crucial because
my identification strategy can be applied to negative stochastic monotonicity.
Intuitively, NSM implies that as the level of $U$\ increases, both $%
\varepsilon _{0}$ and $\varepsilon _{1}$ decrease or stay constant. This
condition is plausible in many empirical applications. In job
training programs, individuals with higher motivation for the training
program (lower $U$) are more likely to invest effort in
their work (higher $\varepsilon _{0}$ and $\varepsilon _{1}$) than others
with lower motivation (higher $U$). In the example of the college premium,
a lower reservation utility (lower $U$) for college
education $\left( D=1\right) $ is more likely to go with a higher level of
unobserved abilities $($higher $\varepsilon _{0}$ and $\varepsilon _{1})$.
Regarding the effect of smoking on infant birth weight, NSM suggests that
controlling for observed characteristics, individuals with a lower desire
(lower $U$) for smoking ($D=0$) are more likely to have a healthier
lifestyle $($higher $\varepsilon _{1}$ and $\varepsilon _{0})$ than those
with a higher desire (higher $U$).

CPQD excludes any negative dependence between $\nu _{0}$ and $\nu _{1}$ in
the example (\ref{factor}). Before discussing implications of CPQD, I
present the concept of \emph{quadrant dependence}. Quadrant dependence
between two random variables is defined as follows:

\begin{definition}
(Positive (Negative) Quadrant Dependence, \citet{L1966}) Let $X$ and $Y$ be
random variables. $X$ and $Y$ are positively (negatively) quadrant dependent
if for any $(x,y)\in \mathbb{R}^{2}$, 
\begin{equation*}
P\left[ X\leq x,Y\leq x\right] \geq \left( \leq \right) P\left[ X\leq x%
\right] P\left[ Y\leq x\right] .
\end{equation*}%
or equivalently,%
\begin{equation*}
P\left[ X>x,Y>x\right] \geq \left( \leq \right) P\left[ X>x\right] P\left[
Y>x\right] .
\end{equation*}
\end{definition}

Intuitively, $X$ and $Y$ are positively quadrant dependent, if
the probability that they are simultaneously small or large is at least as high as it would be if they were independent.\footnote{%
For details, see pp. 187-188 in \citet{N2006}.} Note that quadrant
dependence is a very weak dependence measure among a variety of dependence
concepts in copula theory.\footnote{%
NSM is a stronger concept of dependence between two random variables than quadrant dependence. If $X$ and $%
Y$ are first order stochastically nondecreasing in $Y$ and $X$,
respectively, then $X$ and $Y$ are positively quadrant dependent.}

I\ impose \textit{conditional } positive quadrant dependence between $\varepsilon _{0}$ and $\varepsilon _{1}$ given the selection
unobservable $U$. In the example (\ref{factor}), CPQD requires that $\nu_{0}$ and $\nu_{1}$ be positively quadrant dependent.
Note that CPQD is satisfied even when $\nu _{0}$ and $\nu_{1}$\ are
independent of each other.

To intuitively understand the implications of CPQD, consider the example (%
\ref{factor}) for the three examples. For the example of job training
programs, suppose that two agents A and B have the same level of motivation for
the program and the identical observed characteristics. CPQD implies that if
the agent A is likely to earn more than agent B when they both
participate in the program, then A is still likely to earn more than B if
neither A nor B participates. This is due to the nonnegative correlation
between $\nu _{0}$\ and $\nu _{1}.$ In the college premium example, the
selection unobservable $U$ and another unobservable factor $\nu _{d}$ for $%
d\in \left\{ 0,1\right\} $ have been interpreted as an unobserved talent and
market uncertainty, respectively, in the literature including Jun et al.
(2012). CPQD excludes the case where market uncertainty unobservables $\nu
_{0}$\ and $\nu _{1}$ are negatively correlated. In the context of the
effect of smoking, after controlling for the desire for smoking and all
observed characteristics, the smoking (non-smoking) mother whose infant has
higher birth weight is more likely to have a heavier infant if she were a
non-smoker (smoker). Infant's weight is affected by mother's genetic factors 
$\nu _{d}$ for $d\in \left\{ 0,1\right\} ,$\ which are independent of her
preference for smoking. CPQD requires that mother's genetic factors in
treatment status $0$ and $1,$ $\nu _{0}$ and $\nu _{1}$ are nonnegatively
correlated.

\begin{description}
\item[MTR (Monotone Treatment Response)] $P\left( Y_{1}\geq Y_{0}\right) =1.$
\end{description}

MTR indicates that every individual benefits from some program or treatment.
MTR has been widely adopted in empirical research on evaluation of welfare
policy and various treatments including three examples I\ consider, the
effect of funds for low-ability pupils (\citet{H2012}), the impact of the
National School Lunch Program on child health (\citet{GKP2011}), and
various medical treatments (\citet{BSV2008}, \citet{BSV2012}). \ \ \ \ \ \ \
\ \ \ 

\subsection{Classical Bounds}
\label{c2s2s3}
In this subsection, I present two classical bounds that are applicable
to bounds on the joint distribution function and bounds on the DTE when the
marginal distributions of $Y_{0}$ and $Y_{1}$ are given. These are referred
to frequently throughout the paper.

Suppose that marginal distributions $F_{0}$ and $F_{1}$ are given and no
other restriction is imposed on the joint distribution $F$. Sharp bounds on
the joint distribution $F$ are given as follows: for $\left(
y_{0},y_{1}\right) \in \mathbb{R}\times \mathbb{R},$ 
\begin{equation*}
\max \left\{ F_{0}\left( y_{0}\right) +F_{1}\left( y_{1}\right) -1,0\right\}
\leq F\left( y_{0},y_{1}\right) \leq \min \left\{ F_{0}\left( y_{0}\right)
,F_{1}\left( y_{1}\right) \right\} .
\end{equation*}%
These bounds are referred to as Fr\'{e}chet-Hoeffding bounds. The lower
bound is achieved when $Y_{0}$ and $Y_{1}$ are perfectly negatively
dependent, while the upper bound is achieved when they are perfectly
positively dependent.\footnote{$Y_{0}$ and $Y_{1}$ are perfectly positively
dependent if and only if $F_{0}(Y_{0})=F_{1}(Y_{1})$ with probability one,
and they are perfectly negatively dependent if and only if $%
F_{0}(Y_{0})=1-F_{1}(Y_{1})$ with probability one.}

Next, let 
\begin{eqnarray*}
F_{\Delta }^{L}\left( \delta \right) &=&\sup_{y}\max \left( F_{1}\left(
y\right) -F_{0}\left( y-\delta \right) ,0\right) , \\
F_{\Delta }^{U}\left( \delta \right) &=&1+\inf_{y}\min \left( F_{1}\left(
y\right) -F_{0}\left( y-\delta \right) ,0\right) .
\end{eqnarray*}%
Then for the DTE $F_{\Delta }\left( \delta \right) =P\left( \Delta \leq
\delta \right) =P\left( Y_{1}-Y_{0}\leq \delta \right) ,$ 
\begin{equation*}
F_{\Delta }^{L}\left( \delta \right) \leq F_{\Delta }\left( \delta \right)
\leq F_{\Delta }^{U}\left( \delta \right) ,
\end{equation*}%
and both $F_{\Delta }^{L}\left( \delta \right) $ and $F_{\Delta }^{U}\left(
\delta \right) $ are sharp. These bounds are referred to as Makarov bounds.

\section{Sharp Bounds}
\label{c2s3}
This section establishes sharp bounds on the marginal distributions of $Y_{0}$
and $Y_{1}$, the joint distribution and the DTE. I\ start with the worst case
bounds which are established under $M.1-M.4$ for model (\ref{model}).\ I
then\ obtain bounds under NSM and $M.1-M.4$, bounds under CPQD and $M.1-M.4,$
and finally those under MTR in addition to $M.1-M.4$. To compress long
notation, henceforth I\ refer to $P\left(  Y\leq y|D=d,Z=z\right)  $,
$P\left(  Y_{d}\leq y|D=1-d,Z=z\right)  $, $P\left(  Y\leq y,D=d|Z=z\right)
,$ \ and $P\left(  Y_{d}\leq y,D=1-d|Z=z\right)  $ as $P\left(  y|d,z\right)
$, $P_{d}\left(  y|1-d,z\right)  ,$ $P\left(  y,d|z\right)  $, and
$P_{d}\left(  y,1-d|z\right)  ,$ respectively, for $d\in\left\{  0,1\right\}
$, $y\in\mathbb{R}$,\ and $z\in\mathcal{Z}$.

\subsection{Worst Case Bounds}
\label{c2s3s1}
\citet{BGIM2007} obtained sharp bounds on marginal distributions of
$Y_{0}$ and $Y_{1}$ under $M.1-M.4.$ I\ take their approach to bounding the
marginal distributions. Given $M.3$, marginal distributions of $Y_{0}$ and
$Y_{1}$ can be written as follows: for each $z\in\mathcal{Z}$ and any
$y\in\mathbb{R}$,
\begin{align}
F_{1}\left(  y\right)   &  =P\left(  Y_{1}\leq y|Z=z\right)  \label{1}\\
&  =P\left(y,1|z\right)  +P_{1}\left(y,0|z\right)  .\nonumber
\end{align}
While the probability $P\left(y,1|z\right)  $ is observed, the
counterfactual probability $P_{1}\left(y,0|z\right)  $ is never
observed. Let $\overline{p}=\underset{z\in\mathcal{Z}}{\sup}p\left(  z\right)  ,$
$\underline{p}=\underset{z\in\mathcal{Z}}{\inf}p\left(  z\right)$. Note that
$\overline{p}$ and $\underline{p}$ are well defined under $M.4$.

For $z\in\mathcal{Z}$ such that $p\left(  z\right)  <\overline{p},$ the
counterfactual probability $P_{1}\left(y,0|z\right)  $ can be
decomposed as follows:%
\begin{align}
&  P_{1}\left( y,0|z\right)  \label{1.1}\\
&  =P\left(  Y_{1}\leq y,p\left(  z\right)  <U|z\right)  \nonumber\\
&  =P\left(  Y_{1}\leq y,p\left(  z\right)  <U\right)  \nonumber\\
&  =P\left(  Y_{1}\leq y,p\left(  z\right)  <U\leq\overline{p}\right)
+P\left(  Y_{1}\leq y,\overline{p}<U\right)  ,\nonumber
\end{align}
The second equality follows from $M.3$.

Note that $P\left(  Y_{1}\leq y,p\left(  z\right)  <U\leq\overline{p}\right)
$ is point-identified as follows:%
\begin{align*}
P\left(  Y_{1}\leq y,p\left(  z\right)  <U\leq\overline{p}\right)   &
=P\left(  Y_{1}\leq y,U\leq\overline{p}\right)  -P\left(  Y_{1}\leq y,U\leq
p\left(  z\right)  \right)  \\
&  =\underset{p\left(  z\right)  \rightarrow\overline{p}}{\lim}P\left(
y|1,z\right)  \overline{p}-P\left(  y|1,z\right)  p\left(  z\right)  .
\end{align*}
However, $P\left(  Y_{1}\leq y,\overline{p}<U\right)  $ is never observed.
Note that for
\[
P\left(  Y_{1}\leq y,\overline{p}<U\right)  =\underset{p\left(  z\right)
\rightarrow\overline{p}}{\lim}P_{1}\left(  y|0,z\right)  \left(
1-\overline{p}\right)  ,
\]
$\underset{p\left(  z\right) \rightarrow\overline{p}}{\lim}P_{1}\left(  y|0,z\right)$ can be any value between $0$ and $1$.
Therefore, I\ can derive bounds on $P\left(  Y_{1}\leq y,\overline
{p}<U\right)  $\ by\ plugging $0$ and $1$ into the counterfactual distribution
$P\left(  y|0,\overline{z}\right)  $. Similarly, the other counterfactual
probability $P_{0}\left(  y,1|z\right)  $ can be partially identified.

\begin{lemma}
[\citet{BGIM2007}]\label{L1.5}Under $M.1-M.4$, for any $z\in\mathcal{Z}$,
$P_{0}\left(  y,1|z\right)  $ and $P_{1}\left(  y,0|z\right)  $ are bounded as
follows:%
\begin{align*}
P_{0}\left(  y,1|z\right)   &  \in\left[  L_{01}^{wst}\left(  y,z\right)
,U_{01}^{wst}\left(  y,z\right)  \right]  ,\\
P_{1}\left(  y,0|z\right)   &  \in\left[  L_{10}^{wst}\left(  y,z\right)
,U_{10}^{wst}\left(  y,z\right)  \right]  ,
\end{align*}
where
\begin{align*}
L_{01}^{wst}\left(  y,z\right)   &  =\underset{p\left(  z\right)
\rightarrow\underline{p}}{\lim}P\left(  y|0,z\right)  \left(  1-\underline
{p}\right)  -P\left(  y|0,z\right)  \left(  1-p\left(  z\right)  \right)  ,\\
U_{01}^{wst}\left(  y,z\right)   &  =\underset{p\left(  z\right)
\rightarrow\underline{p}}{\lim}P\left(  y|0,z\right)  \left(  1-\underline
{p}\right)  -P\left(  y|0,z\right)  \left(  1-p\left(  z\right)  \right)
+\underline{p},\\
L_{10}^{wst}\left(  y,z\right)   &  =\underset{p\left(  z\right)
\rightarrow\overline{p}}{\lim}P\left(  y|1,z\right)  \overline{p}-P\left(
y|1,z\right)  p\left(  z\right)  ,\\
U_{10}^{wst}\left(  y,z\right)   &  =\underset{p\left(  z\right)
\rightarrow\overline{p}}{\lim}P\left(  y|1,z\right)  \overline{p}-P\left(
y|1,z\right)  p\left(  z\right)  +1-\overline{p},
\end{align*}
and these bounds are sharp.
\end{lemma}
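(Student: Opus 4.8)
The plan is to follow the decomposition already displayed for $P_1\left(y,0|z\right)$ and its mirror image for $P_0\left(y,1|z\right)$: split each counterfactual probability into an \emph{identified} middle piece and a \emph{completely unrestricted} tail piece, read off the interval, and then argue sharpness by exhibiting observationally equivalent data generating processes.

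\emph{Bounding.} Fix $y$ and $z$. Using $M.3$ and $Y_d=m\left(d,\varepsilon_d\right)$, write $P_1\left(y,0|z\right)=P\left(Y_1\le y,\,U>p(z)\right)$ and $P_0\left(y,1|z\right)=P\left(Y_0\le y,\,U\le p(z)\right)$. Split the $U$-event at the extreme propensity scores: $\{U>p(z)\}=\{p(z)<U\le\overline{p}\}\cup\{U>\overline{p}\}$ and $\{U\le p(z)\}=\{U\le\underline{p}\}\cup\{\underline{p}<U\le p(z)\}$. The middle piece is point identified, since $P\left(Y_1\le y,U\le p(z)\right)=P\left(y,1|z\right)$ and $P\left(Y_0\le y,U>p(z)\right)=P\left(y,0|z\right)$ are observed, while $P\left(Y_1\le y,U\le\overline{p}\right)$ and $P\left(Y_0\le y,U>\underline{p}\right)$ are obtained as limits along a sequence $z_n$ with $p(z_n)\to\overline{p}$ (resp.\ $\underline{p}$), which exists and gives a well-defined limit by $M.4$ together with continuity of $u\mapsto P\left(Y_d\le y,U\le u\right)$; these recover exactly $L^{wst}_{10}$ and $L^{wst}_{01}$. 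The tail piece is unrestricted: an individual with $U>\overline{p}$ has $p(Z)\le\overline{p}<U$ for every realization of $Z$, hence $D=0$ always and $Y_1$ is never observed for such units, so $P\left(Y_1\le y\mid U>\overline{p}\right)$ can be any number in $[0,1]$; since $U$ is uniform on $(0,1)$ this yields $P\left(Y_1\le y,U>\overline{p}\right)\in[0,1-\overline{p}]$, and symmetrically $P\left(Y_0\le y,U\le\underline{p}\right)\in[0,\underline{p}]$ because $U\le\underline{p}$ forces $D=1$ always. Adding the two pieces gives the stated intervals, with upper endpoints $U^{wst}_{10}=L^{wst}_{10}+(1-\overline{p})$ and $U^{wst}_{01}=L^{wst}_{01}+\underline{p}$.

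\emph{Sharpness.} It is enough to show that every value in each interval is attained by some $\left(m,(\varepsilon_0,\varepsilon_1,U),Z\right)$ satisfying $M.1$–$M.4$ and inducing the observed law of $(Y,D,Z)$. Keep the observed law of $Z$, the function $p(\cdot)$, the uniform $U$, and the independence in $M.3$. On the range of $U$ for which $Y_d$ is observed at some value of $Z$, the observed conditional outcome distributions determine a strictly increasing $m\left(d,\cdot\right)$ and the conditional laws of $\varepsilon_d$ given $U$; on the complementary stratum ($U>\overline{p}$ for $\varepsilon_1$, $U\le\underline{p}$ for $\varepsilon_0$) nothing is pinned down, so given a target $t\in[0,1]$ one chooses the conditional law of $\varepsilon_d$ given $U=u$ there to put mass $t$ below $m\left(d,\cdot\right)^{-1}(y)$, smoothed to an absolutely continuous density (so $M.2$ holds for the marginal of $\varepsilon_d$) and extending $m\left(d,\cdot\right)$ strictly increasingly (so $M.1$ holds); $M.3$ and $M.4$ are inherited. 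This realizes $P\left(Y_1\le y,U>\overline{p}\right)=t(1-\overline{p})$, resp.\ $P\left(Y_0\le y,U\le\underline{p}\right)=t\,\underline{p}$, hence any point of the two intervals.

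\emph{Main obstacle.} The bounding half is a short computation once $M.3$ and uniformity of $U$ are in hand; the real work is the sharpness construction, i.e.\ checking that the free choice of conditional laws on the unobserved $U$-strata can be made compatible \emph{simultaneously} with strict monotonicity of the nonseparable $m\left(d,\cdot\right)$, absolute continuity of the marginal of $\varepsilon_d$, and exact reproduction of the observed joint distribution of $(Y,D,Z)$. Following \citet{BGIM2007}, this is most cleanly done by working directly with the sub-distributions of $(Y_d,U)$ rather than of $(\varepsilon_d,U)$: $M.1$ makes the two descriptions equivalent, and the former turns the ``match the data'' requirement into a transparent constraint on restrictions of measures to observed regions.
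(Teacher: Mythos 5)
Your proposal is correct and takes essentially the same approach as the paper: the identical decomposition of each counterfactual probability at the extreme propensity scores $\overline{p}$ and $\underline{p}$, point identification of the middle piece via limits of observed quantities, and the observation that the tail conditional probability $P\left(Y_{1}\leq y\mid U>\overline{p}\right)$ (resp.\ $P\left(Y_{0}\leq y\mid U\leq\underline{p}\right)$) is completely unrestricted because such units are never treated (resp.\ always treated). The paper's own proof simply plugs $0$ and $1$ into that unidentified term and asserts the model is uninformative about it, so your explicit sharpness construction merely spells out what the paper leaves implicit.
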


\begin{proof}
The proof is in Appendix.
\end{proof}

\begin{remark}
\label{pointid} If $\underline{p}=0,$ then $P_{0}\left(y,1|z\right)  $
is point-identified as $L_{01}^{wst}\left(  y,z\right)  =U_{01}^{wst}\left(
y,z\right)  .$ On the other hand, if $\overline{p}=1,$ then $P_{1}\left(
y,0|z\right)  $ is point-identified as $L_{10}^{wst}\left(  y,z\right)
=U_{10}^{wst}\left(  y,z\right)  .$ Therefore, when the instruments shift the
propensity score from $0$ to $1$, both counterfactual probabilities are
point-identified, and thus both marginal distributions of potential outcomes
are point-identified. This full support condition implies that treatment
participation is completely determined by instruments in the limits, and
unobservables do not exert any influence on treatment selection in the limits
of the propensity score. Therefore, the distributions of potential outcomes
are point-identified as they are point-identified in the absence of selection
on unobservables.
\end{remark}

Note that under $M.1-M.4$, the model (\ref{model}) does not impose any
restriction on dependence between $Y_{0}$ and $Y_{1}$. Hence,
Fr\'{e}chet-Hoeffding bounds and Makarov bounds can be employed to establish
sharp bounds on the joint distribution and the DTE, respectively.
Specifically, for any $z\in\mathcal{Z},$%
\begin{align}
&  F\left(  y_{0},y_{1}\right)  \label{jointdecompose}\\
&  =P\left(  Y_{0}\leq y_{0},Y_{1}\leq y_{1}|z\right)  \nonumber\\
&  =P\left(  Y_{0}\leq y_{0},Y_{1}\leq y_{1}|0,z\right)  \left(  1-p\left(
z\right)  \right)  +P\left(  Y_{0}\leq y_{0},Y_{1}\leq y_{1}|1,z\right)
p\left(  z\right)  .\nonumber
\end{align}
The first equality follows from $M.3$. Now Fr\'{e}chet-Hoeffding bounds can be
established on $P\left(  Y_{0}\leq y_{0},Y_{1}\leq y_{1}|0,z\right)  $ and
$P\left(  Y_{0}\leq y_{0},Y_{1}\leq y_{1}|1,z\right)  $ based on
point-identified $P\left(  y_{0}|0,z\right)  $\ and partially identified
$P_{1}\left(  y_{1}|0,z\right)  ,$\ and partially identified $P_{0}\left(
y_{0}|1,z\right)  $\ and point-identified $P\left(  y_{1}|1,z\right)  ,$ respectively.

Note that when marginal distributions are partially identified, sharp bounds
on the joint distribution are obtained by taking the union of
Fr\'{e}chet-Hoeffding bounds over all possible pairs of marginal
distributions. Similarly, the DTE can be written as
\begin{align*}
&  P\left(  Y_{1}-Y_{0}\leq\delta\right)  \\
&  =P\left(  Y_{1}-Y_{0}\leq\delta|z\right)  \\
&  =P\left(  Y_{1}-Y_{0}\leq\delta|0,z\right)  \left(  1-p\left(  z\right)
\right)  +P\left(  Y_{1}-Y_{0}\leq\delta|1,z\right)  p\left(  z\right)  ,
\end{align*}
and Makarov bounds can be applied to $P\left(  Y_{1}-Y_{0}\leq\delta
|0,z\right)  $ and $P\left(  Y_{1}-Y_{0}\leq\delta|1,z\right)  $ based on
point-identified $P\left(  y_{0}|0,z\right)  $\ and partially identified
$P_{1}\left(  y_{1}|0,z\right)  ,$\ and partially identified $P_{0}\left(
y_{0}|1,z\right)  $\ and point-identified $P\left(  y_{1}|1,z\right)  ,$ respectively.

The specific forms of sharp bounds on marginal distributions of $Y_{0}$ and
$Y_{1},$ their joint distribution, and the DTE  under $M.1-M.4$ are provided
in Theorem \ref{T1} in Appendix.

\subsection{Negative Stochastic Monotonicity}
\label{c2s3s2}
In this subsection, I\ additionally impose NSM on dependence between
$\varepsilon_{0}$ and $U$ and between $\varepsilon_{1}$ and $U$. I\ show that
NSM has additional identifying power for marginal distributions, but not on
the joint distribution nor on the DTE.

First, I\ use NSM to tighten the bounds on counterfactual probabilities
$P_{1}\left(  y,0|z\right)  $ and $P_{0}\left(  y,1|z\right)  $. Consider a
counterfactual distribution $P_{1}\left(  y|0,z\right) = P\left(  \varepsilon_{1}\leq m^{-1}\left(  1,y\right)  |p\left(  z\right)
<U\right)  $. If $p\left(  z\right)  <\overline{p},$ under NSM, for any
$\widehat{p}\left(  z\right)  \in(p\left(  z\right)  ,1],$
\[
P\left\{  \varepsilon_{1}\leq m^{-1}\left(  1,y\right)  |p\left(  z\right)
<U\right\}  \geq P\left\{  \varepsilon_{1}\leq m^{-1}\left(  1,y\right)
|p\left(  z\right)  <U\leq\widehat{p}\left(  z\right)  \right\}  .
\]

Since $P\left\{  \varepsilon_{1}\leq m^{-1}\left(  1,y\right)  |p\left(
z\right)  <U\leq\widehat{p}\left(  z\right)  \right\}  $ is nondecreasing in
$\widehat{p}\left(  z\right)  $ by NSM, for $z\in\mathcal{Z}$ $\setminus
p^{-1}\left(  \overline{p}\right)  ,$
the highest possible observable lower bound is obtained when $\widehat{p}\left(
z\right)  =\overline{p}$. Therefore by NSM, for any $z\in\mathcal{Z}\setminus
p^{-1}\left(  \overline{p}\right)  ,$ NSM implies%
\begin{align*}
&  P_{1}\left(  y|0,z\right)  \\
&  =P\left(  \varepsilon_{1}\leq m^{-1}\left(  1,y\right)  |p\left(  z\right)
<U\right)  \\
&  \geq P\left(  \varepsilon_{1}\leq m^{-1}\left(  1,y\right)  |p\left(
z\right)  <U\leq\overline{p}\right)  \\
&  =\frac{P\left(  \varepsilon_{1}\leq m^{-1}\left(  1,y\right)
,U\leq\overline{p}\right)  -P\left(  \varepsilon_{1}\leq m^{-1}\left(
1,y\right)  ,U\leq p\left(  z\right)  \right)  }{\overline{p}-p\left(
z\right)  }.
\end{align*}
Obviously, $P\left(  \varepsilon_{1}\leq m^{-1}\left(  1,y\right)
,U\leq\overline{p}\right)  $ and $P\left(  \varepsilon_{1}\leq m^{-1}\left(
1,y\right)  ,U\leq p\left(  z\right)  \right)  $ are point-identified as
$\underset{p\left(  z\right)  \rightarrow\overline{p}}{\lim
}P\left(  y,1|z\right) $ and $P\left(  y,1|z\right)  $ for any
$z\in\mathcal{Z}.$

Similarly, $P_{0}\left(  y|1,z\right)  =P\left(  \varepsilon_{0}\leq
m^{-1}\left(  0,y\right)  |U\leq p\left(  z\right)  \right)  $ and by NSM, for
any $z\in\mathcal{Z}\setminus p^{-1}\left(  \underline{p}\right)  $%
\begin{align*}
&  P\left(  \varepsilon_{0}\leq m^{-1}\left(  0,y\right)  |U\leq p\left(
z\right)  \right)  \\
&  \leq P\left(  \varepsilon_{0}\leq m^{-1}\left(  0,y\right)  |\underline
{p}<U\leq p\left(  z\right)  \right)  \\
&  =\frac{P\left(  \varepsilon_{0}\leq m^{-1}\left(  0,y\right)
,\underline{p}<U\right)  -P\left\{  \varepsilon_{0}\leq m^{-1}\left(
0,y\right)  ,p\left(  z\right)  <U\right\}  }{p\left(  z\right)
-\underline{p}}.
\end{align*}
Also, $P\left(  \varepsilon_{0}\leq m^{-1}\left(  0,y\right)  ,\underline
{p}<U\right)  $ and $P\left(  \varepsilon_{0}\leq m^{-1}\left(  0,y\right)
,p\left(  z\right)  <U\right)  $\ are point-identified as $\underset{p\left(  z\right)  \rightarrow\underline{p}}{\lim
}P\left(  y,0|z\right) $ and $P\left(  y,0|z\right)  $, respectively, for
any $z\in\mathcal{Z}.$\ These bounds are tighter than bounds obtained without NSM.

On the other hand, NSM has no additional identifying power on the upper
bound on $P_{1}\left(  y|0,z\right)  $ and the lower bound on $P_{0}\left(
y|1,z\right)  ,$ which means that these bounds under NSM are identical to those
obtained without NSM$.$

\begin{lemma}
\label{L2}Under $M.1-M.4$ and NSM, $P_{0}\left(  y,1|z\right)  $ and
$P_{1}\left(  y,0|z\right)  $ are bounded as follows:%
\begin{align*}
P_{0}\left(  y,1|z\right)   &  \in\left[  L_{01}^{wst}\left(  y,z\right)
,U_{01}^{sm}\left(  y,z\right)  \right]  ,\\
P_{1}\left(  y,0|z\right)   &  \in\left[  L_{10}^{wst}\left(  y,z\right)
,U_{10}^{sm}\left(  y,z\right)  \right]  ,
\end{align*}
where
\begin{align*}
L_{10}^{sm}\left(  y,z\right)   &  =\left\{
\begin{tabular}
[c]{ll}%
$\left(  \frac{\underset{p\left(  z\right)  \rightarrow\underline{p}}{\lim
}P\left(  y,1|z\right)  -P\left(  y,1|z\right)  }{\overline{p}-p\left(
z\right)  }\right)  \left(  1-p\left(  z\right)  \right)  ,$ & for any
$z\in\mathcal{Z}\setminus p^{-1}\left(  \overline{p}\right)  $,\\
$0,$ & for $z\in p^{-1}\left(  \overline{p}\right)  ,$%
\end{tabular}
\ \ \right.  ,\\
U_{01}^{sm}\left(  y,z\right)   &  =\left\{
\begin{tabular}
[c]{ll}%
$\left(  \frac{P\left(  y,0|z\right)  -\underset{p\left(  z\right)
\rightarrow\underline{p}}{\lim}P\left(  y,0|z\right)  }{p\left(  z\right)
-\underline{p}}\right)  p\left(  z\right)  ,$ & for any $z\in\mathcal{Z}%
\setminus p^{-1}\left(  \underline{p}\right)  ,$\\
$p\left(  z\right)  ,$ & for $z\in p^{-1}\left(  \underline{p}\right)  ,$%
\end{tabular}
\ \ \right.  ,
\end{align*}
and these bounds are sharp.
\end{lemma}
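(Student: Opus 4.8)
The plan is to first translate the two counterfactuals into statements about the latent vector. By M.1 the event $\{Y_{d}\leq y\}$ coincides with $\{\varepsilon_{d}\leq m^{-1}(d,y)\}$; by the normalization $U\sim Unif(0,1)$ and the selection rule, given $Z=z$ the event $\{D=1\}$ coincides with $\{U\leq p(z)\}$; and by M.3 the conditioning on $Z$ inside the probabilities may be dropped. Hence $P_{1}(y,0|z)=(1-p(z))\,P\!\left(\varepsilon_{1}\leq m^{-1}(1,y)\mid U>p(z)\right)$ and $P_{0}(y,1|z)=p(z)\,P\!\left(\varepsilon_{0}\leq m^{-1}(0,y)\mid U\leq p(z)\right)$. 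I would then invoke NSM, which says that $u\mapsto P(\varepsilon_{d}\leq e\mid U=u)$ is nondecreasing: averaging this conditional c.d.f.\ over $(p(z),1]$ is therefore at least its average over the subinterval $(p(z),\overline{p}]$, and averaging it over $(0,p(z)]$ is at most its average over the subinterval $(\underline{p},p(z)]$. Each of the latter two averages is a ratio of point-identified objects — the numerators equalling $\lim_{p(z')\to\overline{p}}P(y,1|z')-P(y,1|z)$ and $\lim_{p(z')\to\underline{p}}P(y,0|z')-P(y,0|z)$ (well defined because M.4 makes $\overline{p},\underline{p}$ limit points of the range of $p$), the denominators $\overline{p}-p(z)$ and $p(z)-\underline{p}$ — so, after restoring the propensity weights, these inequalities deliver the sharpened lower bound $L_{10}^{sm}(y,z)$ on $P_{1}(y,0|z)$ and the sharpened upper bound $U_{01}^{sm}(y,z)$ on $P_{0}(y,1|z)$; for $z$ with $p(z)=\overline{p}$ (resp.\ $p(z)=\underline{p}$) the trivial bounds $P_{1}(y,0|z)\geq 0$ (resp.\ $P_{0}(y,1|z)\leq p(z)$) take over, giving the stated boundary values. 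The remaining two bounds, $P_{0}(y,1|z)\geq L_{01}^{wst}(y,z)$ and $P_{1}(y,0|z)\leq U_{10}^{wst}(y,z)$, are inherited directly from Lemma \ref{L1.5}, since adding NSM only shrinks the model.

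\textbf{Sharpness.} For sharpness I would show that every point of each interval is attained by some admissible structure $(m,F_{\varepsilon_{0},\varepsilon_{1},U})$, i.e.\ one consistent with M.1--M.4, NSM, and the distribution of $(Y,D,Z)$. The key simplification is that $P_{0}(y,1|z)$ depends on the structure only through the law of $(\varepsilon_{0},U)$, $P_{1}(y,0|z)$ only through the law of $(\varepsilon_{1},U)$, and both NSM and the observational constraints factor through these two marginal laws; it therefore suffices to construct, separately, an admissible law of $(\varepsilon_{1},U)$ realizing any prescribed value in $[L_{10}^{sm},U_{10}^{wst}]$ and an admissible law of $(\varepsilon_{0},U)$ realizing any prescribed value in $[L_{01}^{wst},U_{01}^{sm}]$, and then to couple $\varepsilon_{0}$ and $\varepsilon_{1}$ in any convenient way (say, conditionally independent given $U$). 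For the $(\varepsilon_{1},U)$ piece I would take $m(1,\cdot)$ to be the identity, let the conditional c.d.f.\ $u\mapsto P(\varepsilon_{1}\leq\cdot\mid U=u)$ reproduce the profile implied by the observed $P(\cdot\mid 1,z)$ and $p(z)$ on the $U$-range not left free, and on the free tail region choose a nondecreasing conditional c.d.f.\ that interpolates between the flat continuation of the value inherited from the left (which makes the NSM inequality used above tight, attaining $L_{10}^{sm}$) and the degenerate c.d.f.\ placing all mass below $m^{-1}(1,y)$ (which attains $U_{10}^{wst}$), tracing out the whole interval continuously; the $(\varepsilon_{0},U)$ piece is the mirror image. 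One then also checks that the worst-case extremal constructions behind $L_{01}^{wst}$ and $U_{10}^{wst}$ are themselves NSM-consistent — the extra mass there is deposited at the top of the relevant part of the $U$-range, which preserves monotonicity of $u\mapsto P(\varepsilon_{d}\leq e\mid U=u)$ — so those endpoints survive the addition of NSM.

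\textbf{Main obstacle.} The delicate part is precisely this sharpness construction: producing, for each $d$, a single joint law of $(\varepsilon_{d},U)$ that simultaneously (i) reproduces the entire family of observed conditional outcome distributions $P(\cdot\mid d,z)$ together with the propensity scores $p(z)$ for all $z\in\mathcal{Z}$ — not merely at the fixed argument $y$ — and (ii) has $u\mapsto P(\varepsilon_{d}\leq e\mid U=u)$ globally nondecreasing, in particular across the junctions at $\underline{p}$ and $\overline{p}$ between the region pinned down by observed propensity-score variation and the free tail region, and with the free region exactly large enough for the NSM inequalities of the first step to be attainable. This is where M.2 (atomlessness, so that $m^{-1}(d,\cdot)$ and the conditional c.d.f.s are well behaved and mass can be rearranged continuously) and M.4 (continuity of $p$, controlling which part of the unit interval the data pins down and ensuring $\underline{p},\overline{p}$ are approached; note also that for the identified set to be nonempty the data-pinned profile must already be nondecreasing there, a testable implication) do the real work. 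I expect the argument to come down to a monotone-rearrangement / quantile-coupling device that realizes the prescribed conditional-on-$U$ profile on the free subinterval while preserving every $z$-averaged marginal, after which the two choices of $d$ and their coupling impose no further restrictions.
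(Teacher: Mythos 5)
Your derivation of the bounds is exactly the paper's own argument, which appears in Subsection \ref{c2s3s2} rather than the appendix: write $P_{1}(y,0|z)=(1-p(z))P(\varepsilon_{1}\leq m^{-1}(1,y)\mid U>p(z))$ and $P_{0}(y,1|z)=p(z)P(\varepsilon_{0}\leq m^{-1}(0,y)\mid U\leq p(z))$, use monotonicity of $u\mapsto P(\varepsilon_{d}\leq e\mid U=u)$ to compare these with the averages over $(p(z),\overline{p}\,]$ and $(\underline{p},p(z)]$, identify those averages from $\lim_{p(z')\to\overline{p}}P(y,1|z')-P(y,1|z)$ and $\lim_{p(z')\to\underline{p}}P(y,0|z')-P(y,0|z)$, and inherit the two unimproved endpoints from Lemma \ref{L1.5}. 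You also (correctly) read through the misprints in the displayed statement: the limit in $L_{10}^{sm}$ should be taken toward $\overline{p}$, the numerator of $U_{01}^{sm}$ should be $\lim_{p(z')\to\underline{p}}P(y,0|z')-P(y,0|z)$, and the intervals should read $[L_{10}^{sm},U_{10}^{wst}]$ and $[L_{01}^{wst},U_{01}^{sm}]$, as the surrounding text and Section \ref{c2s4s2} confirm.

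On sharpness the paper offers no proof at all (its appendix proves only Lemma \ref{L1.5}), so there is nothing to compare against; but your construction of the improved endpoints has a genuine gap. When the instrument moves $p(\cdot)$ strictly between $p(z)$ and $\overline{p}$, the profile $h(u)=P(\varepsilon_{1}\leq m^{-1}(1,y)\mid U=u)$ is pinned by the data on that range (it is the derivative of the identified map $u\mapsto P(Y\leq y,D=1\mid p(Z)=u)$), and NSM forces the free tail on $(\overline{p},1]$ to lie weakly above the boundary value $h(\overline{p}^{-})$, not above the \emph{average} of $h$ over $(p(z),\overline{p}\,]$ that defines $L_{10}^{sm}(y,z)$. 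Your ``flat continuation'' therefore delivers $\int_{p(z)}^{\overline{p}}h(u)\,du+(1-\overline{p})h(\overline{p}^{-})$, which strictly exceeds $L_{10}^{sm}(y,z)$ whenever $h$ is nonconstant on $(p(z),\overline{p})$, and no admissible structure can do better; exact pointwise attainment of $L_{10}^{sm}(y,z)$ (and symmetrically of $U_{01}^{sm}(y,z)$) thus fails for such data, surviving only when no propensity values lie strictly between $p(z)$ and $\overline{p}$ (e.g.\ binary $Z$) or in the limit $p(z)\to\overline{p}$ --- which is what makes the intersection bounds on the marginals in Corollary \ref{T2} sharp, and is arguably the sense in which the lemma's sharpness claim must be read. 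A smaller repairable point: for the upper endpoint, a literally degenerate conditional law below $m^{-1}(1,y)$ can violate both M.2 and NSM at arguments below the mass point; use instead the boundary conditional distribution truncated to $(-\infty,m^{-1}(1,y)]$, held constant in $u$ on the tail.
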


Now, sharp bounds on marginal distributions of $Y_{0}$ and $Y_{1}$ are
obtained by plugging the results in Lemma \ref{L2} into the counterfactual probabilities.

Note that under NSM, sharp bounds on the joint distribution and sharp bounds on
the DTE are still obtained from Fr\'{e}chet-Hoeffding bounds and Makarov bounds.
To illustrate this, consider the case where $\rho_{0}=\rho_{1}=0$ in the
example (\ref{factor}).\footnote{Note that NSM restricts the sign of
$\rho_{d\text{ }}$ as nonnegative for $d\in\left\{  0,1\right\}  .$} This case
satisfies NSM and NSM does not impose any restriction on the
dependence between $\nu_{0}$ and $\nu_{1}$. Therefore, sharp bounds on the
joint distribution and the DTE are obtained by the same token as in Subsection \ref{c2s3s1}.

The specific forms of sharp bounds on marginal distributions of $Y_{0}$ and
$Y_{1},$ their joint distribution, and the DTE under $M.1-M.4$ and NSM are
provided in Corollary \ref{T2} in Appendix.

\subsection{Conditional Positive Quadrant Dependence}
\label{c2s3s3}
Unlike NSM, CPQD has no additional identifying power for the joint
distribution and the DTE. In this subsection, I impose weak positive
dependence between $\varepsilon_{0}$ and $\varepsilon_{1}$ conditional on $U$
by considering CPQD as follows: for any $\left(  e_{0},e_{1}\right)
\in\mathbb{R}^{2},$
\begin{equation}
P\left[  \varepsilon_{0}\leq e_{0}|u\right]  P\left[  \varepsilon_{1}\leq
e_{1}|u\right]  \leq P\left[  \varepsilon_{0}\leq e_{0},\varepsilon_{1}\leq
e_{1}|u\right]  . \label{cpqd1}%
\end{equation}

Recall the example (\ref{factor}): for $d\in\left\{  0,1\right\}  ,$%
\[
\varepsilon_{d}=\rho_{d}U+\nu_{d},
\]
where $\left(  \nu_{0},\nu_{1}\right)  \perp\!\!\!\perp U.$ CPQD requires that
$\nu_{0}$ and $\nu_{1}$ be positively quadrant dependent. As a restriction on
dependence between $\varepsilon_{0}$ and $\varepsilon_{1}$ conditional on $U,$
CPQD\ has some information on the joint distribution of $Y_{0}$ and $Y_{1}$,
but not marginal distribution of $Y_{d},$ which is identified by the
distribution of $\varepsilon_{d}$ conditional on $U$ for $d\in\left\{
0,1\right\}  .$ Specifically, the lower bound on the conditional joint
distribution of $\varepsilon_{0}$ and $\varepsilon_{1}$ given $U$ improves
under CPQD as shown in (\ref{cpqd1}). This is due to the \emph{nonnegative}
sign restriction on dependence between $\varepsilon_{0}$ and $\varepsilon_{1}$
given $U$ implied by CPQD. Without CPQD, the sharp lower bound and the upper bound on
the conditional joint distribution are achieved when the conditional
distributions of $\varepsilon_{0}$ given $U$ and $\varepsilon_{1}$ given $U$
are perfectly negatively dependent and perfectly positively dependent,
respectively. Under CPQD, however, the dependence is restricted to range from
independence to perfectly positive dependence without any negative dependence.
Therefore, the lower bound under CPQD is attained when their conditional
dependence is independent.

I\ show that the lower bound on the \emph{unconditional} joint distribution
can be improved from the improved lower bound on the \emph{conditional} joint
distribution. Chebyshev's integral inequality is useful for deriving the
improved lower bound on the joint distribution of $Y_{0}$ and $Y_{1}$\ under CPQD:

\begin{description}
\item[Chebyshev's Integral Inequality] \label{L3}If $f$ and
$g:[a,b]\longrightarrow\mathbb{R}$ are two comonotonic functions, then%
\[
\frac{1}{b-a}\int\limits_{a}^{b}f\left(  x\right)  g\left(  x\right)
dx\geq\left[  \frac{1}{b-a}\int\limits_{a}^{b}f\left(  x\right)  dx\right]
\left[  \frac{1}{b-a}\int\limits_{a}^{b}g\left(  x\right)  dx\right]  .
\]

\end{description}

To establish bounds on the joint distribution, recall (\ref{jointdecompose}).
For $e_{0}=m^{-1}\left(  0,y_{0}\right)  $ and $e_{1}=m^{-1}\left(
1,y_{1}\right)  $ for $\left(  y_{0},y_{1}\right)  \in\mathbb{R}%
\times\mathbb{R},$%
\begin{align*}
&  P\left(  Y_{0}\leq y_{0},Y_{1}\leq y_{1}|0,z\right) \\
&  =P\left(  \varepsilon_{0}\leq e_{0},\varepsilon_{1}\leq e_{1}|U>p\left(
z\right)  \right)  .
\end{align*}
Now I require the additional assumption:

\begin{description}
\item[M.5] The propensity score $p(z)$ is bounded away from $0$ and $1$.
\end{description}

Under $M.5$, Chebyshev's integral inequality yields the lower bound as follows:%

\begin{align}
&  P\left(  \varepsilon_{0}\leq e_{0},\varepsilon_{1}\leq e_{1}|U>p\left(
z\right)  \right)  \label{cii}\\
&  =\frac{1}{1-p\left(  z\right)  }\int\limits_{p\left(  z\right)  }%
^{1}P\left[  \varepsilon_{0}\leq e_{0},\varepsilon_{1}\leq e_{1}|u\right]
du\nonumber\\
&  \geq\frac{1}{1-p\left(  z\right)  }\int\limits_{p\left(  z\right)  }%
^{1}P\left[  \varepsilon_{0}\leq e_{0}|u\right]  P\left[  \varepsilon_{1}\leq
e_{1}|u\right]  du\nonumber\\
&  \geq\left(  \frac{1}{1-p\left(  z\right)  }\right)  ^{2}\int
\limits_{p\left(  z\right)  }^{1}P\left[  \varepsilon_{0}\leq e_{0}|u\right]
du\int\limits_{p\left(  z\right)  }^{1}P\left[  \varepsilon_{1}\leq
e_{1}|u\right]  du.\nonumber
\end{align}
The inequality in the third line of (\ref{cii}) follows from CPQD and the
inequality in the fourth line of (\ref{cii})\ is due to Chebyshev's
integral inequality.\ Consequently, I obtain the following:%
\begin{align}
P\left(  Y\leq y_{0},Y_{1}\leq y_{1}|0,z\right)   &  \geq P\left(
y_{0}|0,z\right)  P_{1}\left(  y_{1}|0,z\right)  \label{*1}\\
&  \geq\frac{P\left(  y_{0}|0,z\right)  L_{10}^{wst}\left(  y_{1},z\right)
}{1-p\left(  z\right)  }.\nonumber
\end{align}
Similarly, the lower bound on $P\left(  Y_{0}\leq y_{0},Y\leq y_{1}%
|1,z\right)  $ is obtained as follows:
\begin{align}
P\left(  Y_{0}\leq y_{0},Y\leq y_{1}|1,z\right)   &  \geq P_{0}\left(
y_{0}|1,z\right)  P\left(  y_{1}|1,z\right)  \label{*2}\\
&  \geq\frac{L_{01}^{wst}\left(  y_{0},z\right)  P\left(  y_{1}|1,z\right)
}{p\left(  z\right)  }.\nonumber
\end{align}

Interestingly, the DTE\ is still bounded by Makarov bounds under CPQD although
the lower bound on the joint distribution improves.\ The rigorous proof is
provided in Appendix. Here I\ discuss the reason intuitively using a
graphical illustration. As shown in Figure \ref{Makarovbounds}, the DTE is a
probability corresponding to the region below the straight line $y_{1}%
=y_{0}+\delta$ and the Makarov lower bound is obtained from the rectangle
$\left\{  Y_{0}\geq y-\delta,Y_{1}\leq y\right\}  $ below the straight line
$Y_{1}=Y_{0}+\delta$ for $y\in\mathbb{R}$\ that maximizes the
Fr\'{e}chet-Hoeffding lower bound. Since the Fr\'{e}chet-Hoeffding lower bound
on $P\left(  Y_{0}\geq y-\delta,Y_{1}\leq y\right)  $ for each $y\in
\mathbb{R}$ is achieved when the joint distribution of $Y_{0}$ and $Y_{1}$
attains its upper bound, the improved lower bound on $F\left(  y_{0}%
,y_{1}\right)  $ does not affect the lower bound on the DTE. Similarly, the
Makarov upper bound is obtained from the upper bound on $1-P\left(  Y_{0}\leq
y^{\prime}-\delta,Y_{1}\geq y^{\prime}\right)  $ for $y^{\prime}\in\mathbb{R}%
$, which is in turn obtained from the Fr\'{e}chet-Hoeffding lower bound on
$P\left(  Y_{0}\leq y^{\prime}-\delta,Y_{1}\geq y^{\prime}\right)  .$
Therefore by the same token, the improved lower bound on $F\left(  y_{0}%
,y_{1}\right)  $ does not affect the upper bound on the DTE either.%
\begin{figure}
[ptb]
\begin{center}
\includegraphics[
natheight=4.728800in,
natwidth=7.604300in,
height=2.3929in,
width=3.8311in
]%
{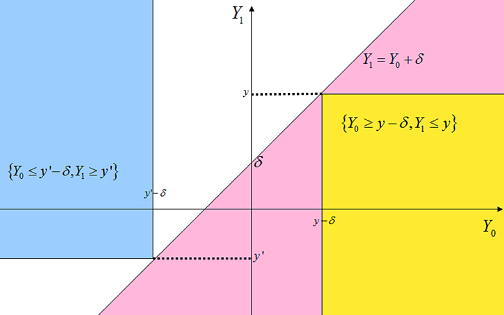}%
\caption{Makarov bounds}%
\label{Makarovbounds}%
\end{center}
\end{figure}

The specific forms of sharp bounds on marginal distributions of $Y_{0}$ and
$Y_{1},$ their joint distribution, and the DTE under $M.1-M.5$ and CPQD are
provided in Theorem \ref{T3} in Appendix. \ \ \ \ \ \ \ \ \ \ \ \ \ \ \ \ \ \ \ \ \ \ \ \ \ \ \ \ \ \ \ \ \ \ \ \ \ \ \ \ \ \ \ \ \ \ \ \ \ \ \ \ \ \ \ \ \ \ \ \ \ \ \ \ \ \ \ \ \ \ \ \ \ \ \ \ \ \ \ \ \ \ \ \ \ \ \ \ \ \ \ \ \ \ \ \ \ \ \ \ \ \ \ \ \ \ \ \ \ \ \ \ \ \ \ \ \ \ \ \ \ \ \ \ \ \ \ \ \ \ \ \ \ \ \ \ \ \ \ \ \ \ \ \ \ \ \ \ \ \ \ \ \ \ \ \ \ \ \ \ \ \ \ \ \ \ \ \ \ \ \ \ \ \ \ \ \ \ \ \ \ \ \ \ \ \ \ \ \ \ \ \ \ \ \ \ \ \ \ \ \ \ \ \ \ \ \ \ \ \ \ \ \ \ \ \ \ \ \ \ \ \ \ \ \ \ \ \ \ \ \ \ \ \ \ \ \ \ \ \ \ \ \ \ \ \ \ \ \ \ \ \ \ \ \ \ \ \ \ \ \ \ \ \ \ \ \ \ \ \ \ \ \ \ \ \ \ \ \ \ \ \ \ \ \ \ \ \ \ \ \ \ \ \ \ \ \ \ \ \ \ \ \ \ \ \ \ \ \ \ \ \ \ \ \ \ \ \ \ \ \ \ \ \ \ \ \ \ \ \ \ \ \ \ \ \ \ \ \ \ \ \ \ \ \ \ \ \ \ \ \ \ \ \ \ \ \ \ \ \ \ \ \ \ \ \ \ 

\subsection{Monotone Treatment Response}
\label{c2s3s4}
In this subsection, I\ maintain $M.1-M.4$ on the model (\ref{model}) and
additionally impose MTR, which is written as $P\left(  Y_{1}\geq Y_{0}\right)
=1$. As illustrated in Figure \ref{MTRf}, MTR is a restriction imposed on the
support of $(Y_{0},Y_{1})$, while NSM\ and CPQD directly restrict the sign of
dependence between unobservables. I\ show that MTR has substantial
identifying power for the marginal distributions, the joint distribution, and
the DTE.%

\begin{figure}
[ptb]
\begin{center}
\includegraphics[
natheight=2.864300in,
natwidth=2.791600in,
height=1.4607in,
width=1.4243in
]%
{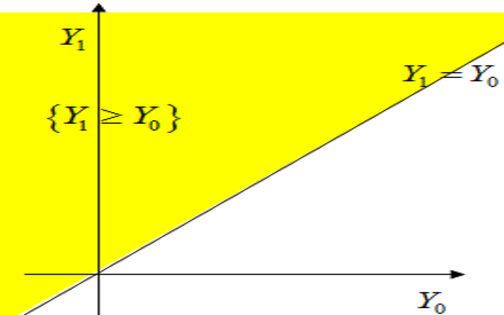}%
\caption{Support under MTR}%
\label{MTRf}%
\end{center}
\end{figure}

Start with bounds on marginal distributions. Remember that NSM\ as well as
$M.1-M.4$ has no additional identifying power for the upper bound on $P_{1}\left(
y,0|z\right)  $ and the lower bound on $P_{0}\left(  y,1|z\right)  $.
Interestingly, MTR improves both the upper bound on $P_{1}\left(
y,0|z\right)  $ and the lower bound on $P_{0}\left(  y,1|z\right)  .$ On the
other hand, unlike NSM, MTR does not have any identifying power on the lower
bound on $P_{1}\left(  y,0|z\right)  $ and the upper bound on $P_{0}\left(
y,1|z\right)  .$ Recall that in (\ref{1.1}),
\begin{align*}
&  P_{1}\left(  y,0|z\right)  \\
&  =P\left(  Y_{1}\leq y,p\left(  z\right)  <U\leq\overline{p}\right)
+P\left(  Y_{1}\leq y|\overline{p}<U\right)  \left(  1-\overline{p}\right)  .
\end{align*}
Since MTR implies stochastic dominance of $Y_{1}$ over $Y_{0}$, under MTR,%
\[
P\left(  Y_{1}\leq y|\overline{p}<U\right)  \leq P\left(  Y_{0}\leq
y|\overline{p}<U\right)  =\underset{p\left(  z\right)  \rightarrow\overline
{p}}{\lim}P\left(  y|0,z\right)  .
\]
Similarly,%
\[
P\left(  Y_{0}\leq y|U\leq\underline{p}\right)  \geq P\left(  Y_{1}\leq
y|U\leq\underline{p}\right)  =\underset{p\left(  z\right)  \rightarrow
\underline{p}}{\lim}P\left(  y|1,z\right)  .
\]
This shows that\ MTR tightens the upper bound on $P_{1}\left(  y,0|z\right)  $
and the lower bound on $P_{0}\left(  y,1|z\right)  $.

\begin{lemma}
\label{L4}Under $M.1-M.4$ and MTR, $P_{1}\left(  y,0|z\right)  $ and
$P_{0}\left(  y,1|z\right)  \ $are bounded as follows:%
\begin{align*}
P_{1}\left(  y,0|z\right)   &  \in\left[  L_{10}^{wst}\left(  y,z\right)
,U_{10}^{mtr}\left(  y,z\right)  \right]  ,\\
P_{0}\left(  y,1|z\right)   &  \in\left[  L_{01}^{mtr}\left(  y,z\right)
,U_{01}^{wst}\left(  y,z\right)  \right]  ,
\end{align*}
where%
\begin{align*}
L_{01}^{mtr}\left(  y,z\right)   &  =\underset{p\left(  z\right)
\rightarrow\underline{p}}{\lim}P\left(  y|0,z\right)  \left(  1-\underline
{p}\right)  -P\left(  y|0,z\right)  \left(  1-p\left(  z\right)  \right)
+\underset{p\left(  z\right)  \rightarrow\underline{p}}{\lim}P\left(
y|1,z\right)  \underline{p},\\
U_{10}^{mtr}\left(  y,z\right)   &  =\underset{p\left(  z\right)
\rightarrow\overline{p}}{\lim}P\left(  y|1,z\right)  \overline{p}-P\left(
y|1,z\right)  p\left(  z\right)  +\underset{p\left(  z\right)  \rightarrow
\overline{p}}{\lim}P\left(  y|0,z\right)  \left(  1-\overline{p}\right)  ,
\end{align*}
and these bounds are sharp.
\end{lemma}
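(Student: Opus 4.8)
The plan is to obtain the two non-trivial bounds, $U_{10}^{mtr}$ and $L_{01}^{mtr}$, from a conditional stochastic-dominance argument, to inherit the remaining two bounds $L_{10}^{wst}$ and $U_{01}^{wst}$ from Lemma \ref{L1.5}, and then to prove sharpness of all four by exhibiting explicit data-generating processes that attain them.

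For the bounds, start from the decomposition displayed just before the statement, $P_1\left( y,0|z\right) =P\left( Y_1\leq y,p(z)<U\leq\overline{p}\right) +P\left( Y_1\leq y|\overline{p}<U\right) \left( 1-\overline{p}\right)$. The first term is point-identified and equals $L_{10}^{wst}(y,z)$, and MTR ($Y_1\geq Y_0$ a.s.) gives $0\leq P\left( Y_1\leq y|\overline{p}<U\right) \leq P\left( Y_0\leq y|\overline{p}<U\right) =\lim_{p(z)\rightarrow\overline{p}}P(y|0,z)$, since $Y_1\geq Y_0$ continues to hold conditionally on $\left\{ U>\overline{p}\right\}$. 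Multiplying through by $1-\overline{p}$ and adding $L_{10}^{wst}(y,z)$ yields $L_{10}^{wst}(y,z)\leq P_1(y,0|z)\leq U_{10}^{mtr}(y,z)$. The symmetric decomposition $P_0(y,1|z)=P\left( Y_0\leq y,\underline{p}<U\leq p(z)\right) +P\left( Y_0\leq y|U\leq\underline{p}\right) \underline{p}$, with first term point-identified and equal to $L_{01}^{wst}(y,z)$, together with $\lim_{p(z)\rightarrow\underline{p}}P(y|1,z)=P\left( Y_1\leq y|U\leq\underline{p}\right) \leq P\left( Y_0\leq y|U\leq\underline{p}\right) \leq1$ from MTR, gives $L_{01}^{mtr}(y,z)\leq P_0(y,1|z)\leq U_{01}^{wst}(y,z)$.

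For sharpness, fix $\left( y^{\ast},z^{\ast}\right)$ and, taking $m(d,e)=e$ so that $\varepsilon_d=Y_d$, construct a law of $\left( Y_0,Y_1,U\right)$ independent of $Z$ that reproduces every observed $P(\cdot|d,z)$ and the propensity score $p(\cdot)$, satisfies $M.1$--$M.4$ and MTR, and attains a prescribed value of the relevant interval. On the identified block $\left\{ \underline{p}<U\leq\overline{p}\right\}$ the data determine the conditional marginals of $Y_0$ and $Y_1$ given $U=u$; couple them comonotonically, which makes $Y_1\geq Y_0$ hold there --- this is feasible precisely because conditional stochastic dominance of $Y_1$ over $Y_0$ is a testable implication holding whenever the data are compatible with MTR. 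On $\left\{ U>\overline{p}\right\}$ the data fix only the block-marginal of $Y_0$, and on $\left\{ U\leq\underline{p}\right\}$ only that of $Y_1$, leaving the rest free. To hit the upper bound $U_{10}^{mtr}(y^{\ast},z^{\ast})$ set $Y_1=Y_0$ on $\left\{ U>\overline{p}\right\}$ (trivially respecting MTR), so that $P\left( Y_1\leq y^{\ast}|\overline{p}<U\right) =\lim_{p(z)\rightarrow\overline{p}}P(y^{\ast}|0,z)$; to hit the lower bound $L_{10}^{wst}(y^{\ast},z^{\ast})$ set $Y_1=Y_0$ where $Y_0>y^{\ast}$ and $Y_1=y^{\ast}+W$, with $W$ a continuous positive variable, where $Y_0\leq y^{\ast}$, which keeps $Y_1\geq Y_0$ while forcing $P\left( Y_1\leq y^{\ast}|\overline{p}<U\right) =0$. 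The mirror constructions on $\left\{ U\leq\underline{p}\right\}$ --- namely $Y_0=Y_1$ to attain $L_{01}^{mtr}$, and $Y_0=Y_1$ where $Y_1<y^{\ast}$ with $Y_0=y^{\ast}-W^{\prime}$ otherwise to attain $U_{01}^{wst}$ --- handle $P_0(y^{\ast},1|z^{\ast})$, and intermediate values of either interval are reached by mixing the two extreme constructions. One checks routinely in each case that the resulting unconditional laws of $\varepsilon_0$ and $\varepsilon_1$ are absolutely continuous ($M.2$) and that the induced selection model returns the observed distributions.

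The main obstacle is the sharpness step. The ``free'' conditional pieces must be chosen so as to simultaneously (i) attain the extreme value of the targeted counterfactual probability, (ii) preserve $Y_1\geq Y_0$ globally --- which on the identified block is what forces the appeal to the conditional stochastic-dominance testable implication --- and (iii) keep $\varepsilon_0$ and $\varepsilon_1$ absolutely continuous while remaining consistent with the observed conditional distributions. Equivalently, one must confirm that MTR is genuinely slack on $L_{10}^{wst}$ and $U_{01}^{wst}$, i.e.\ cannot improve them, while being binding on the other two; assembling a single explicit DGP that discharges all three requirements, and verifying that the identified set is the full interval rather than just its endpoints, is where the real work lies.
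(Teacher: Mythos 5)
Your derivation of the bounds is exactly the paper's own argument: you use the same decomposition $P_{1}\left(y,0|z\right)=P\left(Y_{1}\leq y,\,p\left(z\right)<U\leq\overline{p}\right)+P\left(Y_{1}\leq y\mid\overline{p}<U\right)\left(1-\overline{p}\right)$ and its mirror for $P_{0}\left(y,1|z\right)$, invoke the MTR-implied stochastic dominance $P\left(Y_{1}\leq y\mid\overline{p}<U\right)\leq P\left(Y_{0}\leq y\mid\overline{p}<U\right)$ and $P\left(Y_{0}\leq y\mid U\leq\underline{p}\right)\geq P\left(Y_{1}\leq y\mid U\leq\underline{p}\right)$ on the unidentified tails, and inherit the remaining two endpoints from Lemma \ref{L1.5}, which is precisely how the paper establishes the lemma. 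On sharpness the paper gives no argument beyond the bare assertion, so your comonotone/equality-coupling constructions actually go further than the paper's own treatment; the loose ends you yourself flag (only block-level rather than pointwise identification of the conditional-on-$U$ marginals when $p\left(Z\right)$ has gaps, and verifying absolute continuity) are genuine remaining details, but they are not a departure from the paper's route, which simply omits this step.
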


From Lemma \ref{L4}, sharp bounds on marginal distributions of $Y_{0}$ and
$Y_{1}$ are improved based on $L_{01}^{mtr}\left(  y,z\right)  $ and
$U_{10}^{mtr}\left(  y|z\right)  $ under $M.1-M.4$, and MTR as follows:%
\begin{align*}
F_{0}^{L}\left(  y\right)   &  =\underset{z\in\mathcal{Z}}{\sup}\left[  P\left(
y|0,z\right)  \left(  1-p\left(  z\right)  \right)  +L_{01}^{mtr}\left(
y,z\right)  \right]  ,\\
F_{0}^{U}\left(  y\right)   &  =\underset{z\in\mathcal{Z}}{\inf}\left[  P\left(
y|0,z\right)  \left(  1-p\left(  z\right)  \right)  +U_{01}^{wst}\left(
y,z\right)  \right]  ,\\
F_{1}^{L}\left(  y\right)   &  =\underset{z\in\mathcal{Z}}{\sup}\left[  P\left(
y|1,z\right)  p\left(  z\right)  +L_{10}^{wst}\left(  y,z\right)  \right]  ,\\
F_{1}^{U}\left(  y\right)   &  =\underset{z\in\mathcal{Z}}{\inf}\left[  P\left(
y|1,z\right)  p\left(  z\right)  +U_{10}^{mtr}\left(  y,z\right)  \right]  .
\end{align*}%
\begin{figure}
[ptb]
\begin{center}
\includegraphics[
natheight=3.072700in,
natwidth=2.802000in,
height=1.5645in,
width=1.4287in
]%
{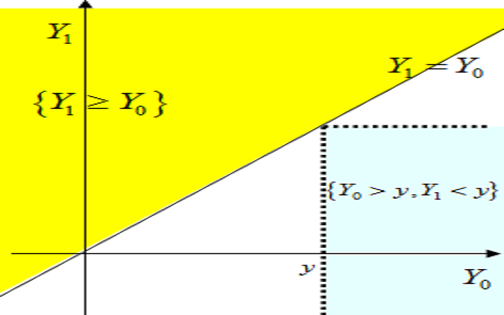}%
\caption{$P\left(  Y_{0}>Y_{1}\right)  =P\left[  \underset{y\in\mathbb{R}%
}{\cup}\left\{  Y_{0}>y,Y_{1}<y\right\}  \right]  $}%
\label{MTRf2}%
\end{center}
\end{figure}

Now, I\ show that MTR also has identifying power for the joint distribution.
I\ will use Lemma \ref{L4.6} to bound the joint distribution under MTR.
Henceforth, $x^{+}$ denotes $max\left(  x,0\right)  .$

\begin{lemma}
\label{L4.6}(\citet{N2006}) Suppose that marginal distributions $F_{0}$ and
$F_{1}$ are known and that $F\left(  a_{0},a_{1}\right)  =\theta$ where
$\left(  a_{0},a_{1}\right) \in \mathbb{R}^{2}$ and $\theta$ satisfies
$\max\left(  F_{0}\left(  a_{0}\right)  +F_{1}\left(  a_{1}\right)
-1,0\right)  \leq\theta\leq\min\left(  F_{0}\left(  a_{0}\right)
,F_{1}\left(  a_{1}\right)  \right)  .$ Then, sharp bounds on the joint
distribution $F$ are given as follows:%
\[
F^{L}\left(  y_{0},y_{1}\right)  \leq F\left(  y_{0},y_{1}\right)  \leq
F^{U}\left(  y_{0},y_{1}\right)  ,
\]
where%
\begin{align*}
F^{L}\left(  y_{0},y_{1}\right)   &  =\max\left\{  0,F_{0}\left(
a_{0}\right)  +F_{1}\left(  a_{1}\right)  -1,\theta-\left(  F_{0}\left(
a_{0}\right)  -F_{0}\left(  y_{0}\right)  \right)  ^{+}-\left(  F_{1}\left(
a_{1}\right)  -F_{1}\left(  y_{1}\right)  \right)  ^{+}\right\}  ,\\
F^{L}\left(  y_{0},y_{1}\right)   &  =\min\left\{  F_{0}\left(  y_{0}\right)
,F_{1}\left(  y_{1}\right)  ,\theta+\left(  F_{0}\left(  y_{0}\right)
-F_{0}\left(  a_{0}\right)  \right)  ^{+}+\left(  F_{1}\left(  y_{1}\right)
-F_{1}\left(  a_{1}\right)  \right)  ^{+}\right\}  .
\end{align*}

\end{lemma}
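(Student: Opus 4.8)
The plan is to establish the claimed bounds by verifying two things: first, that every copula $C$ consistent with the marginals $F_0$, $F_1$ and the point constraint $C(F_0(a_0), F_1(a_1)) = \theta$ satisfies $F^L \le F \le F^U$ pointwise; and second, that each bound is attained by some such copula, so that the bounds are sharp. The whole argument is cleaner if I first pass to the copula level. Writing $u_0 = F_0(y_0)$, $u_1 = F_1(y_1)$, $c_0 = F_0(a_0)$, $c_1 = F_1(a_1)$, the problem reduces to: among all copulas $C$ on $[0,1]^2$ with $C(c_0,c_1) = \theta$, find the pointwise sup and inf of $C(u_0,u_1)$. Since $F(y_0,y_1) = C(F_0(y_0),F_1(y_1))$ and the marginals are fixed, the stated bounds follow by substituting back. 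I would invoke the standard fact (Sklar's theorem, together with the representation in \citet{N2006}) that this copula-level reformulation is exact.

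The upper bound. For the $C(u_0,u_1) \le F^U$ direction there are three inequalities to check, and each is an instance of monotonicity of a copula together with the Lipschitz/rectangle-inequality property. The bounds $C(u_0,u_1) \le u_0$ and $C(u_0,u_1) \le u_1$ are just the Fréchet--Hoeffding upper bound. For the third, $C(u_0,u_1) \le \theta + (u_0 - c_0)^+ + (u_1 - c_1)^+$: I would split into cases according to the signs of $u_0 - c_0$ and $u_1 - c_1$. When both are nonnegative, apply the rectangle inequality (2-increasingness) to the box with corners $(c_0,c_1)$ and $(u_0,u_1)$, giving $C(u_0,u_1) - C(c_0,u_1) - C(u_0,c_1) + C(c_0,c_1) \ge 0$, then bound $C(c_0,u_1) \le c_0 + (u_1-c_1)^+ $-type terms using monotonicity and the $1$-Lipschitz property of copulas in each argument; the remaining cases ($u_0 < c_0$ or $u_1 < c_1$) follow by monotonicity alone, since then $C(u_0,u_1) \le C(c_0,c_1) = \theta$ or an even simpler comparison applies. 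The lower bound $C(u_0,u_1) \ge F^L$ is the mirror image: the Fréchet--Hoeffding lower bound $C \ge (u_0+u_1-1)^+$ handles two of the terms, and $C(u_0,u_1) \ge \theta - (c_0-u_0)^+ - (c_1-u_1)^+$ comes from the same rectangle-inequality argument applied to the box with corners $(u_0,u_1)$ and $(c_0,c_1)$, again splitting into cases on the signs.

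Sharpness. To show the bounds cannot be improved I would exhibit, for each point $(u_0,u_1)$, an explicit copula in the feasible set attaining the bound there. This is exactly the content cited from \citet{N2006}: the extremal copulas are obtained by ``gluing'' Fréchet--Hoeffding extremal pieces on the sub-rectangles determined by the line $\{C = \theta\}$ through $(c_0,c_1)$ — roughly, take the comonotone (or countermonotone) coupling on $[0,c_0]\times[0,c_1]$ calibrated so that mass $\theta$ lands in that box, and complete it by the appropriate extremal coupling on the complementary strips so the marginals remain uniform. I would state this construction and check it is 2-increasing, has uniform marginals, and passes through $(c_0,c_1,\theta)$; the pointwise value then matches $F^U$ (resp. $F^L$) at the chosen point. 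I expect the construction and verification of these extremal copulas to be the main obstacle — the pointwise inequalities are routine case analysis, but confirming that the glued copula is genuinely a copula (2-increasing everywhere, including across the gluing seams) and that it realizes the bound at an arbitrary target point requires care. Since this is precisely the result of \citet{N2006}, I would either cite it directly or reproduce the construction in the Appendix. $\blacksquare$
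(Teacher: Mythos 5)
Your proposal is correct and matches the paper's treatment: the paper does not prove this lemma but takes it directly from \citet{N2006}, invoking its copula-level form (Lemma B.2 in the appendix, on copulas with a prescribed value $C(a,b)=\theta$) and translating to distribution functions exactly as you do via the reduction $F(y_0,y_1)=C(F_0(y_0),F_1(y_1))$. Your rectangle-inequality/Lipschitz case analysis for the pointwise inequalities and the glued Fr\'{e}chet--Hoeffding construction for sharpness are precisely the content of that cited result, so filling them in (or citing Nelsen, as you suggest) is all that is required.
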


Suppose that marginal distributions $F_{0}$ and $F_{1}$ are fixed. Lemma \ref{L4.6}
shows that sharp bounds on the joint distribution improve when the values of
the joint distribution are known at some fixed points. Note that $P\left(
Y_{1}\geq Y_{0}\right)  =1$ if and only if $F\left(  y,y\right)  =F_{1}\left(
y\right)  $ for all $y\in\mathbb{R}.$ As illustrated in Figure \ref{MTRf2},
\[
P\left(  Y_{0}>Y_{1}\right)  =P\left[  \underset{y\in\mathbb{R}}{\cup}\left\{
Y_{0}>y,Y_{1}<y\right\}  \right]  .
\]
Therefore,

\begin{center}
$%
\begin{array}
[c]{cl}
& P\left(  Y_{1}\geq Y_{0}\right)  =1\\
\Longleftrightarrow & P\left(  Y_{0}>Y_{1}\right)  =0\\
\Longleftrightarrow & P\left(  Y_{0}>y,Y_{1}<y\right)  =0\text{ for all }%
y\in\mathbb{R}\\
\Longleftrightarrow & F\left(  y,y\right)  =F_{1}\left(  y\right)  ,\text{ for
all }y\in\mathbb{R}.
\end{array}
$
\end{center}

Since for each $y\in\mathbb{R}$ the value of $F\left(  y,y\right)
$\ is\ known from the fixed marginal distribution $F_{1}$ under MTR, sharp bounds on the joint distribution can be derived by
taking the intersection of the bounds under the restriction $F\left(
y,y\right)  =F_{1}\left(  y\right)  $ over all $y$ $\in\mathbb{R}$. Technical
details are presented in Appendix.%

\begin{figure}
[ptb]
\begin{center}
\includegraphics[
natheight=9.479200in,
natwidth=15.167100in,
height=2.1153in,
width=2.7985in
]%
{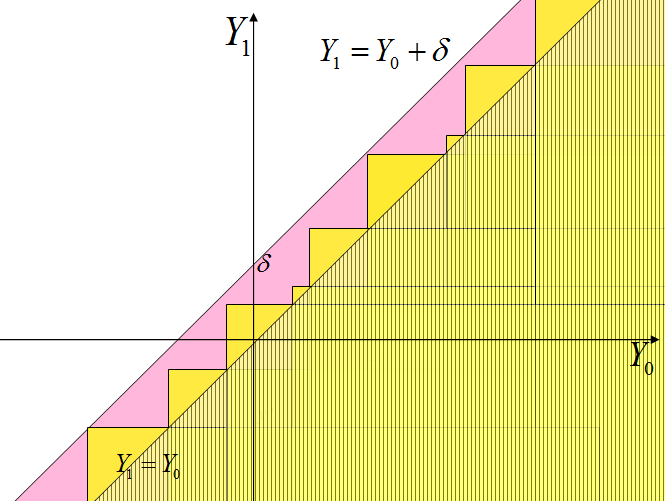}%
\caption{Improved lower bound on the DTE under MTR}%
\label{improvedlowerMTR}%
\end{center}
\end{figure}

In Chapter 1, I obtained sharp bounds on the DTE when marginal distributions
are fixed and MTR is imposed. Compared to Figure \ref{Makarovbounds}, Figure
\ref{improvedlowerMTR} shows that under MTR the lower bound on the DTE
improves by allowing more mass to be added between $Y_{1}=Y_{0}+\delta$ and
$Y_{1}=Y_{0}$.  Lemma \ref{L5} presents sharp bounds on the DTE under MTR and
fixed marginals $F_{0}$ an $F_{1}$ as follows:

\begin{lemma}
\label{L5} (\citet{K2014}) Under MTR, sharp bounds on the DTE are given as
follows: for fixed marginals $F_{0}$ an $F_{1}$ and any $\delta\in\mathbb{R},$%
\[
F_{\Delta}^{L}\left(  \delta\right)  \leq F_{\Delta}\left(  \delta\right)
\leq F_{\Delta}^{U}\left(  \delta\right)  ,
\]
where%
\begin{align*}
F_{\Delta}^{U}\left(  \delta\right)   &  =\left\{
\begin{array}
[c]{cc}%
1+\underset{y\in\mathbb{R}}{\inf}\left\{  \min\left(  F_{1}\left(  y\right)
-F_{0}\left(  y-\delta\right)  \right)  ,0\right\}  , & \text{for }\delta
\geq0,\\
0, & \text{for }\delta<0.
\end{array}
\right.  ,\\
F_{\Delta}^{L}\left(  \delta\right)   &  =\left\{
\begin{array}
[c]{cc}%
\underset{\left\{  a_{k}\right\}  _{k=-\infty}^{\infty}\in\mathcal{A}_{\delta
}}{\sup}\sum\limits_{k=-\infty}^{\infty}\max\left\{  F_{1}\left(
a_{k+1}\right)  -F_{0}\left(  a_{k}\right)  ,0\right\}  , & \text{for }%
\delta\geq0,\\
0, & \text{for }\delta<0,
\end{array}
\right. \\
\text{where }\mathcal{A}_{\delta}  &  =\left\{  \left\{  a_{k}\right\}
_{k=-\infty}^{\infty};0\leq a_{k+1}-a_{k}\leq\delta\text{ for every integer
}k\right\}  .
\end{align*}

\end{lemma}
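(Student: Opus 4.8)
The plan is to handle the two regimes $\delta<0$ and $\delta\ge 0$ separately, and within the second regime to obtain each of $F_\Delta^U$ and $F_\Delta^L$ by a validity half (every joint law with marginals $F_0,F_1$ and $Y_1\ge Y_0$ a.s. respects the bound) and a sharpness half (such a joint law attains it). Throughout I assume $F_0,F_1$ are such that an MTR coupling exists, i.e.\ $F_1\le F_0$ pointwise. The case $\delta<0$ is immediate: MTR forces $\Delta=Y_1-Y_0\ge 0$ a.s., so $F_\Delta(\delta)=0$ and both bounds equal $0$.

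For the \emph{upper} bound with $\delta\ge 0$, I would first note that MTR plays no role in validity. For any $y$, $\{Y_1>Y_0+\delta\}\supseteq\{Y_1>y\}\cap\{Y_0\le y-\delta\}$ (for $\delta\ge 0$ this rectangle already lies inside $\{Y_1\ge Y_0\}$), hence by the Fr\'echet--Hoeffding upper bound on $P(Y_0\le y-\delta,\,Y_1\le y)$ one gets $P(Y_1>Y_0+\delta)\ge F_0(y-\delta)-\min\{F_0(y-\delta),F_1(y)\}=\max\{F_0(y-\delta)-F_1(y),0\}$; taking the supremum over $y$ and subtracting from $1$ yields $F_\Delta(\delta)\le 1+\inf_y\min\{F_1(y)-F_0(y-\delta),0\}$, and since MTR only shrinks the feasible set this remains an upper bound under MTR. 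For sharpness I would adapt the classical Makarov extremal coupling: fix $y^\ast$ (nearly) attaining the infimum, put mass $\max\{F_0(y^\ast-\delta)-F_1(y^\ast),0\}$ on the rectangle $\{Y_0<y^\ast-\delta,\ Y_1>y^\ast\}$ (inside the MTR region because $y^\ast>y^\ast-\delta$), and couple the remaining $Y_0$- and $Y_1$-mass comonotonically, which respects $Y_1\ge Y_0$ precisely because $F_1\le F_0$; one then checks the marginals are reproduced and that $Y_1\le Y_0+\delta$ off the rectangle, so $P(\Delta\le\delta)=F_\Delta^U(\delta)$, with a routine limiting step when the infimum is only approached. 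Morally, both the bound and its extremal coupling are driven by the comonotone (Fr\'echet upper) coupling, which is the dependence MTR favors, exactly as in the discussion around Figure \ref{Makarovbounds}.

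For the \emph{lower} bound with $\delta\ge 0$, MTR does real work. Fix any admissible joint law and any $\{a_k\}_{k\in\mathbb Z}\in\mathcal A_\delta$, and set $B_k=\{Y_0>a_k,\ Y_1\le a_{k+1}\}$. On $B_k$ we have $\Delta=Y_1-Y_0\le a_{k+1}-a_k\le\delta$, so $B_k\subseteq\{\Delta\le\delta\}$; and the $B_k$ are pairwise disjoint, since for $j<k$ the event $B_j$ forces $Y_1\le a_{j+1}\le a_k$ while $B_k$ forces $Y_0>a_k\ge Y_1$, i.e.\ $Y_0>Y_1$, which MTR rules out. Hence $P(\Delta\le\delta)\ge\sum_k P(B_k)$, and since $P(B_k)=P(Y_1\le a_{k+1})-P(Y_1\le a_{k+1},\,Y_0\le a_k)\ge F_1(a_{k+1})-F_0(a_k)$ and also $P(B_k)\ge 0$, we obtain $P(\Delta\le\delta)\ge\sum_k\max\{F_1(a_{k+1})-F_0(a_k),0\}$; taking the supremum over $\mathcal A_\delta$ gives $F_\Delta(\delta)\ge F_\Delta^L(\delta)$. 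For sharpness I would, for a near-optimal sequence, build a ``layered'' coupling that satisfies $Y_1\ge Y_0$, saturates each $B_k$ so that $P(B_k)=\max\{F_1(a_{k+1})-F_0(a_k),0\}$, and pushes all remaining mass into $\{Y_1>Y_0+\delta\}$, so that $P(\Delta\le\delta)=\sum_k\max\{F_1(a_{k+1})-F_0(a_k),0\}$; refining along a maximizing net then yields a joint law attaining $F_\Delta^L(\delta)$. Equivalently one can read $1-F_\Delta^L(\delta)=\sup P(Y_1>Y_0+\delta)$ over MTR couplings as a Monge--Kantorovich linear program and identify the displayed sum over $\{a_k\}$ as its dual optimum.

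\textbf{Main obstacle.} The validity halves are short; the weight of the proof is in the two sharpness constructions, and above all in the lower bound: one must exhibit a \emph{single} joint law with the prescribed (continuous) marginals and the hard constraint $Y_1\ge Y_0$ that simultaneously saturates a whole family of disjoint rectangles, and then pass from a near-optimal finite description to an actual optimizer. Controlling this passage to the limit --- ensuring the supremum over $\mathcal A_\delta$ is essentially attained and that weak limits of couplings do not shed mass across the hyperplane $\{\Delta=\delta\}$ --- is the technical crux and is where the continuity assumption $M.2$ enters. Since the statement is quoted from \citet{K2014}, in practice I would import that construction and merely verify its hypotheses hold here.
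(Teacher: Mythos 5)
First, a point of comparison: the paper does not actually prove Lemma \ref{L5} --- it is imported wholesale from \citet{K2014} (the appendix only proves the joint-distribution analogue via the Nelsen-type Lemmas B.2--B.3, and Theorem \ref{T4} simply invokes Lemma \ref{L5}), so your attempt has to be judged on its own merits rather than against an in-paper argument. Your $\delta<0$ case, your validity argument for the Makarov upper bound, and your validity argument for the lower bound via the rectangles $B_k=\{Y_0>a_k,\,Y_1\le a_{k+1}\}$ --- whose pairwise disjointness is exactly where MTR enters --- are all correct, and the last of these is a clean, self-contained derivation of the chain formula as a valid lower bound.

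The genuine gap is in the sharpness half of the upper bound. Your construction (mass $\max\{F_0(y^{\ast}-\delta)-F_1(y^{\ast}),0\}$ on the rectangle $\{Y_0<y^{\ast}-\delta,\ Y_1>y^{\ast}\}$, remainder coupled comonotonically) does not satisfy the property you assert, namely $Y_1\le Y_0+\delta$ off the rectangle. Take $F_0$ uniform on $[0,1]$, $F_1$ uniform on $[0.5,1.5]$, $\delta=0.3$: the Makarov upper bound equals $0.8$, but your remainder coupling is the quantile coupling $Y_1=Y_0+0.5$, so $\Delta=0.5>\delta$ everywhere off the rectangle and your construction delivers $P(\Delta\le\delta)=0$ rather than $0.8$. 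A coupling attaining the upper bound must instead pile mass onto (or just below) the line $y_1=y_0+\delta$ --- e.g.\ $Y_0=Y_1-\delta$ for the bulk of the mass --- which is not the comonotone coupling of the two marginals, so the heuristic that ``comonotonicity is the dependence MTR favors, hence the check is routine'' points in the wrong direction for this bound. Moreover the check cannot be routine: whether the band $\{y_0\le y_1\le y_0+\delta\}$ can absorb the required mass depends delicately on how the marginals overlap (if, say, half of the $F_1$-mass lies more than $\delta$ above essentially all of the $F_0$-mass, no coupling with $Y_1\ge Y_0$ can reach the Makarov upper bound at that $\delta$), so this attainability step is precisely where the construction of \citet{K2014}, with whatever conditions it maintains, must do the work --- it cannot be dispatched by ``one then checks the marginals are reproduced.'' Your lower-bound sharpness is likewise only sketched (layered coupling, LP duality) and ultimately deferred to \citet{K2014}; since the paper itself defers entirely to that reference, the deferral is defensible there, but the upper-bound attainment argument as you wrote it is incorrect and would have to be replaced rather than patched.
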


From Lemmas \ref{L4}, \ref{L4.6}, and \ref{L5}, it is straightforward to derive
sharp bounds on the joint distribution and the DTE under $M.1-M.4$ and MTR.

The specific forms of sharp bounds on marginal distributions of $Y_{0}$ and
$Y_{1},$ their joint distribution, and the DTE under $M.1-M.4$ and MTR are
provided in Theorem \ref{T4} in Appendix.

\section{Discussion}
\label{c2s4}
\subsection{Testable Implications}
\label{c2s4s1}
I\ here show that NSM and MTR yield testable implications.

Note that NSM implies the following: for any $\left(  z^{\prime},z\right)
\in\mathcal{Z}\times\mathcal{Z}$ such that $p\left(  z^{\prime}\right)  \geq p\left(
z\right)  ,$ and for any $y\in\mathbb{R}$,
\begin{align*}
P\left(  \varepsilon_{1}\leq m^{-1}\left(  1,y\right)  |U\leq p\left(
z\right)  \right)   &  \leq P\left(  \varepsilon_{1}\leq m^{-1}\left(
1,y\right)  |U\leq p\left(  z^{\prime}\right)  \right)  ,\\
P\left(  \varepsilon_{0}\leq m^{-1}\left(  0,y\right)  |U>p\left(  z\right)
\right)   &  \leq P\left(  \varepsilon_{0}\leq m^{-1}\left(  0,y\right)
|U>p\left(  z^{\prime}\right)  \right)  .
\end{align*}
\newline This yields the following testable form of functional inequalities:
\begin{align}
P\left(  y|1,z\right)   &  \leq P\left(  y|1,z^{\prime}\right)
,\label{fcntest}\\
P\left(  y|0,z\right)   &  \leq P\left(  y|0,z^{\prime}\right)  .\nonumber
\end{align}

Next, MTR has two testable implications. First, MTR implies stochastic
dominance. In our model, marginal distributions are partially identified for
the entire population. Therefore, it can be tested by applying econometric
techniques for testing stochastic dominance for partially identified marginal
distributions as proposed in the literature including Jun et al. (2013). Also,
the sharp lower bound on the DTE under MTR can be greater than the upper bound
and furthermore the lower bound could be even above 1, when MTR is violated
for the true joint distribution of $Y_{0}$ and $Y_{1}$.

\subsection{NSM+CPQD and NSM+MTR}
\label{c2s4s2}
In Section \ref{c2s3}, I\ explored the identifying power of NSM, CPQD, and
MTR, separately. In this subsection, I\ briefly discuss how sharp bounds are
constructed when some of these conditions are combined. Establishing sharp
bounds under NSM and CPQD and sharp bounds under NSM and MTR is straightforward
from the results in Subsection \ref{c2s3s2} - Subsection \ref{c2s3s4}. First,
under NSM and CPQD, bounds on marginal distributions and bounds on the DTE are
identical to those under NSM only, since CPQD has no identifying power on
the marginal distributions and the DTE. The bounds on the joint distribution
under NSM and CPQD can be established by plugging the bounds on the
counterfactual probabilities $P_{0}\left(y_{0},1|z\right)  $ and
$P_{1}\left(y_{1},0|z\right)  $\ under NSM into the upper bound formula
under CPQD as follows:%
\begin{align*}
F^{L}\left(  y_{0},y_{1}\right)   &  =\underset{z\in\mathcal{Z}}{\sup}\left\{
P\left(  y_{0}|0,z\right)  L_{10}^{sm}\left(  y_{1},z\right)  +L_{01}%
^{wst}\left(  y_{0},z\right)  P\left(  y_{1}|1,z\right)  \right\}  ,\\
F^{U}\left(  y_{0},y_{1}\right)   &  =\inf_{z\in\mathcal{Z}}\left[
\begin{array}
[c]{c}%
\min\left\{  P\left(  y_{0}|0,z\right)  \left(  1-p\left(  z\right)  \right)
,U_{10}^{wst}\left(  y,z\right)  \right\} \\
+\min\left\{  U_{01}^{sm}\left(  y_{0},z\right)  ,P\left(  y_{1}|1,z\right)
p\left(  z\right)  \right\}
\end{array}
\right]  .
\end{align*}

Similarly, the distributional parameters are bounded under NSM and MTR. The
specific forms of sharp bounds on marginal distributions of $Y_{0}$ and
$Y_{1},$ their joint distribution, and the DTE under $M.1-M.4,$ NSM, and MTR
are provided in Theorem \ref{T5} in Appendix.

Lastly, marginal distribution bounds under NSM, CPQD, and MTR and marginal
distribution bounds under CPQD and MTR are identical to those under NSM and MTR
and those under MTR, respectively, since CPQD does not affect bounds on
marginal distributions. However, it is not straightforward to construct sharp
bounds on the joint distribution and the DTE under these three conditions or
under CPQD and MTR, as both CPQD and MTR directly restrict the joint
distribution as different types of conditions. To the best of my knowledge,
there exist no results on the sharp bounds on the joint distribution and DTE
when support restrictions such as MTR are combined with various dependence
restriction such as quadrant dependence. This is beyond the scope of this paper.

\section{Numerical Examples}
\label{c2s5}
This section presents numerical examples to illustrate how bounds on
distributional parameters are tightened by the restrictions considered in this
paper. The potential outcomes and selection equations are given as follows:%
\begin{align*}
Y_{0}  &  =\rho U+\varepsilon,\\
Y_{1}  &  =Y_{0}+\eta,\\
D\left(  Z\right)   &  =1\left(  Z\geq U\right)  ,
\end{align*}
where $\left(  U,\varepsilon\right)  \sim i.i.d.N\left(  0,I_{2}\right)  $,
$\eta\sim\chi^{2}\left(  k\right)  $, and $\eta\perp\!\!\!\perp\left(
U,\varepsilon\right)  $ \ for a positive integer $k.$

Selection is allowed to be endogenous since the selection unobservable $U$
\ is dependent on potential outcomes $Y_{0}$ and $Y_{1}$ for $\rho\neq0$.
I\ consider negative values of $\rho$\ to make the specification satisfy NSM
discussed in Subsection \ref{c2s3s2}. CPQD holds due to the common factor
$\varepsilon$ in $Y_{0}$ and $Y_{1}$, which is independent of $U$. Lastly, MTR
is obviously satisfied as $P\left(  Y_{1}\geq Y_{0}\right)  =1$ since
$\eta\geq0$ with probability one. Also, to rule out the full support of the
instrument, $Z$ is assumed to be a uniformly distributed random variable on
$\left(  z,-z\right)  $\ for $z=2,1.5,1,.5.$%

\begin{figure}
[ptb]
\begin{center}
\includegraphics[
height=2.8591in,
width=6.077in
]%
{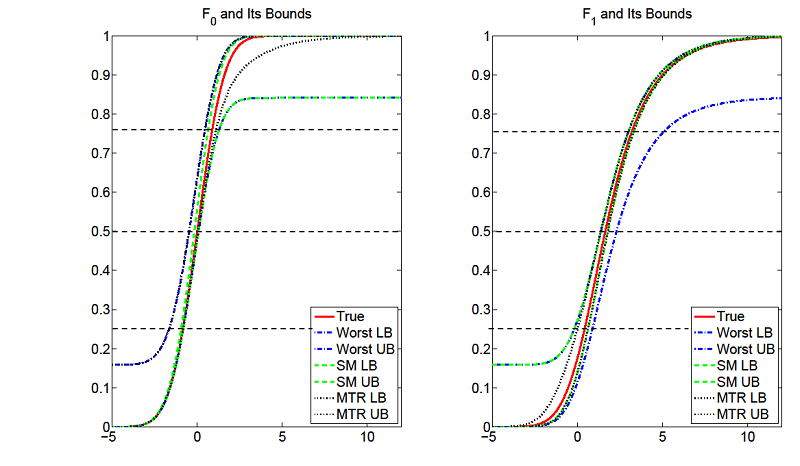}%
\caption{Bounds on the distributions of $Y_{0}$ (left) and $Y_{1}$ (right)}%
\label{marginalbounds}%
\end{center}
\end{figure}

First, for $\rho=-0.75$ and $Z\sim Unif\left(  1,-1\right)  ,$ I obtain the
sharp bounds on the marginal distributions of potential outcomes $Y_{0}$ and
$Y_{1}$ as proposed in Section \ref{c2s3}. Figure \ref{marginalbounds} shows
the bounds on each potential outcome distribution as well as the true
distribution. Solid curves represent the true marginal distributions of
$Y_{0}$ and $Y_{1}$ and dash-dot curves, dotted curves, and dashed curves
represent their worst bounds, bounds under NSM, and bounds under MTR,
respectively. Remember that bounds on marginal distributions under CPQD are
identical to worst bounds. Figure \ref{marginalbounds} shows that NSM
substantially improves the upper bound on $F_{0}$ and the lower bound on
$F_{1}$, compared to worst bounds. As shown in Lemma \ref{L2}, NSM improves the
upper bound on $P\left(  Y_{0}\leq y,1|z\right)  $\ and the lower bound on
$P\left(  Y_{1}\leq y,0|z\right)  $ for $y\in\mathbb{R},$ which are used in
obtaining the upper bound on $F_{0}$ and the lower bound on $F_{1},$
respectively. On the other hand, MTR improves the lower bound on $F_{0}$ and
the upper bound on $F_{1}$. \ Note that in contrast to NSM, MTR improves the
lower bound on $P\left(  Y_{0}\leq y,1|z\right)  $\ and the upper bound on
$P\left(  Y_{1}\leq y,0|z\right)  $ for all $y\in\mathbb{R},$ which are used
in obtaining the lower bound on $F_{0}$ and the upper bound on $F_{1},$ respectively.%

\begin{figure}
[ptb]
\begin{center}
\includegraphics[
height=2.8461in,
width=5.1007in
]%
{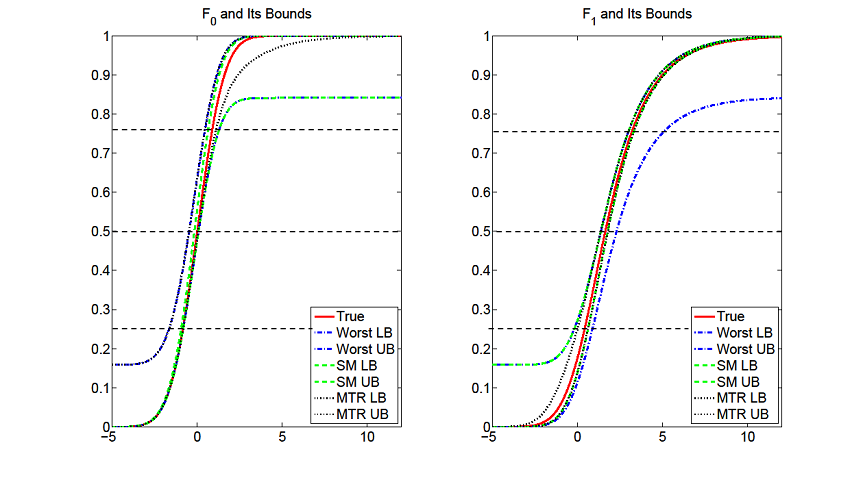}%
\caption{Bounds on the distributions of $Y_{0}$ (left) and $Y_{1}$ (right)}%
\label{marginalbounds_joint}%
\end{center}
\end{figure}

Next, I\ plotted bounds on marginal distributions when NSM and MTR are jointly
imposed. In Figure \ref{marginalbounds_joint}, solid curves represent the true
distributions of $Y_{0}$ and $Y_{1},$ and dash-dot curves and dashed curves
represent their worst bounds and bounds under NSM and MTR, respectively. Figure
\ref{marginalbounds_joint} shows that if NSM and MTR are jointly considered,
both upper and lower bounds improve for both $F_{0}$ and $F_{1}$ as
discussed in Section \ref{c2s4}. The quantiles of the potential outcomes can
be obtained by inverting the bounds on the marginal distributions. The bounds
on the quantiles of $Y_{0}$ and $Y_{1}$ are reported in Table \ref{Table0}

\begin{figure}
[ptb]
\begin{center}
\includegraphics[
height=3.4307in,
width=6.1505in
]%
{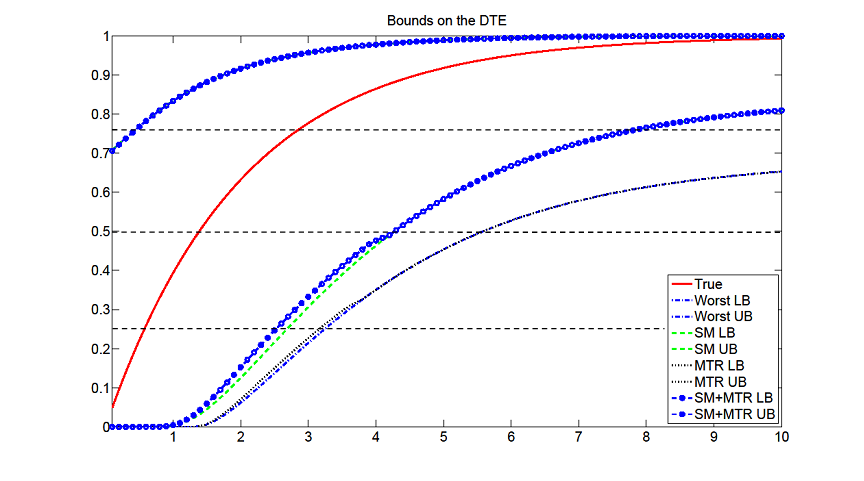}%
\caption{True DTE and bounds on the DTE}%
\label{dtebounds_joint}%
\end{center}
\end{figure}

Figure \ref{dtebounds_joint} shows the true DTE and bounds on the DTE. Solid
curve, dash-dot curves, dotted lines, dashed curves, and \ dashed curves with
circles represent the true DTE, worst DTE bounds, bounds under NSM, bounds
under MTR, and bounds under NSM and MTR, respectively. Compared to the worst
bounds, the lower bound under NSM notably improves over the entire support of
the DTE. Remember that the lower DTE bound improves through the upper bound on
$P_{0}\left(  y,1|z\right)  $\ and the lower bound on $P_{1}\left(
y,0|z\right)  ,$ both of which are improved by NSM, even though the DTE bounds
under NSM still relies on Makarov bounds. On the other hand, although MTR
directly improves the lower DTE bound from the Makarov lower bound, the
improvement of the lower DTE bound by MTR is not substantial over the whole
support. This is because neither the upper bound on $P_{0}\left(
y,1|z\right)  $\ nor the lower bound on $P_{1}\left(  y,0|z\right)  $
improves, which are the counterfactual components consisting of the lower
bound. Also, as discussed in Chapter 1, the sharp lower bound on
$F_{\Delta}\left(  \delta\right)  $ under MTR converges to the Makarov lower
bound as $\delta$ increases for sufficiently large values of $\delta.$ On the
other hand, the upper bound under NSM does not improve from the worst upper
bound as discussed in Subsection \ref{c2s3s2} Although the upper bound improves under
MTR through improvement in the lower bound on $P_{0}\left(  y,1|z\right)
$\ and the upper bound on $P_{1}\left(  y,0|z\right)  $, the improvement in
the upper bound under MTR is not remarkable as shown in Figure
\ref{dtebounds_joint}. Also, the quantiles of treatment effects can be
obtained by inverting the bounds on the DTE. The bounds on the quantiles of
the DTE are reported in Table \ref{Table0}.

Table \ref{Table1} shows the bounds on the joint distribution under various
restrictions considered in this study. Compared to the worst bounds, bounds
are tighter under NSM due to the marginal distributions bounds improved by NSM.
On the other hand, the upper bound under CQPD does not improve unlike the
lower bound. Note that CQPD has no identifying power on marginal
distributions, while it improves the lower bound on the joint distribution.
However, when CQPD is combined with NSM, the upper bound also improves due to
the improved marginal distributions bounds under NSM. The identification region
under MTR is tighter than the worst identification region for both the upper
bound and the lower bound. Note that the upper bound under MTR\ is lower than
the worst lower bound through the improved lower bound on $P_{0}\left(
y,1|z\right)  $ and improved upper bound on $P_{1}\left(y,0|z\right)
$ by MTR, while it still poses the Makarov upper bound. On the other hand, the
lower bound under MTR is higher than the worst lower bound obtained from the
Makarov lower bound because of the direct effect of MTR on the lower bound on
the joint distribution. Remember that the lower bound on the joint
distribution is not affected by the improved components of the bounds on
counterfactual probabilities: the improved lower bound on $P_{0}\left(y,1|z\right)  $ and improved upper bound on $P_{1}\left(y,0|z\right)
$. Lastly, under NSM\ and MTR both the lower bound and the upper bound improve
through counterfactual probabilities $U_{01}^{sm}(y,z)$ and$\ L_{10}%
^{sm}(y,z)$, respectively which are improved by NSM compared to the bounds
under MTR only.

I also obtained sharp bounds on the potential outcomes distributions and the
DTE for $z\in\left\{  2,1.5,1,.5\right\}  $ to see how the support of the
instrument affect the identification region. Tables \ref{Table2},
\ref{Table3}, and \ref{Table4} document the identification regions of $F_{0},$
$F_{1},$ and $F_{\Delta},$ respectively, under NSM and MTR for these different
values of $z.$ As expected, as the support of the instrument gets larger, the
identification regions of the marginal distributions and the DTE become more
informative. Table 5 shows the identification regions of the DTE for different
values of $\rho=\left\{  -.25,-.5,-.75\right\}  $. Since the true DTE does not
depend on the value of $\rho,$ one can see from Table 5 how the size of
correlation between the outcome heterogeneity and the selection heterogeneity
affects the identification region of the DTE for the fixed true DTE. As shown
in Table 5, the identification region becomes tighter\ as $\rho$ approaches
$0$. That is, the weaker endogeneity with the smaller absolute value of $\rho$
helps identification of the DTE. This is readily understood from the extreme
case. If $\rho=0$ where the treatment selection is independent of potential
outcomes $Y_{0}$ and $Y_{1},$ marginal distributions of potential outcomes are
exactly identified, which clearly leads to tighter bounds on the DTE.

\section{Conclusion}
\label{c2s6}
In this paper, I established sharp bounds on marginal distributions of
potential outcomes, their joint distribution, and the DTE in triangular
systems. To do this, I explored various types of restrictions to tighten the
existing bounds including stochastic monotonicity between each outcome
unobservable and the selection unobservable, conditional positive quadrant
dependence between two outcome unobservables given the selection unobservable,
and the monotonicity of the potential outcomes. I did not rely on rank
similarity and the full support of IV, and furthermore I avoided strong
distributional assumptions including a single factor structure, which
contrasts with most of related work. The proposed bounds take the form of
intersection bounds and lend themselves to existing inference methods
developed in \citet{CLR2013}. 

\begin{table}[tbp]%

\begin{center}
\caption{True quantiles and bounds on the quantiles of $Y_{0}$ and $Y_{1}$}\label{Table0}%
\end{center}

\centering%
\begin{tabular}
[c]{c|c|ccc}\hline\hline
$q$ &  & $F_{0}^{-1}\left(  q\right)  $ & $F_{1}^{-1}\left(  q\right)  $ &
$F_{\Delta}^{-1}\left(  q\right)  $\\\hline
$.25$ & True & $-.85$ & $.40$ & $.48$\\
& Worst & $[-1.70,-.85]$ & $[-.20,.90]$ & $[0,3.15]$\\
& NSM & $[-.95,-.85]$ & $[-.20,.60]$ & $[0,2.60]$\\
& MTR & $[-1.70,-.85]$ & $[0,.60]$ & $[0,3.05]$\\
& NSM+MTR & $[-.95,-.85]$ & $[0,.60]$ & $[0,2.40]$\\\hline
$.5$ & True & $0$ & $1.65$ & $1.30$\\
& Worst & $[-.45,.05]$ & $[0,2.30]$ & $[0,5.50]$\\
& NSM & $[-.15.05]$ & $[1.40,1.80]$ & $[0,4.20]$\\
& MTR & $[-.45,.05]$ & $[1.40,1.80]$ & $[0,5.50]$\\
& NSM+MTR & $[-.15,.05]$ & $[1.40,1.80]$ & $[0,4.20]$\\\hline
$.75$ & True & $.85$ & $3.15$ & $2.70$\\
& Worst & $[.40,1.20]$ & $[2.95,4.95]$ & $[.25,\infty)$\\
& NSM & $[.60,1.20]$ & $[2.95,3.30]$ & $[.25,7.40]$\\
& MTR & $[.40,1.05]$ & $[2.95,3.30]$ & $[.25,\infty)$\\
& NSM+MTR & $[.60,1.05]$ & $[2.95,3.30]$ & $[.25,7.40]$\\\hline
\end{tabular}%
\end{table}%

\pagebreak
\newpage
\begin{table}[H]%

\begin{center}
\caption{True Joint distribution $F\left( y_{0},y_{1}\right)$ and its bounds under various restrictions
}\label{Table1}
\end{center}

\centering
\begin{tabular}
[c]{c|c|ccccccc}\hline\hline
$y_{0}\backslash y_{1}$ &  & $-3$ & $-1$ & $1$ & $3$ & $5$ & $7$ & $9$\\\hline

$-1$ & True & $0$ & $.03$ & $.16$ & $.19$ & $.20$ & $.21$ & $.21$\\
& Worst & $[0,.09]$ & $[0,.12]$ & $[0,.23]$ & $[0,.30]$ & $[.03,.34]$ &
$[.09,.36]$ & $[.11,.36]$\\
& NSM & $[0,.09]$ & $[0,.12]$ & $[0,.23]$ & $[0,.24]$ & $[.13,.24]$ &
$[.18,.24]$ & $[.20,.24]$\\
& CPQD & $[0,.09]$ & $[.01,.12]$ & $[.06,.23]$ & $[.13,.30]$ & $[.15,.34]$ &
$[.16,.36]$ & $[.17,.36]$\\
& NSM+CPQD & $[0,.09]$ & $[.01,.12]$ & $[.08,.23]$ & $[0.16,.24]$ & $[.19,.24]
$ & $[.20,.24]$ & $[.21,.24]$\\
& MTR & $[0,.01]$ & $[.03,.12]$ & $[.03,.23]$ & $[.03,.30]$ & $[.03,.34]$ &
$[.09,.36]$ & $[.11,.36]$\\
& NSM+MTR & $[0,.01]$ & $[.03,.12]$ & $[.03,.23]$ & $[.03.24]$ & $[.13,.24]$ &
$[.18,.24]$ & $[.20,.24]$\\\hline
$1$ & True & $0$ & $.03$ & $.37$ & $.63$ & $.73$ & $.77$ & $.78$\\
& Worst & $[0,.16]$ & $[0,.18]$ & $[.13,.43]$ & $[.38,.75]$ & $[.50,.85]$ &
$[.54,.87]$ & $[.55,.87]$\\
& NSM & $[0,.16]$ & $[0,.18]$ & $[.19,.43]$ & $[.50,.75]$ & $[.64,.85]$ &
$[.69,.85]$ & $[.71,.85]$\\
& CPQD & $[0,.16]$ & $[.02,.18]$ & $[.21,.43]$ & $[.43,.75]$ & $[.53,.85]$ &
$[.56,.87]$ & $[.58,.87]$\\
& NSM+CPQD & $[0,.16]$ & $[.03,.18]$ & $[.26,.43]$ & $[.53,.75]$ & $[.65,.85]$
& $[.69,.85]$ & $[.71,.85]$\\
& MTR & $[0,.01]$ & $[.04,.12]$ & $[.33,.43]$ & $[.39,.75]$ & $[.50,.85]$ &
$[.55,.87]$ & $[.57,.87]$\\
& NSM+MTR & $[0,.01]$ & $[.04,.12]$ & $[.33,.43]$ & $[.50,.75]$ & $[.64,.85]$ &
$[.70,.85]$ & $[.73,.85]$\\\hline
$3$ & True & $0$ & $.03$ & $.37$ & $.75$ & $.90$ & $.96$ & $.98$\\
& Worst & $[0,.16]$ & $[.03,.19]$ & $[.25,.43]$ & $[.51,.76]$ & $[.62,.91]$ &
$[.66,.97]$ & $[.67,.99]$\\
& NSM & $[0,.16]$ & $[.03,.19]$ & $[.31,.43]$ & $[.62,.76]$ & $[.76,.91]$ &
$[.81,.97]$ & $[.83,.99]$\\
& CPQD & $[0,.16]$ & $[.03,.19]$ & $[.25,.43]$ & $[.51,.76]$ & $[.62,.91]$ &
$[.66,.97]$ & $[.67,.99]$\\
& NSM+CPQD0 & $[0,.16]$ & $[.03,.19]$ & $[.31,.43]$ & $[.63,.76]$ & $[.76,.91]
$ & $[.81,.97]$ & $[.83,.99]$\\
& MTR & $[0,.01]$ & $[.04,.12]$ & $[.33,.43]$ & $[.72,.76]$ & $[.68,.91]$ &
$[.74,.97]$ & $[.76,.99]$\\
& NSM+MTR & $[0,.01]$ & $[.04,.12]$ & $[.33,.43]$ & $[.72,.76]$ & $[.82,.91]$ &
$[.89,.97]$ & $[.92,.99]$\\\hline
$5$ & True & $0$ & $.03$ & $.37$ & $.75$ & $.90$ & $.96$ & $.98$\\
& Worst & $[0,.16]$ & $[.03,.19]$ & $[.25,.43]$ & $[.51,.76]$ & $[.62,.91]$ &
$[.66,.97]$ & $[.68,.99]$\\
& NSM & $[0,.16]$ & $[.03,.19]$ & $[.31,.43]$ & $[.63,.76]$ & $[.76,.91]$ &
$[.81,.97]$ & $[.83,.99]$\\
& CPQD & $[0,.16]$ & $[.03,.19]$ & $[.25,.43]$ & $[.51,.76]$ & $[.62,.91]$ &
$[.66,.97]$ & $[.68,.99]$\\
& NSM+CPQD & $[0,.16]$ & $[.03,.19]$ & $[.31,.43]$ & $[.63,.76]$ & $[.76,.91]$
& $[.81,.97]$ & $[.83,.99]$\\
& MTR & $[0,.01]$ & $[.04,.12]$ & $[.33,.43]$ & $[.72,.76]$ & $[.90,.91]$ &
$[.94,.97]$ & $[.96,.99]$\\
& NSM+MTR & $[0,.01]$ & $[.04,.12]$ & $[.33,.43]$ & $[.72,.76]$ & $[.90,.91]$ &
$[.94,.97]$ & $[.96,.99]$\\\hline
$7$ & True & $0$ & $.03$ & $.37$ & $.75$ & $.91$ & $.97$ & $.99$\\
& Worst & $[0,.16]$ & $[.03,.19]$ & $[.25,.43]$ & $[.51,.76]$ & $[.62,.91]$ &
$[.66,.97]$ & $[.68,.99]$\\
& NSM & $[0,.16]$ & $[.03,.19]$ & $[.31,.43]$ & $[.63,.76]$ & $[.76,.91]$ &
$[.81,.97]$ & $[.83,.99]$\\
& CPQD & $[0,.16]$ & $[.03,.19]$ & $[.25,.43]$ & $[.51,.76]$ & $[.62,.91]$ &
$[.66,.97]$ & $[.68,.99]$\\
& NSM+CPQD & $[0,.16]$ & $[.03,.19]$ & $[.31,.43]$ & $[.63,.76]$ & $[.76,.91]$
& $[.81,.97]$ & $[.83,.99]$\\
& MTR & $[0,.01]$ & $[.04,.12]$ & $[.33,.43]$ & $[.72,.76]$ & $[.90,.91]$ &
$[.96,.97]$ & $[.98,.99]$\\
& NSM+MTR & $[0,.01]$ & $[.04,.12]$ & $[.33,.43]$ & $[.72,.76]$ & $[.90,.91]$ &
$[.96,.97]$ & $[.98,.99]$\\\hline
\end{tabular}
\end{table}

\pagebreak
\newpage

\begin{table}[H]%

\begin{center}
\caption{Identification regions of $F_{0}\left( y \right)$ when $Z \sim Unif\left(z,-z\right)$
}\label{Table2}
\end{center}

\centering$%
\begin{tabular}
[c]{c|c|cccc}\hline
$y$ & $True$ & $z=2$ & $z=1.5$ & $z=1$ & $z=0.5$\\\hline
$-4$ & $0.00$ & $\left[  0,0\right]  $ & $\left[  0,0\right]  $ & $\left[
0,0\right]  $ & $\left[  0,0\right]  $\\
$-2$ & $0.05$ & $\left[  .05,0.06\right]  $ & $\left[  .05,.06\right]  $ &
$\left[  .05,.06\right]  $ & $\left[  .05,.06\right]  $\\
$0$ & $0.50$ & $\left[  .50,.51\right]  $ & $\left[  0.50,0.53\right]  $ &
$\left[  0.48,0.56\right]  $ & $\left[  .45,.59\right]  $\\
$2$ & $0.95$ & $\left[  .94,.95\right]  $ & $\left[  0.92,0.96\right]  $ &
$\left[  0.87,0.97\right]  $ & $\left[  .81,0.98\right]  $\\
$4$ & $1.00$ & $\left[  .99,1.00\right]  $ & $\left[  0.98,1.00\right]  $ &
$\left[  0.96,1.00\right]  $ & $\left[  .93,1.00\right]  $\\
$6$ & $1.00$ & $\left[  1.00,1.00\right]  $ & $\left[  0.99,1.00\right]  $ &
$\left[  0.98,1.00\right]  $ & $\left[  .97,1.00\right]  $\\
$8$ & $1.00$ & $\left[  1.00,1.00\right]  $ & $\left[  1.00,1.00\right]  $ &
$\left[  0.99,1.00\right]  $ & $\left[  .99,1.00\right]  $\\\hline
\end{tabular}
$%
\end{table}%
\begin{table}[H]%

\begin{center}
\caption{Identification regions of $F_{1}\left( y \right)$ when $Z \sim Unif\left(z,-z\right)$
}\label{Table3}
\end{center}

\centering
$%
\begin{tabular}
[c]{c|c|cccc}\hline\hline
$y$ & $True$ & $z=2$ & $z=1.5$ & $z=1$ & $z=0.5$\\\hline
$-4$ & $0.00$ & $\left[  0,0\right]  $ & $\left[  0,0\right]  $ & $\left[
0,0\right]  $ & $\left[  0,0\right]  $\\
$-2$ & $0.01$ & $\left[  .01,.02\right]  $ & $\left[  .01,.03\right]  $ &
$\left[  0,.04\right]  $ & $\left[  .00,.05\right]  $\\
$0$ & $0.18$ & $\left[  .17,.19\right]  $ & $\left[  .16,.21\right]  $ &
$\left[  .14,.25\right]  $ & $\left[  .12,.32\right]  $\\
$2$ & $0.57$ & $\left[  .57,.58\right]  $ & $\left[  .56,.59\right]  $ &
$\left[  .55,.61\right]  $ & $\left[  .53,.66\right]  $\\
$4$ & $0.84$ & $\left[  .84,.84\right]  $ & $\left[  .83,.84\right]  $ &
$\left[  .83,.85\right]  $ & $\left[  .82,.87\right]  $\\
$6$ & $0.94$ & $\left[  .94,.94\right]  $ & $\left[  .94,.94\right]  $ &
$\left[  .94,.95\right]  $ & $\left[  .94,.95\right]  $\\
$8$ & $0.98$ & $\left[  .98,.98\right]  $ & $\left[  .98,.98\right]  $ &
$\left[  .98,.98\right]  $ & $\left[  .98,.98\right]  $\\\hline
\end{tabular}
$%
\end{table}%
\begin{table}[H]%

\begin{center}
\caption{Identification regions of $F_{\Delta }\left( \delta  \right)$ for different values of $z$
}\label{Table4}
\end{center}

\centering
$%
\begin{tabular}
[c]{c|c|cccc}\hline\hline
$\delta$ & $True$ & $z=2$ & $z=1.5$ & $z=1$ & $z=.5$\\\hline
$1$ & $.39$ & $\left[  .01,.78\right]  $ & $\left[  .01,.80\right]  $ &
$\left[  0,.83\right]  $ & $\left[  0,.91\right]  $\\
$3$ & $.78$ & $\left[  .44,.95\right]  $ & $\left[  .38,.95\right]  $ &
$\left[  0.33,.96\right]  $ & $\left[  .25,.97\right]  $\\
$5$ & $.92$ & $\left[  .67,.99\right]  $ & $\left[  .65,.99\right]  $ &
$\left[  0.58,.99\right]  $ & $\left[  .47,.99\right]  $\\
$7$ & $.97$ & $\left[  .84,1.00\right]  $ & $\left[  .80,1.00\right]  $ &
$\left[  .73,1.00\right]  $ & $\left[  .60,1.00\right]  $\\
$9$ & $.99$ & $\left[  .92,1.00\right]  $ & $\left[  .88,1.00\right]  $ &
$\left[  .79,1.00\right]  $ & $\left[  .65,1.00\right]  $\\\hline
\end{tabular}
$%
\end{table}%
\begin{table}[H]%

\begin{center}
\caption{Identification regions of the DTE for different $\rho$
}\label{Table5}
\end{center}

\centering$%
\begin{tabular}
[c]{c|c|ccc}\hline\hline
$\delta$ & $True$ & $\rho=-0.25$ & $\rho=-0.5$ & $\rho=-0.75$\\\hline
$1$ & $0.39$ & $[.01,.83]$ & $[.01,.83]$ & $\left[  0,.83\right]  $\\
$3$ & $0.78$ & $[.38,.95]$ & $[.36,.96]$ & $\left[  .33,.96\right]  $\\
$5$ & $0.92$ & $[.61,.99]$ & $[.60,.99]$ & $\left[  .58,.99\right]  $\\
$7$ & $0.97$ & $[.74,1.00]$ & $[.74,1.00]$ & $\left[  .73,1.00\right]  $\\
$9$ & $0.99$ & $[.80,1.00]$ & $[.80,1.00]$ & $\left[  .79,1.00\right]
$\\\hline
\end{tabular}
$%
\end{table}%
\bigskip
\pagebreak
\bibliography{dissertationbiblio}
\begin{appendix}
\section*{Appendix}

\subsection*{Proof of Lemma \ref{L1.5}}

I\ provide a proof only for sharp bounds on $P_{1}\left(  y,0|z\right)  $.
Sharp bounds on $P_{0}\left(  y,1|z\right)  $ are obtained similarly.
\begin{align*}
&  P\left[  Y_{1}\leq y,0|z\right] \\
&  =P\left[  Y_{1}\leq y,p(z)<U\right] \\
&  =P\left[  Y_{1}\leq y,p(z)<U\leq\overline{p}\right]  +P\left[  Y_{1}\leq
y,\overline{p}<U\right] \\
&  =\underset{p\left(  z\right)  \rightarrow\overline{p}}{\lim}P\left(
y|1,z\right)  \overline{p}-P\left(  y|1,z\right)  p\left(  z\right)  +P\left[
Y_{1}\leq y|\overline{p}<U\right]  \left(  1-\overline{p}\right)  .
\end{align*}
The model (\ref{model}) under $M.1-M.4$ is uninformative about the
counterfactual distribution term $P\left[  Y_{1}\leq y|\overline{p}<U\right]
.$ Therefore by plugging 0 and 1 into the term, \ bounds on $P\left[
Y_{1}\leq y,0|z\right]  $ can be obtained as follows:%

\[
P\left[  Y_{1}\leq y,0|z\right]  \in\left[  L_{10}^{wst}\left(  y,z\right)
,U_{10}^{wst}\left(  y,z\right)  \right]  ,
\]
where
\begin{align*}
L_{10}^{wst}\left(  y,z\right)   &  =\underset{p\left(  z\right)
\rightarrow\overline{p}}{\lim}P\left(  y|1,z\right)  \overline{p}-P\left(
y|1,z\right)  p\left(  z\right)  ,\\
U_{10}^{wst}\left(  y,z\right)   &  =\underset{p\left(  z\right)
\rightarrow\overline{p}}{\lim}P\left(  y|1,z\right)  \overline{p}-P\left(
y|1,z\right)  p\left(  z\right)  +1-\overline{p}.
\end{align*}
${\small \blacksquare}$

\subsection*{Theorem \ref{T1}}

\begin{theorem}
\label{T1}Under $M.1-M.4$, sharp bounds on marginal distributions of $Y_{0}$
and $Y_{1}$, their joint distribution and the DTE are obtained as follows: for
$d\in\left\{  0,1\right\}  $, $y\in\mathbb{R}$, $\delta\in\mathbb{R}$, and
$\left(  y_{0},y_{1}\right)  \in\mathbb{R\times R},$%
\begin{align*}
F_{d}\left(  y\right)   &  \in\left[  F_{d}^{L}\left(  y\right)  ,F_{d}%
^{U}\left(  y\right)  \right]  ,\\
F\left(  y_{0},y_{1}\right)   &  \in\left[  F^{L}\left(  y_{0},y_{1}\right)
,F^{U}\left(  y_{0},y_{1}\right)  \right]  ,\\
F_{\Delta}\left(  \delta\right)   &  \in\left[  F_{\Delta}^{L}\left(
\delta\right)  ,F_{\Delta}^{U}\left(  \delta\right)  \right]  ,
\end{align*}
where
\end{theorem}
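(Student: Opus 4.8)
The plan is to assemble the three blocks of bounds from the ingredients already in place and then verify sharpness by an explicit construction. For the marginal distributions I would start from the identity (\ref{1}), $F_{1}(y)=P(y,1|z)+P_{1}(y,0|z)$, which holds for every $z\in\mathcal{Z}$ by $M.3$. Since $P(y,1|z)$ is directly identified from the data and $P_{1}(y,0|z)\in[L_{10}^{wst}(y,z),U_{10}^{wst}(y,z)]$ by Lemma \ref{L1.5}, for each $z$ we get $F_{1}(y)\in[P(y,1|z)+L_{10}^{wst}(y,z),\,P(y,1|z)+U_{10}^{wst}(y,z)]$. Because $F_{1}(y)$ does not depend on $z$, it must lie in the intersection of these intervals over $z\in\mathcal{Z}$, which gives $F_{1}^{L}(y)=\sup_{z}[P(y,1|z)+L_{10}^{wst}(y,z)]$ and $F_{1}^{U}(y)=\inf_{z}[P(y,1|z)+U_{10}^{wst}(y,z)]$; the bounds on $F_{0}(y)$ follow symmetrically from $F_{0}(y)=P(y,0|z)+P_{0}(y,1|z)$ and the bounds on $P_{0}(y,1|z)$.

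For the joint distribution and the DTE I would use the mixture decomposition (\ref{jointdecompose}) and its DTE analogue. Conditional on $D=0,Z=z$, the marginal of $Y_{0}$ is point-identified as $P(y_{0}|0,z)$ while the conditional sub-distribution $y_{1}\mapsto P_{1}(y_{1}|0,z)$ ranges over the admissible set implied by Lemma \ref{L1.5} (its pointwise extremes are $L_{10}^{wst}(y_{1},z)/(1-p(z))$ and $U_{10}^{wst}(y_{1},z)/(1-p(z))$), and on $D=1,Z=z$ the roles are reversed. Since $M.1$–$M.4$ impose no restriction on the copula linking $Y_{0}$ and $Y_{1}$ within either cell, the Fr\'{e}chet–Hoeffding inequalities of Subsection \ref{c2s2s3} bound each conditional joint, and because $\{D=0\}$ and $\{D=1\}$ partition the population one may choose the within-cell copulas independently; mixing with weights $1-p(z)$ and $p(z)$ and then optimizing over the admissible conditional marginals and over $z$ yields $F^{L}(y_{0},y_{1})$ and $F^{U}(y_{0},y_{1})$. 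The DTE bounds are obtained in exactly the same way with the Makarov bounds replacing Fr\'{e}chet–Hoeffding, applied to $P(Y_{1}-Y_{0}\le\delta\,|\,d,z)$. The closed forms asserted in the statement then come from substituting the Lemma \ref{L1.5} expressions into these classical-bound formulas.

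The substantive step is sharpness: I must show that every value in each claimed interval is attained by some law of $(\varepsilon_{0},\varepsilon_{1},U,Z)$ satisfying $M.1$–$M.4$ and reproducing the observed distribution of $(Y,D,Z)$. The difficulty is that the intervals above are \emph{intersection} bounds over $z$, so the per-$z$ sharpness of Lemma \ref{L1.5} does not transfer automatically — one must build a single structure matching the observed conditional outcome distributions and propensity scores at \emph{every} $z$ simultaneously. Following \citet{BGIM2007}, I would fix a target value, use the monotone rank representation $Y_{d}=m(d,\varepsilon_{d})$ to reduce the problem to specifying the joint law of $(\varepsilon_{0},\varepsilon_{1},U)$ with $U\sim\mathrm{Unif}(0,1)$, pin down the conditional laws of $\varepsilon_{d}$ on the ``observable'' slices $\{p(z)<U\le p(z')\}$ (using $M.4$-continuity of $p(\cdot)$ so that the limits $\lim_{p(z)\to\overline{p}}P(y|1,z)$ etc.\ are well defined) so as to match the data, and allocate the free mass on the never-observed regions $\{U>\overline{p}\}$ and $\{U\le\underline{p}\}$ to hit the target — and, for the joint and DTE bounds, choose the within-cell copula to be comonotone or countermonotone as appropriate. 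Independence of $Z$ is preserved because $Z$ enters only through $p(Z)$ and we never condition the constructed $(\varepsilon_{0},\varepsilon_{1},U)$-law on $Z$. I expect the bookkeeping in this last step — checking that the constructed joint law is a genuine probability measure with the $\mathrm{Unif}(0,1)$ normalization while matching all $z$ — to be the main obstacle; everything else is routine given Lemma \ref{L1.5} and the classical bounds.
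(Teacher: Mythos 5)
Your proposal follows essentially the same route as the paper's own proof: decompose each parameter by conditioning on $(D,Z)$ under $M.3$, insert the Lemma \ref{L1.5} bounds on the counterfactual probabilities, apply Fr\'{e}chet--Hoeffding and Makarov bounds within each treatment cell, and intersect over $z\in\mathcal{Z}$. If anything, you are more explicit than the paper about what a complete sharpness argument requires --- the paper's proof of Theorem \ref{T1} asserts sharpness via the union-over-admissible-marginals and intersection-over-$z$ language without carrying out the single-structure construction you sketch.
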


\begin{align}
\text{ }F_{0}^{L}\left(  y\right)   &  =\underset{z\in\Xi}{\sup}\left[
P\left\{  y|0,z\right\}  \left(  1-p\left(  z\right)  \right)  +L_{01}%
^{wst}\left(  y,z\right)  \right]  ,\label{worstcase}\\
F_{0}^{U}\left(  y\right)   &  =\underset{z\in\Xi}{\inf}\left[  P\left\{
y|0,z\right\}  \left(  1-p\left(  z\right)  \right)  +U_{01}^{wst}\left(
y,z\right)  \right]  ,\nonumber\\
F_{1}^{L}\left(  y\right)   &  =\underset{z\in\Xi}{\sup}\left[  P\left\{
y|1,z\right\}  p\left(  z\right)  +L_{10}^{wst}\left(  y,z\right)  \right]
,\nonumber\\
F_{1}^{U}\left(  y\right)   &  =\underset{z\in\Xi}{\inf}\left[  P\left\{
y|1,z\right\}  p\left(  z\right)  +U_{10}^{wst}\left(  y,z\right)  \right]
,\nonumber\\
F^{L}\left(  y_{0},y_{1}\right)   &  =\sup_{z\in\Xi}\left[
\begin{array}
[c]{c}%
\max\left\{  \left(  P\left(  y_{0}|0,z\right)  -1\right)  \left(  1-p\left(
z\right)  \right)  +L_{10}^{wst}\left(  y_{1},z\right)  ,0\right\} \\
+\max\left\{  L_{01}^{wst}\left(  y_{0},z\right)  +\left(  P\left(
y_{1}|1,z\right)  -1\right)  p\left(  z\right)  ,0\right\}
\end{array}
\right]  ,\nonumber\\
F^{U}\left(  y_{0},y_{1}\right)   &  =\inf_{z\in\Xi}\left[
\begin{array}
[c]{c}%
\min\left\{  P\left(  y_{0}|0,z\right)  \left(  1-p\left(  z\right)  \right)
,U_{10}^{wst}\left(  y_{1},z\right)  \right\} \\
+\min\left\{  U_{01}^{wst}\left(  y_{0},z\right)  ,P\left(  y_{1}|1,z\right)
p\left(  z\right)  \right\}
\end{array}
\right]  ,\nonumber\\
F_{\Delta}^{L}\left(  \delta\right)   &  =\sup_{z\in\Xi}\left[
\begin{array}
[c]{c}%
\underset{y\in\mathbb{R}}{\sup\max}\left\{  P\left(  y|1,z\right)  p\left(
z\right)  -U_{01}^{wst}\left(  y-\delta,z\right)  ,0\right\} \\
+\underset{y\in\mathbb{R}}{\sup\max}\left\{  L_{10}^{wst}\left(  y,z\right)
-P\left(  y-\delta|0,z\right)  \left(  1-p\left(  z\right)  \right)
,0\right\}
\end{array}
\right]  ,\nonumber\\
F_{\Delta}^{U}\left(  \delta\right)   &  =1+\inf_{z\in\Xi}\left[
\begin{array}
[c]{c}%
\underset{y\in\mathbb{R}}{\inf\min}\left\{  P\left(  y|1,z\right)  p\left(
z\right)  -L_{01}^{wst}\left(  y-\delta,z\right)  ,0\right\} \\
+\underset{y\in\mathbb{R}}{\inf\min}\left\{  U_{10}^{wst}\left(  y,z\right)
-P\left(  y-\delta|0,z\right)  \left(  1-p\left(  z\right)  \right)
,0\right\}
\end{array}
\right]  .\nonumber
\end{align}

\begin{proof}
The proof consists of three parts: sharp bounds on (i) marginal distributions,
(ii) the joint distribution, and (iii) the DTE.\newline\textbf{Part 1. Sharp
bounds on marginal distributions }$F_{0}\left(  \cdot\right)  $\textbf{\ and
}$F_{1}\left(  \cdot\right)  $\newline Since sharp bounds on $F_{0}\left(
y\right)  $ are obtained similarly, I\ derive sharp bounds on $F_{1}\left(
\cdot\right)  $ only. By M.3, $P\left[  Y_{1}\leq y\right]  =P\left[
Y_{1}\leq y|z\right]  $ for any $z\in\Xi$ and $P\left[  Y_{1}\leq y|z\right]
$ can be written as\ the sum of the factual and counterfactual components as
follows:%
\begin{align*}
&  P\left[  Y_{1}\leq y|z\right] \\
&  =P_{1}\left(  y,0|z\right)  +P\left(  y,1|z\right)  .
\end{align*}
Since $P\left[  Y_{1}\leq y,0|z\right]  \in\left[  L_{10}^{wst}\left(
y,z\right)  ,U_{10}^{wst}\left(  y,z\right)  \right]  $ by Lemma \ref{L1.5},%
\begin{align*}
&  P\left(  y|1,z\right)  p\left(  z\right)  +L_{10}^{wst}\left(  y,z\right)
\\
&  \leq P\left[  Y_{1}\leq y|z\right] \\
&  \leq P\left(  y|1,z\right)  p\left(  z\right)  +U_{10}^{wst}\left(
y,z\right)
\end{align*}
Consequently, sharp bounds on $P\left[  Y_{1}\leq y\right]  $ are obtained by
taking the intersection for the bounds on $P\left[  Y_{1}\leq y|z\right]  $
over all $z\in\Xi$ as follows:
\begin{align*}
F_{1}^{L}\left(  y\right)   &  =\sup_{z\in\Xi}\left\{  P\left(  y|1,z\right)
p\left(  z\right)  +L_{10}^{wst}\left(  y,z\right)  \right\}  ,\\
F_{1}^{U}\left(  y\right)   &  =\inf_{z\in\Xi}\left\{  P\left(  y|1,z\right)
p\left(  z\right)  +U_{10}^{wst}\left(  y,z\right)  \right\}  .
\end{align*}
\textbf{Part 2. Sharp bounds on the joint distribution }$F\left(  \cdot
,\cdot\right)  $\newline By M.3,%
\begin{align}
&  F\left(  y_{0},y_{1}\right) \label{A.1}\\
&  =P\left(  Y_{0}\leq y_{0},Y_{1}\leq y_{1}|z\right) \nonumber\\
&  =P\left(  Y_{0}\leq y_{0},Y_{1}\leq y_{1},D=0|z\right)  +P\left(  Y_{0}\leq
y_{0},Y_{1}\leq y_{1},D=1|z\right)  .\nonumber
\end{align}
Note that the model (\ref{model}) and $M.1-M.5$ does not restrict the joint
distribution of $Y_{0}$ and $Y_{1}$ as discussed in Subsection 3.1. Therefore,
for $d\in\left\{  0,1\right\}  ,$ sharp bounds on $P\left(  Y_{0}\leq
y_{0},Y_{1}\leq y_{1}|d,z\right)  $ are obtained by Fr\'{e}chet-Hoeffding
bounds as follows: for any $\left(  y_{0},y_{1}\right)  \in\mathbb{R}^{2},$%
\begin{align*}
&  \max\left\{  P\left(  y_{0}|0,z\right)  +P_{1}\left(  y_{1}|0,z\right)
-1,0\right\} \\
&  \leq P\left(  Y_{0}\leq y_{0},Y_{1}\leq y_{1}|0,z\right) \\
&  \leq\min\left\{  P\left(  y_{0}|0,z\right)  ,P_{1}\left(  y_{1}|0,z\right)
\right\}  .
\end{align*}
Since $P_{1}\left(  y_{1}|0,z\right)  $ is only partially identified, sharp
bounds on $P\left(  Y_{0}\leq y_{0},Y_{1}\leq y_{1}|0,z\right)  $ are obtained
by taking the union over all possible values of $P_{1}\left(  y_{1}%
|0,z\right)  .$ Therefore, sharp bounds on $P\left(  Y_{0}\leq y_{0},Y_{1}\leq
y_{1},D=0|z\right)  =P\left(  Y_{0}\leq y_{0},Y_{1}\leq y_{1}|0,z\right)
\left(  1-p\left(  z\right)  \right)  $ are derived as follows:%
\begin{align*}
&  \max\left\{  P\left(  y_{0},0|z\right)  +L_{10}^{wst}\left(  y,z\right)
-\left(  1-p\left(  z\right)  \right)  ,0\right\} \\
&  \leq P\left(  Y_{0}\leq y_{0},Y_{1}\leq y_{1},D=0|z\right) \\
&  \leq\min\left\{  P\left(  y_{0},0|z\right)  ,U_{10}^{wst}\left(
y,z\right)  \right\}  .
\end{align*}
Similarly,%
\begin{align*}
&  \max\left\{  L_{01}^{wst}\left(  y,z\right)  +\left(  P\left(
y_{1}|1,z\right)  -1\right)  p\left(  z\right)  ,0\right\} \\
&  \leq P\left(  Y_{0}\leq y_{0},Y_{1}\leq y_{1},D=1|z\right) \\
&  \leq\min\left\{  U_{01}^{wst}\left(  y,z\right)  ,P\left(  y_{1}%
|1,z\right)  p\left(  z\right)  \right\}  .
\end{align*}
By (\ref{A.1}), sharp bounds on $P\left(  Y_{0}\leq y_{0},Y_{1}\leq
y_{1}\right)  $ are obtained by taking the intersection of the bounds over all
values of $z\in\Xi,$%
\begin{align*}
F^{L}\left(  y_{0},y_{1}\right)   &  =\sup_{z\in\Xi}\left\{  \max\left\{
\left(  P\left(  y_{0}|0,z\right)  -1\right)  \left(  1-p\left(  z\right)
\right)  +L_{10}^{wst}\left(  y_{1},z\right)  ,0\right\}  \right. \\
&  \left.  +\max\left\{  L_{01}^{wst}\left(  y_{0},z\right)  +\left(  P\left(
y_{1}|1,z\right)  -1\right)  p\left(  z\right)  ,0\right\}  \right\}  ,\\
F^{U}\left(  y_{0},y_{1}\right)   &  =\inf_{z\in\Xi}\left\{  \min\left\{
P\left(  y_{0}|0,z\right)  \left(  1-p\left(  z\right)  \right)  ,U_{10}%
^{wst}\left(  y|z\right)  \right\}  \right. \\
&  \left.  +\min\left\{  U_{01}^{wst}\left(  y_{0},z\right)  ,P\left(
y_{1}|1,z\right)  p\left(  z\right)  \right\}  \right\}  .
\end{align*}
\textbf{Part 3. Sharp bounds on the DTE }$F_{\Delta}\left(  \cdot\right)
$\textbf{\ }\newline As shown in Part 2, the model (\ref{model}) and $M.1-M.4$
do not restrict the joint distribution of $Y_{0}$ and $Y_{1}$ and sharp bounds
on the DTE are obtained by Makarov bounds. Specifically,%
\begin{align*}
&  P\left(  Y_{1}-Y_{0}\leq\delta\right) \\
&  =P\left(  Y_{1}-Y_{0}\leq\delta|z\right) \\
&  =P\left(  Y_{1}-Y_{0}\leq\delta,D=1|z\right)  +P\left(  Y_{1}-Y_{0}%
\leq\delta,D=0|z\right)  .
\end{align*}
Since%
\begin{align*}
P\left(  Y_{1}-Y_{0}\leq\delta,D=0|z\right)   &  =P\left(  Y_{1}-Y_{0}%
\leq\delta|0,z\right)  \left(  1-p\left(  z\right)  \right)  ,\\
P\left(  Y_{1}-Y_{0}\leq\delta,D=1|z\right)   &  =P\left(  Y_{1}-Y_{0}%
\leq\delta|1,z\right)  p\left(  z\right)  ,
\end{align*}
by Makarov bounds,%
\begin{align*}
&  \underset{y\in\mathbb{R}}{\sup}\max\left\{  L_{10}^{wst}\left(  y,z\right)
-P\left(  y-\delta|0,z\right)  \left(  1-p\left(  z\right)  \right)
,0\right\} \\
&  \leq P\left(  Y_{1}-Y_{0}\leq\delta,D=0|z\right) \\
&  \leq\left(  1-p\left(  z\right)  \right)  +\underset{y\in\mathbb{R}}%
{\inf\max}\left\{  U_{10}^{wst}\left(  y|z\right)  -P\left(  y-\delta
|0,z\right)  \left(  1-p\left(  z\right)  \right)  ,0\right\}  ,
\end{align*}
and%
\begin{align*}
&  \underset{y\in\mathbb{R}}{\sup}\max\left\{  P\left(  y|1,z\right)  p\left(
z\right)  -U_{01}^{wst}\left(  y-\delta|z\right)  ,0\right\} \\
&  \leq P\left(  Y_{1}-Y_{0}\leq\delta,D=1|z\right) \\
&  \leq p\left(  z\right)  +\underset{y\in\mathbb{R}}{\inf\max}\left\{
P\left(  y|1,z\right)  p\left(  z\right)  -L_{01}^{wst}\left(  y-\delta
|z\right)  ,0\right\}  .
\end{align*}
Therefore, sharp bounds on the DTE are obtained from the intersection bounds
as follows:%
\begin{align*}
&  \sup_{z\in\Xi}\left\{  \underset{y\in\mathbb{R}}{\sup\max}\left\{
L_{10}^{wst}\left(  y,z\right)  -P\left(  y-\delta|0,z\right)  \left(
1-p\left(  z\right)  \right)  ,0\right\}  \right. \\
&  \left.  +\underset{y\in\mathbb{R}}{\sup\max}\left\{  P\left(  y|1,z\right)
p\left(  z\right)  -U_{01}^{wst}\left(  y-\delta|z\right)  ,0\right\}
\right\} \\
&  \leq P\left(  Y_{1}-Y_{0}\leq\delta\right) \\
&  \leq1+\inf_{z\in\Xi}\left\{  \underset{y\in\mathbb{R}}{\inf\max}\left\{
P\left(  y|1,z\right)  p\left(  z\right)  -L_{01}^{wst}\left(  y-\delta
,z\right)  ,0\right\}  \right. \\
&  \left.  +\underset{y\in\mathbb{R}}{\inf\max}\left\{  U_{10}^{wst}\left(
y,z\right)  -P\left(  y-\delta|0,z\right)  \left(  1-p\left(  z\right)
\right)  ,0\right\}  \right\}  .
\end{align*}

\end{proof}

\subsection*{Corollary \ref{T2}}

\begin{corollary}
\label{T2}(Bounds on the marginal distributions of potential outcomes) Under
$M.1-M.4$ and SM, sharp bounds on marginal distributions of $Y_{0}$ and
$Y_{1}$, their joint distribution and the DTE are given as follows: for
$d\in\left\{  0,1\right\}  $, $y\in\mathbb{R}$, $\delta\in\mathbb{R}$, and
$\left(  y_{0},y_{1}\right)  \in\mathbb{R\times R},$%
\begin{align*}
F_{d}\left(  y\right)   &  \in\left[  F_{d}^{L}\left(  y\right)  ,F_{d}%
^{U}\left(  y\right)  \right]  ,\\
F\left(  y_{0},y_{1}\right)   &  \in\left[  F^{L}\left(  y_{0},y_{1}\right)
,F^{U}\left(  y_{0},y_{1}\right)  \right]  ,\\
F_{\Delta}\left(  \delta\right)   &  \in\left[  F_{\Delta}^{L}\left(
\delta\right)  ,F_{\Delta}^{U}\left(  \delta\right)  \right]  ,
\end{align*}
where%
\begin{align*}
\text{ }F_{0}^{L}\left(  y\right)   &  =\underset{z\in\Xi}{\sup}\left[
P\left(  y|0,z\right)  \left(  1-p\left(  z\right)  \right)  +L_{01}%
^{wst}\left(  y,z\right)  \right]  ,\\
F_{0}^{U}\left(  y\right)   &  =\underset{z\in\Xi}{\inf}\left[  P\left(
y|0,z\right)  \left(  1-p\left(  z\right)  \right)  +U_{01}^{sm}\left(
y,z\right)  \right]  ,\\
F_{1}^{L}\left(  y\right)   &  =\underset{z\in\Xi}{\sup}\left[  P\left(
y|1,z\right)  p\left(  z\right)  +L_{10}^{wst}\left(  y,z\right)  \right]  ,\\
F_{1}^{U}\left(  y\right)   &  =\underset{z\in\Xi}{\inf}\left[  P\left(
y|1,z\right)  p\left(  z\right)  +U_{10}^{sm}\left(  y,z\right)  \right]  ,\\
F^{L}\left(  y_{0},y_{1}\right)   &  =\sup_{z\in\Xi}\left[
\begin{array}
[c]{c}%
\max\left\{  \left(  P\left(  y_{0}|0,z\right)  -1\right)  \left(  1-p\left(
z\right)  \right)  +L_{10}^{wst}\left(  y_{1},z\right)  ,0\right\} \\
+\max\left\{  L_{01}^{wst}\left(  y_{0},z\right)  +\left(  P\left(
y_{1}|1,z\right)  -1\right)  p\left(  z\right)  ,0\right\}
\end{array}
\right]  ,\\
F^{U}\left(  y_{0},y_{1}\right)   &  =\inf_{z\in\Xi}\left[
\begin{array}
[c]{c}%
\min\left\{  P\left(  y_{0}|0,z\right)  \left(  1-p\left(  z\right)  \right)
,U_{10}^{sm}\left(  y_{1},z\right)  \right\} \\
+\min\left\{  U_{01}^{sm}\left(  y_{0},z\right)  ,P\left(  y_{1}|1,z\right)
p\left(  z\right)  \right\}
\end{array}
\right]  ,\\
F_{\Delta}^{L}\left(  \delta\right)   &  =\sup_{z\in\Xi}\left[
\begin{array}
[c]{c}%
\underset{y\in\mathbb{R}}{\sup\max}\left\{  P\left(  y|1,z\right)  p\left(
z\right)  -U_{01}^{sm}\left(  y-\delta,z\right)  ,0\right\} \\
+\underset{y\in\mathbb{R}}{\sup\max}\left\{  L_{10}^{wst}\left(  y,z\right)
-P\left(  y-\delta|0,z\right)  \left(  1-p\left(  z\right)  \right)
,0\right\}
\end{array}
\right]  ,\\
F_{\Delta}^{U}\left(  \delta\right)   &  =1+\inf_{z\in\Xi}\left[
\begin{array}
[c]{c}%
\underset{y\in\mathbb{R}}{\inf\min}\left\{  P\left(  y|1,z\right)  p\left(
z\right)  -L_{01}^{wst}\left(  y-\delta,z\right)  ,0\right\} \\
+\underset{y\in\mathbb{R}}{\inf\min}\left\{  U_{10}^{sm}\left(  y,z\right)
-P\left(  y-\delta|0,z\right)  \left(  1-p\left(  z\right)  \right)
,0\right\}
\end{array}
\right]  .
\end{align*}

\end{corollary}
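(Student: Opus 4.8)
The plan is to mirror the three-part structure of the proof of Theorem~\ref{T1}, replacing the worst-case counterfactual bounds by the NSM-sharpened bounds of Lemma~\ref{L2} wherever the former were invoked. \textbf{Marginal distributions.} By $M.3$, for every $z$ we may write $F_{d}(y)=P(y,d\mid z)+P_{1-d}(y,d\mid z)$, the first term observed and the second counterfactual. Substituting $P_{0}(y,1\mid z)\in[L_{01}^{wst}(y,z),U_{01}^{sm}(y,z)]$ and $P_{1}(y,0\mid z)\in[L_{10}^{wst}(y,z),U_{10}^{sm}(y,z)]$ from Lemma~\ref{L2}, and intersecting the resulting envelopes over $z$, gives the stated $[F_{d}^{L},F_{d}^{U}]$. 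Sharpness of these is inherited from Lemma~\ref{L2}: any admissible value of the counterfactual sub-probability is produced by a law of $(\varepsilon_{d},U)$ compatible with $M.1$--$M.4$, NSM, and the data, hence so is the induced value of $F_{d}(y)$.

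\textbf{Joint distribution and DTE.} The crucial point is that NSM restricts only the conditional \emph{marginals} $\varepsilon_{0}\mid U$ and $\varepsilon_{1}\mid U$ (each stochastically nonincreasing in $U$) and imposes no restriction at all on the conditional copula of $(\varepsilon_{0},\varepsilon_{1})$ given $U=u$. Hence, for any pair of admissible conditional marginals, $\varepsilon_{0}$ and $\varepsilon_{1}$ can be coupled freely within each $U$-slice, so conditional on $\{D=d,Z=z\}$ the dependence between $Y_{0}$ and $Y_{1}$ is unconstrained. Fr\'{e}chet--Hoeffding bounds therefore apply to $P(Y_{0}\le y_{0},Y_{1}\le y_{1}\mid d,z)$ and Makarov bounds to $P(Y_{1}-Y_{0}\le\delta\mid d,z)$, each built from the point-identified factual marginal and the Lemma~\ref{L2}-bounded counterfactual marginal; taking the union over admissible values of the counterfactual marginal, using the decomposition (\ref{jointdecompose}) and its DTE analogue, and intersecting over $z$ gives the displayed formulas. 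Since CPQD is not imposed, the lower-bound formulas retain their worst-case form, only $U_{01}^{wst},U_{10}^{wst}$ being upgraded to $U_{01}^{sm},U_{10}^{sm}$ in the positions where an upper bound on a counterfactual term enters.

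\textbf{Main obstacle: sharpness of the joint and DTE bounds.} The delicate step is to exhibit, for each target value in the stated interval, a single law of $(\varepsilon_{0},\varepsilon_{1},U)$ (with $Z$ independent of it) that simultaneously reproduces the distribution of $(Y,D,Z)$, satisfies NSM, and attains the target. I would build it in two stages. First, fix conditional marginals $F_{\varepsilon_{d}\mid U=u}$ that are stochastically monotone in $u$, agree with the factual conditionals, and are arranged so that the relevant $U$-slice mixture hits the chosen value of the counterfactual marginal inside the Lemma~\ref{L2} interval---the endpoint constructions are exactly that lemma's sharpness statement, and interior values follow by mixing or interpolating between endpoint constructions. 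Second, within each slice $U=u$ select the conditional copula of $(\varepsilon_{0},\varepsilon_{1})$ so that the unconditional law of $(Y_{0},Y_{1})$ given $\{D=d,Z=z\}$ lands at the desired point of its Fr\'{e}chet--Hoeffding (resp. Makarov) range; this is possible because those classical bounds are sharp for fixed marginals and the copula choice does not interact with NSM, which touches the marginals only. The one thing that needs checking is that the two stages remain mutually compatible when one intersects over $z$---i.e. that a single DGP can realize the binding value of $z$---which follows from the standard device of pinning the construction down at that value and completing it freely elsewhere.
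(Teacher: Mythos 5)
Your proposal follows the paper's own route exactly: the paper derives Corollary \ref{T2} by plugging the NSM-sharpened counterfactual bounds of Lemma \ref{L2} into the three-part argument of Theorem \ref{T1}, and by noting that NSM restricts only the conditional marginals of $\varepsilon_{0}$ and $\varepsilon_{1}$ given $U$ (not their conditional copula), so Fr\'{e}chet--Hoeffding and Makarov bounds still apply within each treatment arm before intersecting over $z$. Your two-stage sharpness construction simply makes explicit what the paper leaves implicit, so the approach is essentially identical.
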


\subsection*{Theorem \ref{T3}}

\ \ \ \ \ \ \ \ \ \ \ \ \ \ \ \ \ \ \ \ \ \ \ \ \ \ \ \ \ \ \ \ \ \ \ \ \ \ \ \ \ \ \ \ \ \ \ \ \ \ \ \ \ \ \ \ \ \ \ \ \ \ \ \ \ \ \ \ \ \ \ \ \ \ \ \ \ \ \ \ \ \ \ \ \ \ \ \ \ \ \ \ \ \ \ \ \ \ \ \ \ \ \ \ \ \ \ \ \ \ \ \ \ \ \ \ \ \ \ \ \ \ \ \ \ \ \ \ \ \ \ \ \ \ \ \ \ \ \ \ \ \ \ \ \ \ \ \ \ \ \ \ \ \ \ \ \ \ \ \ \ \ \ \ \ \ \ \ \ \ \ \ \ \ \ \ \ \ \ \ \ \ \ \ \ \ \ \ \ \ \ \ \ \ \ \ \ \ \ \ \ \ \ \ \ \ \ \ \ \ \ \ \ \ \ \ \ \ \ \ \ \ \ \ \ \ \ \ \ \ \ \ \ \ \ \ \ \ \ \ \ \ \ \ \ \ \ \ \ \ \ \ \ \ \ \ \ \ \ \ \ \ \ \ \ \ \ \ \ \ \ \ \ \ \ \ \ \ \ \ \ \ \ \ \ \ \ \ \ \ \ \ \ \ \ \ \ \ \ \ \ \ \ \ \ \ \ \ \ \ \ \ \ \ \ \ \ \ \ \ \ \ \ \ \ \ \ \ \ \ \ \ \ \ \ \ \ \ \ \ \ \ \ \ \ \ \ \ \ \ \ \ \ \ 

\begin{theorem}
\label{T3}Under $M.1-M.5$, and CPQD, sharp bounds on $F_{0}\left(
y_{0}\right)  $, $F_{1}\left(  y_{1}\right)  ,$ and $F_{\Delta}\left(
\delta\right)  $ are identical to those given in Theorem \ref{T1}. Sharp
bounds on $F\left(  y_{0},y_{1}\right)  $ are obtained as follows: for
$\left(  y_{0},y_{1}\right)  \in\mathbb{R\times R},$%
\[
F\left(  y_{0},y_{1}\right)  \in\left[  F^{L}\left(  y_{0},y_{1}\right)
,F^{U}\left(  y_{0},y_{1}\right)  \right]  ,
\]
where%
\begin{align*}
F_{d}\left(  y\right)   &  \in\left[  F_{d}^{L}\left(  y\right)  ,F_{d}%
^{U}\left(  y\right)  \right]  ,\\
F\left(  y_{0},y_{1}\right)   &  \in\left[  F^{L}\left(  y_{0},y_{1}\right)
,F^{U}\left(  y_{0},y_{1}\right)  \right]  ,\\
F_{\Delta}\left(  \delta\right)   &  \in\left[  F_{\Delta}^{L}\left(
\delta\right)  ,F_{\Delta}^{U}\left(  \delta\right)  \right]  ,
\end{align*}%
\begin{align*}
F^{L}\left(  y_{0},y_{1}\right)   &  =\underset{z\in\Xi}{\sup}\left\{
P\left(  y_{0}|0,z\right)  L_{10}^{wst}\left(  y_{1},z\right)  +L_{01}%
^{wst}\left(  y_{0},z\right)  P\left(  y_{1}|1,z\right)  \right\}  ,\\
F^{U}\left(  y_{0},y_{1}\right)   &  =\inf_{z\in\Xi}\left[
\begin{array}
[c]{c}%
\min\left\{  P\left(  y_{0}|0,z\right)  \left(  1-p\left(  z\right)  \right)
,U_{10}^{wst}\left(  y,z\right)  \right\} \\
+\min\left\{  U_{01}^{wst}\left(  y_{0},z\right)  ,P\left(  y_{1}|1,z\right)
p\left(  z\right)  \right\}
\end{array}
\right]  .
\end{align*}

\end{theorem}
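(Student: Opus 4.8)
The plan is to establish four claims in order: that CPQD leaves the marginal bounds and the DTE bounds at their Theorem~\ref{T1} values, that it leaves the upper bound on $F(y_{0},y_{1})$ unchanged, that it raises the lower bound on $F(y_{0},y_{1})$ to the stated $F^{L}$, and that $F^{L}$ is sharp. The first three are comparatively routine; sharpness of the new lower bound is the crux.

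For the marginals and the DTE, the idea is that CPQD carries no identifying power, which I would prove by exhibiting CPQD-consistent models attaining the Theorem~\ref{T1} bounds. Since $F_{d}(y)=\int_{0}^{1}P(\varepsilon_{d}\le m^{-1}(d,y)|U=u)\,du$ depends only on the law of $(\varepsilon_{d},U)$, while CPQD constrains only the conditional copula of $(\varepsilon_{0},\varepsilon_{1})$ given $U$, any extremal model from the proof of Theorem~\ref{T1} may be recoupled with the independence conditional copula $C_{u}(a,b)=ab$, which satisfies CPQD with equality, without changing either marginal or any observable; hence the marginal bounds are unchanged. For the DTE, the rigorous form of the argument around Figure~\ref{Makarovbounds} is that the Makarov lower bound comes from $P(Y_{1}\le y,\,Y_{0}\ge y-\delta)\ge F_{1}(y)-\min\{F_{1}(y),F_{0}(y-\delta)\}$ and the Makarov upper bound from the analogous inequality for $P(Y_{0}\le y'-\delta,\,Y_{1}\ge y')$, so both rely only on the Fr\'echet--Hoeffding \emph{upper} bound on the within-cell joint law and are insensitive to any improvement of its lower bound; moreover the couplings attaining the worst-case Makarov bounds can be taken CPQD-consistent, since the pertinent extremal dependence is positive. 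The same dominance remark settles the upper bound on $F(y_{0},y_{1})$: $\min\{F_{0},F_{1}\}$ majorizes every copula, so the within-cell Fr\'echet--Hoeffding upper bound, and therefore $F^{U}$ after intersecting over $z\in\Xi$ with the Lemma~\ref{L1.5} bounds, coincides with Theorem~\ref{T1} (attained by the comonotone coupling, for which $C_{u}(a,b)=\min(a,b)\ge ab$).

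For the improved lower bound I would reproduce the derivation sketched before the statement. From the decomposition (\ref{jointdecompose}), within the $D=0$ cell write $P(Y_{0}\le y_{0},Y_{1}\le y_{1}|0,z)$ as the $u$-average over $(p(z),1]$ of the conditional joint CDF; bound the conditional copula below by the product (CPQD); pass to the product of the conditional marginal averages by Chebyshev's integral inequality, which is licensed by $M.5$ (keeping $p(z)$ bounded away from $0$ and $1$) together with comonotonicity in $u$ of $P(\varepsilon_{0}\le m^{-1}(0,y_{0})|U=u)$ and $P(\varepsilon_{1}\le m^{-1}(1,y_{1})|U=u)$ on the relevant interval --- a property that must be argued rather than merely invoked, since it does not follow from CPQD alone; and finally replace the counterfactual $P_{1}(y_{1}|0,z)$ by $L_{10}^{wst}(y_{1},z)/(1-p(z))$ via Lemma~\ref{L1.5}. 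This gives (\ref{*1}); (\ref{*2}) is the mirror argument in the $D=1$ cell. Summing the two cell inequalities and intersecting over $z\in\Xi$ yields $F(y_{0},y_{1})\ge F^{L}(y_{0},y_{1})$.

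Sharpness is the main obstacle. One must produce a single law of $(\varepsilon_{0},\varepsilon_{1},U)$ together with monotone maps $m(0,\cdot),m(1,\cdot)$ that reproduces every observed conditional distribution, satisfies $M.1-M.5$ and CPQD, and makes $F(y_{0},y_{1})$ attain $F^{L}(y_{0},y_{1})$ at the supremizing $z=z^{*}$ (or along a sequence approaching it). The natural recipe: take the conditional copula of $(\varepsilon_{0},\varepsilon_{1})$ given $U$ to be the independence copula, so CPQD binds; put $\varepsilon_{0}>m^{-1}(0,y_{0})$ with probability one on $\{U\le\underline{p}\}$ and $\varepsilon_{1}>m^{-1}(1,y_{1})$ with probability one on $\{U>\overline{p}\}$, so the Lemma~\ref{L1.5} lower bounds $L_{01}^{wst}$ and $L_{10}^{wst}$ on the two counterfactual subdistributions are attained; and fill in the remaining conditional laws to match the data. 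One then checks that the whole chain of the previous paragraph binds and that $M.2$ holds (the constructed $\varepsilon_{d}$ are mixtures of absolutely continuous laws, hence absolutely continuous). The delicacy is that the conditional subdistribution functions on $(\underline{p},\overline{p})$ are pinned down by the data --- they are, up to sign, $u$-derivatives of observed conditional probabilities --- so neither comonotonicity nor tightness of the Chebyshev step can be imposed there by choice of model; closing the construction therefore seems to require exploiting the freedom to specify the conditional copula separately on $[0,p(z^{*})]$ and on $(p(z^{*}),1]$, restricting to data configurations in which everything aligns, or a limiting argument. This construction, not the inequalities, is where I expect the real work.
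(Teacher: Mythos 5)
Your derivation of the bounds themselves tracks the paper's proof almost exactly: the same cell decomposition (\ref{jointdecompose}), the same CPQD-then-Chebyshev chain (\ref{cii}), (\ref{*1}), (\ref{*2}) with $L_{10}^{wst}$ and $L_{01}^{wst}$ from Lemma \ref{L1.5} plugged in and an intersection over $z$, and the Fr\'{e}chet--Hoeffding upper bound left untouched. For the DTE the paper formalizes your ``Makarov bounds only ever use the upper bound on the joint law'' remark by invoking the Williamson--Downs result (Lemma B.1 in the Appendix, applied with $X=Y_{1}$, $Y=-Y_{0}$), so that sharp bounds on $F_{\Delta}$ depend only on the upper copula bound, which CPQD does not move; your direct argument via the rectangles defining the Makarov bounds is the same substance, slightly less packaged. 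The claim that CPQD leaves the marginal bounds alone is, as you say, immediate because CPQD restricts only the conditional copula of $(\varepsilon_{0},\varepsilon_{1})$ given $U$.

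The genuine gap is that your proposal does not finish the sharpness argument for $F^{L}$, which is part of the statement: you give a recipe (conditional independence copula, mass placement on $\{U\leq\underline{p}\}$ and $\{U>\overline{p}\}$ to hit $L_{01}^{wst}$ and $L_{10}^{wst}$) and then candidly list the obstacles --- the conditional pieces on $(\underline{p},\overline{p})$ are pinned down by the data, so neither the comonotonicity hypothesis of Chebyshev's integral inequality nor equality in that step can be imposed by choice of model --- without resolving them. Completing the theorem requires either an explicit data-consistent CPQD model (or sequence of models) attaining $F^{L}$ at the supremizing $z$, or an argument that the comonotonicity of $u\mapsto P(\varepsilon_{0}\leq e_{0}|u)$ and $u\mapsto P(\varepsilon_{1}\leq e_{1}|u)$ can be assumed without loss; neither is supplied. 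You should know that the paper itself is far terser here: it asserts sharpness in one sentence (lower bound attained at conditional independence, upper bound at perfect positive conditional dependence) and applies Chebyshev's inequality without verifying its comonotonicity hypothesis, which holds under NSM but is not implied by $M.1$--$M.5$ and CPQD alone. So the two difficulties you flag are precisely the points the paper glosses over; they are the right concerns, but as it stands your proposal establishes validity of the bounds while leaving their sharpness --- the crux you yourself identify --- unproved.
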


\begin{proof}
The proof of Theorem \ref{T3} consists of two parts: sharp bounds on the joint
distribution of $Y_{0}$ and $Y_{1}$ and sharp bounds on the DTE under
$M.1-M.5$ and CPQD.\newline\textbf{Part 1. Sharp bounds on the joint
distribution of }$Y_{0}$\textbf{\ and }$Y_{1}$\newline In Subsection
$\backslash$%
ref\{c2s3s3\}, I\ proved that
\begin{align*}
P\left(  Y\leq y_{0},Y_{1}\leq y_{1}|0,z\right)   &  \geq\left(  \frac
{1}{1-p\left(  z\right)  }\right)  ^{2}\left(  Y_{0}\leq y_{0}|0,z\right)
P\left(  Y_{1}\leq y_{1}|0,z\right)  ,\\
P\left(  Y_{0}\leq y_{0},Y\leq y_{1}|1,z\right)   &  \geq\left(  \frac
{1}{p\left(  z\right)  }\right)  ^{2}P\left(  Y_{0}\leq y_{0}|1,z\right)
P\left(  Y_{1}\leq y_{1}|1,z\right)  .
\end{align*}
Also by (\ref{*1}) and (\ref{*2}), for any $z\in\Xi,$%
\begin{align*}
&  P\left(  Y_{0}\leq y_{0},Y_{1}\leq y_{1}\right) \\
&  =P\left(  Y_{0}\leq y_{0},Y_{1}\leq y_{1}|z\right) \\
&  =P\left(  Y\leq y_{0},Y_{1}\leq y_{1}|0,z\right)  \left(  1-p\left(
z\right)  \right)  +P\left(  Y_{0}\leq y_{0},Y\leq y_{1}|1,z\right)  p\left(
z\right) \\
&  \geq P\left(  y_{0}|0,z\right)  L_{10}^{wst}\left(  y_{1},z\right)
+L_{01}^{wst}\left(  y_{1},z\right)  P\left(  y_{1}|1,z\right)
\end{align*}
Finally, the lower bound $P\left(  Y_{0}\leq y_{0},Y_{1}\leq y_{1}\right)  $
can be obtained by taking the intersection over all $z\in\Xi,$
\begin{align*}
&  P\left(  Y_{0}\leq y_{0},Y_{1}\leq y_{1}\right) \\
&  \geq\underset{z\in\Xi}{\sup}\left\{  P\left(  y_{0}|0,z\right)
L_{10}^{wst}\left(  y_{1},z\right)  +L_{01}^{wst}\left(  y_{1},z\right)
P\left(  y_{1}|1,z\right)  \right\}  .
\end{align*}
The upper bound is obtained as Fr\'{e}chet-Hoeffing upper bound as follows:%
\begin{align*}
&  P\left(  Y_{0}\leq y_{0},Y_{1}\leq y_{1}\right) \\
&  \leq\underset{z\in\Xi}{\inf}\left\{  \min\left\{  P\left(  y_{0}%
|0,z\right)  ,P_{1}\left(  y_{1}|0,z\right)  \right\}  \left(  1-p\left(
z\right)  \right)  \right. \\
&  \left.  +\min\left\{  P_{0}\left(  y_{0}|1,z\right)  ,P\left(
y_{1}|1,z\right)  \right\}  p\left(  z\right)  \right\}  .
\end{align*}
The lower bound is obtained when $\varepsilon_{0}$ and $\varepsilon_{1}$ are
independent conditionally on $U$, while the upper bound is obtained when
$\varepsilon_{0}$ and $\varepsilon_{1}$ are perfectly dependent conditionally
on $U$. Thus they are sharp.\newline\textbf{Part 2. Sharp bounds on the
DTE}\newline To show that CPQD has no additional identifying power on the
DTE, I\ use the following Lemma which has been presented by \citet{WD1990} and \citet{FP2009}.\newline\newline\textbf{Lemma B.1 }Let $\underline{C}$ denote
a lower bound on the copula of $X$ and $Y$, and $F_{X+Y}$ denote the
distribution function of $X+Y.$\ If support of $(X,Y),$ $supp(X,Y)$ satisfies
$supp(X,Y)=supp(X)\times supp(Y),$%
\[
\sup_{x+y=z}\underline{C}\left(  F_{X}\left(  x\right)  ,F_{Y}\left(
y\right)  \right)  \leq F_{X+Y}\left(  z\right)  \leq\inf_{x+y=z}%
\underline{C^{d}}\left(  F_{X}\left(  x\right)  ,F_{Y}\left(  y\right)
\right)
\]
where $\underline{C^{d}}\left(  u,v\right)  =u+v-\underline{C}\left(
u,v\right)  .$

Let $Y_{1}=X$ and $Y_{0}=-Y$. By Lemma B.1, \ sharp bounds on the DTE are
affected by only the upper bound on the copula$\ $of $Y_{0}$ and $Y_{1}.$
Since CPQD improves only the lower bound on the copula if $Y_{0}$ and $Y_{1}$,
the DTE\ bounds do not improve by CPQD. \newline
\end{proof}

\subsection*{Theorem \ref{T4}}

\begin{theorem}
\label{T4}Under $M.1-M.4$ and MTR, sharp bounds on $F\left(  y_{0}%
,y_{1}\right)  $, and $F_{\Delta}\left(  \delta\right)  $ are given as
follows: for $d\in\left\{  0,1\right\}  $, $y\in\mathbb{R}$, $\delta
\in\mathbb{R}$, and $\left(  y_{0},y_{1}\right)  \in\mathbb{R\times R},$%
\begin{align*}
F\left(  y_{0},y_{1}\right)   &  \in\left[  F^{L}\left(  y_{0},y_{1}\right)
,F^{U}\left(  y_{0},y_{1}\right)  \right]  ,\\
F_{\Delta}\left(  \delta\right)   &  \in\left[  F_{\Delta}^{L}\left(
\delta\right)  ,F_{\Delta}^{U}\left(  \delta\right)  \right]  ,
\end{align*}
where%
\begin{align*}
&  F^{L}\left(  y_{0},y_{1}\right) \\
&  =\left\{
\begin{array}
[c]{cc}%
\begin{array}
[c]{c}%
\underset{z\in\Xi}{\sup}\left[  \max\left\{  \underset{y_{0}\leq y\leq y_{1}%
}{\sup}\left\{
\begin{array}
[c]{c}%
\left(  P\left(  y_{0}|0,z\right)  -P\left(  y|0,z\right)  \right)  \left(
1-p\left(  z\right)  \right) \\
+L_{10}^{wst}\left(  y,z\right)
\end{array}
\right\}  ,0\right\}  \right. \\
\left.  +\max\left\{  \underset{y_{0}\leq y\leq y_{1}}{\sup}\left\{
L_{01}^{mtr}\left(  y_{0},z\right)  -U_{01}^{wst}\left(  y,z\right)  +\left(
P\left(  Y\leq y|1,z\right)  \right)  p\left(  z\right)  \right\}  ,0\right\}
\right]  ,
\end{array}
& \text{if }y_{0}<y_{1},\\
F_{1}^{L}\left(  y\right)  , & \text{if }y_{0}\geq y_{1},
\end{array}
\right.
\end{align*}%
\begin{align*}
F^{U}\left(  y_{0},y_{1}\right)   &  =\left\{
\begin{array}
[c]{cc}%
\begin{array}
[c]{c}%
\underset{z\in\Xi}{\inf}\left\{  \min\left\{  P\left(  Y\leq y_{0}|0,z\right)
\left(  1-p\left(  z\right)  \right)  ,U_{10}^{mtr}\left(  y,z\right)
\right\}  \right. \\
\left.  +\min\left\{  U_{01}^{wst}\left(  y,z\right)  ,P\left(  y_{1}%
|1,z\right)  p\left(  z\right)  \right\}  \right\}  ,
\end{array}
& \text{if }y_{0}<y_{1},\\
F_{1}^{U}\left(  y\right)  , & \text{if }y_{0}\geq y_{1},
\end{array}
\right. \\
F_{\Delta}^{U}\left(  \delta\right)   &  =1+\inf_{z\in\Xi}\left\{  p\left(
z\right)  +\underset{y\in\mathbb{R}}{\inf}\max\left\{  P\left(  y|1,z\right)
p\left(  z\right)  -L_{01}^{mtr}\left(  y-\delta,z\right)  ,0\right\}  \right.
\\
&  \left.  +\underset{y\in\mathbb{R}}{\inf}\max\left\{  U_{10}^{mtr}\left(
y,z\right)  -P\left(  y-\delta|0,z\right)  \left(  1-p\left(  z\right)
\right)  ,0\right\}  \right\}  ,\\
F_{\Delta}^{L}\left(  \delta\right)   &  =\sup_{z\in\Xi}\left\{
\underset{\left\{  a_{k}\right\}  _{k=-\infty}^{\infty}\in\mathcal{A}_{\delta
}}{\sup\max}\left\{  P\left(  a_{k+1}|1,z\right)  p\left(  z\right)
-U_{01}^{wst}\left(  a_{k},z\right)  ,0\right\}  \right. \\
&  \left.  +\underset{\left\{  b_{k}\right\}  _{k=-\infty}^{\infty}%
\in\mathcal{A}_{\delta}}{\sup\max}\left\{  L_{10}^{wst}\left(  b_{k+1}%
,z\right)  -P\left(  b_{k}|0,z\right)  \left(  1-p\left(  z\right)  \right)
,0\right\}  \right\}  ,
\end{align*}
where%
\[
\mathcal{A}_{\delta}=\left\{  \left\{  a_{k}\right\}  _{k=-\infty}^{\infty
};0\leq a_{k+1}-a_{k}\leq\delta\text{ for every integer }k\right\}  .
\]

\end{theorem}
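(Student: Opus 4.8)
The plan is to obtain each of the four bounds by the same three‑move template already used for Theorem~\ref{T1}: condition on $Z=z$ (invoking $M.3$), split the conditional event into the $D=0$ and $D=1$ strata, bound each stratum by the appropriate classical inequality sharpened by MTR and by Lemma~\ref{L4}, and finally intersect over $z\in\Xi$ (a supremum for lower bounds, an infimum for upper bounds). The marginal bounds $F_d(y)\in[F_d^{L}(y),F_d^{U}(y)]$ are immediate: they are the intersection over $z$ of $P(y\mid d,z)p(z)$-type terms plus the Lemma~\ref{L4} intervals, exactly as in the body of the paper. The structural fact I would record first is that MTR transmits to every stratum: $P(Y_1\ge Y_0)=1$ implies $P(Y_1\ge Y_0\mid D=d,Z=z)=1$, hence within each stratum $Y_1$ first‑order stochastically dominates $Y_0$, equivalently $F(t,t\mid d,z)=F_{Y_1\mid d,z}(t)$ for every $t$. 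This single fact does three things at once: it is precisely the input that produces the improved counterfactual bounds $L_{01}^{mtr}$ and $U_{10}^{mtr}$ of Lemma~\ref{L4}; it makes the comonotone coupling feasible inside each stratum, so Fréchet--Hoeffding upper bounds remain attainable there; and it furnishes the ``known diagonal value'' needed to apply Lemma~\ref{L4.6}.

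For the joint distribution, when $y_0\ge y_1$ MTR forces $\{Y_1\le y_1\}\subseteq\{Y_0\le y_0\}$ a.s., so $F(y_0,y_1)=F_1(y_1)$ and the bounds collapse to the marginal bounds on $F_1$. When $y_0<y_1$ I would argue stratum by stratum. In the $D=0$ stratum, for any $y\in[y_0,y_1]$ write $P(Y_0\le y_0,Y_1\le y_1\mid 0,z)\ge P(Y_0\le y_0,Y_1\le y\mid 0,z)=P(Y_1\le y\mid 0,z)-P(Y_0>y_0,Y_1\le y\mid 0,z)$, and use $\{Y_0>y_0,Y_1\le y\}\subseteq\{y_0<Y_0\le y\}$ (a consequence of $Y_0\le Y_1\le y$) to bound the subtracted term by $P(y\mid 0,z)-P(y_0\mid 0,z)$; multiplying by $1-p(z)$, replacing $P_1(y,0\mid z)$ by its Lemma~\ref{L4} lower bound $L_{10}^{wst}(y,z)$, and maximizing over $y\in[y_0,y_1]$ and over $0$ yields the first bracket of $F^{L}$. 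In the $D=1$ stratum I would instead apply Lemma~\ref{L4.6} at the diagonal point $(y,y)$, $y\in[y_0,y_1]$, where $F(y,y\mid 1,z)=P(y\mid 1,z)$ is known, substitute $P_0(y_0,1\mid z)\ge L_{01}^{mtr}(y_0,z)$ and $P_0(y,1\mid z)\le U_{01}^{wst}(y,z)$, and maximize over $y$; this gives the second bracket. Adding the two strata and taking $\sup_{z\in\Xi}$ gives $F^{L}$; the Fréchet--Hoeffding upper bound within each stratum (feasible by comonotonicity, improved only through the sharper marginal component of Lemma~\ref{L4}) gives $F^{U}$ after $\inf_{z\in\Xi}$.

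For the DTE I would apply Lemma~\ref{L5} stratum by stratum in the decomposition $F_\Delta(\delta)=P(\Delta\le\delta\mid 0,z)(1-p(z))+P(\Delta\le\delta\mid 1,z)p(z)$, where in the $D=0$ stratum the marginal of $Y_0$ is the identified $P(\cdot\mid 0,z)$ and that of $Y_1$ ranges over the Lemma~\ref{L4} interval, and symmetrically for $D=1$. For $\delta\ge0$ the Makarov‑type upper bound of Lemma~\ref{L5}, after distributing $p(z)$ and $1-p(z)$ and substituting $L_{01}^{mtr}$ and $U_{10}^{mtr}$ for the counterfactual pieces, yields the stated $F_\Delta^{U}$; the staircase lower bound of Lemma~\ref{L5}, with $P_0$ replaced by $U_{01}^{wst}$ and $P_1$ by $L_{10}^{wst}$, yields $F_\Delta^{L}$; for $\delta<0$, MTR forces both to $0$. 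The final step is once more $\sup_{z\in\Xi}$ / $\inf_{z\in\Xi}$.

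The main obstacle is sharpness, i.e.\ exhibiting, for each admissible value of the partially identified objects, a joint law of $(\varepsilon_0,\varepsilon_1,U)$ --- equivalently of $(Y_0,Y_1,U)$ with $Y_d=m(d,\varepsilon_d)$ --- that (i) reproduces all observable conditional laws $P(\cdot\mid d,z)$ and propensity scores, (ii) is supported on $\{y_0\le y_1\}$, and (iii) attains the candidate value. I would follow the sharpness construction in the proof of Theorem~\ref{T1}, but restrict every coupling used there to the region $\{y_0\le y_1\}$: comonotone on that region for the upper bounds, and the Lemma~\ref{L4.6}/Lemma~\ref{L5} extremal couplings for the lower bounds. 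The delicate point is that the diagonal identities $F(t,t\mid d,z)=F_{Y_1\mid d,z}(t)$ must hold at every $t$ simultaneously, while the intersection bound is ``active'' only at the optimizing $z$ and the optimizing diagonal point $y$; I would build the structure that is active at that $z$ and then verify feasibility at the remaining $z$ using the monotonicity of $p(\cdot)$ and $M.4$ together with the inequalities of Lemma~\ref{L4}, exactly as in the worst‑case sharpness argument. Showing that the optimizing staircase in $\mathcal{A}_\delta$ can be realized by a valid selection structure is the analogous obstacle for $F_\Delta^{L}$.
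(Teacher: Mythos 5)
Your proposal is correct and follows essentially the same route as the paper: decompose by treatment strata conditional on $Z=z$, exploit the MTR diagonal constraint $F\left(y,y\right)=F_{1}\left(y\right)$ through the Nelsen-type bound (which the paper packages as Lemma B.3, derived from the copula version of Lemma \ref{L4.6}), invoke Lemma \ref{L5} for the DTE, substitute the Lemma \ref{L4} bounds for the counterfactual components, and intersect over $z\in\Xi$. Your direct set-inclusion argument in the $D=0$ stratum is simply an elementary re-derivation of the same Lemma B.3 inequality, and your treatment of sharpness mirrors (and is, if anything, more explicit about the cross-$z$ and cross-diagonal compatibility issue than) the paper's own argument.
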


\begin{proof}
The proof of Theorem \ref{T4} considers sharp bounds on the joint distribution
of $Y_{0}$ and $Y_{1}$ only. Sharp bounds on the marginal distributions have
been derived in Subsection \ref{c2s3s4} and sharp bounds on the DTE are
trivially derived from Lemma \ref{L5}.\newline\textbf{Part 1. Sharp bounds on
the joint distribution of }$Y_{0}$\textbf{\ and }$Y_{1}$\newline Under MTR, it
is obvious that $F\left(  y_{0},y_{1}\right)  =$ $F_{1}\left(  y_{1}\right)  $
for $y_{1}\leq y_{0}.$ Throughout this proof, I\ consider only the nontrivial
case $y_{0}<y_{1}.$\newline To obtain sharp bounds on the joint distribution
under $M.1-M.5$ and MTR, I\ use the following Lemma B.2 presented by
\citet{N2006}.\newline\newline\textbf{Lemma B.2} Let $C$ be a copula, and
suppose $C\left(  a,b\right)  =\theta,$ where $\left(  a,b\right)  $ is in
$\left(  0,1\right)  ^{2}$ and $\theta$\ satisfies $\max\left(
a+b-1,0\right)  \leq\theta\leq\min\left(  a,b\right)  $. Then%
\[
C_{L}\left(  u,v\right)  \leq C\left(  u,v\right)  \leq C_{U}\left(
u,v\right)  ,
\]
where $C_{U}$\ and $C_{L}$\ are the copulas given by
\begin{align*}
C_{U}\left(  u,v\right)   &  =\min\left(  u,v,\theta+\left(  u-a\right)
^{+}+\left(  v-b\right)  ^{+}\right)  ,\\
C_{L}\left(  u,v\right)   &  =\max\left(  0,u+v-1,\theta-\left(  a-u\right)
^{+}-\left(  b-v\right)  ^{+}\right)  .
\end{align*}
where $\left(  x\right)  ^{+}=\max\left\{  x,0\right\}  $.\newline%
\newline\textbf{Lemma B.3 }For fixed marginal distribution functions $F_{0}$
and $F_{1},$\ sharp bounds on the joint distribution function $F$ are given as
follows:%
\[
F^{L}\left(  y_{0},y_{1}\right)  \leq F\left(  y_{0},y_{1}\right)  \leq
F^{U}\left(  y_{0},y_{1}\right)
\]
where%
\begin{align*}
F^{L}\left(  y_{0},y_{1}\right)   &  =\max_{y_{0}\leq y<y_{1}}\left\{
F_{1}\left(  y\right)  -F_{0}\left(  y\right)  +F_{0}\left(  y_{0}\right)
\right\}  ,\\
F^{U}\left(  y_{0},y_{1}\right)   &  =\inf_{y\in\mathbb{R}}\min\left(
F_{0}\left(  y_{0}\right)  ,F_{1}\left(  y_{1}\right)  \right)  .
\end{align*}
\newline\newline From Lemma B.3, sharp bounds on the joint distribution are
readily obtained as follows: if $y_{0}<y_{1},$%
\begin{align*}
F^{L}\left(  y_{0},y_{1}\right)   &  =\sup_{z\in\Xi}\left[  \max\left\{
\underset{y_{0}\leq y\leq y_{1}}{\sup}\left\{  \left(  P\left(  y_{0}%
|0,z\right)  -P\left(  y|0,z\right)  \right)  \left(  1-p\left(  z\right)
\right)  +L_{10}^{wst}\left(  y,z\right)  \right\}  ,0\right\}  \right. \\
&  \left.  +\max\left\{  \underset{y_{0}\leq y\leq y_{1}}{\sup}\left\{
L_{01}^{mtr}\left(  y_{0}|z\right)  -U_{01}^{wst}\left(  y,z\right)  +\left(
P\left(  y|1,z\right)  \right)  p\left(  z\right)  \right\}  ,0\right\}
\right]  ,\\
F^{U}\left(  y_{0},y_{1}\right)   &  =\inf_{z\in\Xi}\left\{  \min\left\{
P\left(  y_{0}|0,z\right)  \left(  1-p\left(  z\right)  \right)  ,U_{10}%
^{mtr}\left(  y,z\right)  \right\}  \right. \\
&  \left.  +\min\left\{  U_{01}^{wst}\left(  y,z\right)  ,P\left(
y_{1}|1,z\right)  p\left(  z\right)  \right\}  \right\}  .
\end{align*}

\end{proof}

\begin{proof}
[Proof of Lemma B.3]Since MTR is equivalent to the condition that $F\left(
y,y\right)  =F_{1}\left(  y\right)  $ for any $y\in\mathbb{R}$, by Lemma B.2
the lower and upper bounds on $F\left(  y_{0},y_{1}\right)  $ are obtained by
taking the intersection over all $y\in\mathbb{R}$ as follows:
\begin{align*}
F^{U}\left(  y_{0},y_{1}\right)   &  =\inf_{y\in\mathbb{R}}\min\left(
F_{0}\left(  y_{0}\right)  ,F_{1}\left(  y_{1}\right)  ,F_{1}\left(  y\right)
+\left(  F_{0}\left(  y_{0}\right)  -F_{0}\left(  y\right)  \right)
^{+}+\left(  F_{1}\left(  y_{1}\right)  -F_{1}\left(  y\right)  \right)
^{+}\right)  ,\\
F^{L}\left(  y_{0},y_{1}\right)   &  =\sup_{y\in\mathbb{R}}\max\left(
0,F_{0}\left(  y_{0}\right)  +F_{1}\left(  y_{1}\right)  -1,F_{1}\left(
y\right)  -\left(  F_{0}\left(  y\right)  -F_{0}\left(  y_{0}\right)  \right)
^{+}-\left(  F_{1}\left(  y\right)  -F_{1}\left(  y_{1}\right)  \right)
^{+}\right)  .
\end{align*}
Note that
\begin{align*}
&  \inf_{y\in\mathbb{R}}\left\{  F_{1}\left(  y\right)  +\left(  F_{0}\left(
y_{0}\right)  -F_{0}\left(  y\right)  \right)  ^{+}+\left(  F_{1}\left(
y_{1}\right)  -F_{1}\left(  y\right)  \right)  ^{+}\right\} \\
&  \geq\inf_{y\in\mathbb{R}}\left\{  F_{1}\left(  y\right)  +\left(
F_{1}\left(  y_{1}\right)  -F_{1}\left(  y\right)  \right)  ^{+}\right\} \\
&  \geq\inf_{y\in\mathbb{R}}\left\{  F_{1}\left(  y\right)  +F_{1}\left(
y_{1}\right)  -F_{1}\left(  y\right)  \right\}  =F_{1}\left(  y_{1}\right)  .
\end{align*}
Therefore,%
\[
F^{U}\left(  y_{0},y_{1}\right)  =\min\left(  F_{0}\left(  y_{0}\right)
,F_{1}\left(  y_{1}\right)  \right)  .
\]
Now to derive the lower bound $F^{L}\left(  y_{0},y_{1}\right)  ,$ let
$G\left(  y\right)  =$ $F_{1}\left(  y\right)  -\left(  F_{0}\left(  y\right)
-F_{0}\left(  y_{0}\right)  \right)  ^{+}-\left(  F_{1}\left(  y\right)
-F_{1}\left(  y_{1}\right)  \right)  ^{+}.$ Then for $y_{0}<y_{1,}$%
\[
G\left(  y\right)  =\left\{
\begin{array}
[c]{cc}%
F_{0}\left(  y_{0}\right)  +F_{1}\left(  y_{1}\right)  -F_{0}\left(  y\right)
, & \text{if }y_{1}\leq y\\
F_{1}\left(  y\right)  -F_{0}\left(  y\right)  +F_{0}\left(  y_{0}\right)  , &
\text{if }y_{0}\leq y<y_{1}\\
F_{1}\left(  y\right)  , & \text{if }y<y_{0}%
\end{array}
\right.  .
\]
and so,
\[
\sup_{y\in\mathbb{R}}G\left(  y\right)  =\sup_{y_{0}\leq y\leq y_{1}}\left\{
F_{1}\left(  y\right)  -F_{0}\left(  y\right)  +F_{0}\left(  y_{0}\right)
\right\}
\]
Since $F_{1}\left(  y_{1}\right)  -F_{0}\left(  y_{1}\right)  +F_{0}\left(
y_{0}\right)  \geq\max\left(  0,F_{0}\left(  y_{0}\right)  +F_{1}\left(
y_{1}\right)  -1\right)  ,$ for $y_{0}<y_{1},$
\[
F^{L}\left(  y_{0},y_{1}\right)  =\sup_{y_{0}\leq y\leq y_{1}}\left\{
F_{1}\left(  y\right)  -F_{0}\left(  y\right)  +F_{0}\left(  y_{0}\right)
\right\}  .
\]

\end{proof}

\subsection*{Corollary \ref{T5}}

\begin{corollary}
\label{T5}(Bounds on the marginal distributions of potential outcomes) Under
$M.1-M.4$, PSM and MTR, sharp bounds on marginal distributions of $Y_{0}$ and
$Y_{1}$, their joint distribution and the DTE are given as follows:%
\begin{align*}
F_{0}^{L}\left(  y\right)   &  =\underset{z\in\Xi}{\sup}\left[  P\left(
y|0,z\right)  \left(  1-p\left(  z\right)  \right)  +L_{01}^{mtr}\left(
y,z\right)  \right]  ,\\
F_{0}^{U}\left(  y\right)   &  =\underset{z\in\Xi}{\inf}\left[  P\left(
y|0,z\right)  \left(  1-p\left(  z\right)  \right)  +U_{01}^{sm}\left(
y,z\right)  \right]  ,\\
F_{1}^{L}\left(  y\right)   &  =\underset{z\in\Xi}{\sup}\left[  P\left(
y|1,z\right)  p\left(  z\right)  +L_{10}^{sm}\left(  y,z\right)  \right]  ,\\
F_{1}^{U}\left(  y\right)   &  =\underset{z\in\Xi}{\inf}\left[  P\left(
y|1,z\right)  p\left(  z\right)  +U_{10}^{mtr}\left(  y,z\right)  \right]  ,\\
F^{L}\left(  y_{0},y_{1}\right)   &  =\underset{z\in\Xi}{\sup}\left\{
P\left(  y_{0}|0,z\right)  L_{10}^{sm}\left(  y_{1},z\right)  +L_{01}%
^{mtr}\left(  y_{0},z\right)  P\left(  y_{1}|1,z\right)  \right\}  ,\\
F^{U}\left(  y_{0},y_{1}\right)   &  =\inf_{z\in\Xi}\left[
\begin{array}
[c]{c}%
\min\left\{  P\left(  y_{0}|0,z\right)  \left(  1-p\left(  z\right)  \right)
,U_{10}^{mtr}\left(  y,z\right)  \right\} \\
+\min\left\{  U_{01}^{sm}\left(  y_{0},z\right)  ,P\left(  y_{1}|1,z\right)
p\left(  z\right)  \right\}
\end{array}
\right]  ,\\
F_{\Delta}^{U}\left(  \delta\right)   &  =1+\inf_{z\in\Xi}\left\{  p\left(
z\right)  +\underset{y\in\mathbb{R}}{\inf}\max\left\{  P\left(  y|1,z\right)
p\left(  z\right)  -L_{01}^{mtr}\left(  y-\delta,z\right)  ,0\right\}  \right.
\\
&  \left.  +\underset{y\in\mathbb{R}}{\inf}\max\left\{  U_{10}^{mtr}\left(
y,z\right)  -P\left(  y-\delta|0,z\right)  \left(  1-p\left(  z\right)
\right)  ,0\right\}  \right\}  ,\\
F_{\Delta}^{L}\left(  \delta\right)   &  =\sup_{z\in\Xi}\left\{
\underset{\left\{  a_{k}\right\}  _{k=-\infty}^{\infty}\in\mathcal{A}_{\delta
}}{\sup\max}\left\{  P\left(  a_{k+1}|1,z\right)  p\left(  z\right)
-U_{01}^{wst}\left(  a_{k},z\right)  ,0\right\}  \right. \\
&  \left.  +\underset{\left\{  b_{k}\right\}  _{k=-\infty}^{\infty}%
\in\mathcal{A}_{\delta}}{\sup\max}\left\{  L_{10}^{wst}\left(  b_{k+1}%
,z\right)  -P\left(  b_{k}|0,z\right)  \left(  1-p\left(  z\right)  \right)
,0\right\}  \right\}  ,\\
\text{where }\mathcal{A}_{\delta}  &  =\left\{  \left\{  a_{k}\right\}
_{k=-\infty}^{\infty};0\leq a_{k+1}-a_{k}\leq\delta\text{ for every integer
}k\right\}
\end{align*}

\end{corollary}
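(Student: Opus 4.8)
The plan is to assemble the corollary from results already in hand, following the same four-part template used in the earlier proofs: bound the counterfactual probabilities, then the marginals, then the joint distribution, then the DTE. The one genuinely new point is that NSM and MTR improve \emph{opposite} endpoints of the intervals of Lemma \ref{L1.5}. By Lemma \ref{L2}, NSM tightens one endpoint of each of $P_{1}\left(y,0|z\right)$ and $P_{0}\left(y,1|z\right)$; by Lemma \ref{L4}, MTR tightens the other. Combining them I would obtain $P_{1}\left(y,0|z\right)\in\left[L_{10}^{sm}\left(y,z\right),U_{10}^{mtr}\left(y,z\right)\right]$ and $P_{0}\left(y,1|z\right)\in\left[L_{01}^{mtr}\left(y,z\right),U_{01}^{sm}\left(y,z\right)\right]$. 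At the interval level this is lossless because NSM restricts only how each $\varepsilon_{d}$ depends on $U$ while MTR restricts only the support of $\left(Y_{0},Y_{1}\right)$ --- equivalently, the conditional copula of $\left(\varepsilon_{0},\varepsilon_{1}\right)$ given $U$; to promote this to sharpness I would exhibit, for each claimed endpoint, a data-generating process consistent with $M.1-M.4$, NSM and MTR that attains it. For instance the NSM-extremal law of $\varepsilon_{1}\mid U$ used in the proof of Lemma \ref{L2} on $\left\{U>\overline{p}\right\}$, completed by any coupling placing $Y_{0}$ below $Y_{1}$ there, attains $L_{10}^{sm}$, while the MTR-extremal reallocation of $Y_{1}\mid U$ from Lemma \ref{L4} on the same event attains $U_{10}^{mtr}$ and respects NSM because it disturbs only that one event.

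Given those intervals, the marginal bounds follow by plugging into the identity $F_{d}\left(y\right)=P\left(y,d|z\right)+\left[\text{counterfactual term}\right]$ and intersecting over $z$, exactly as in Part 1 of the proof of Theorem \ref{T1}, with $F_{0}^{L},F_{1}^{U}$ driven by the MTR endpoints and $F_{0}^{U},F_{1}^{L}$ by the NSM endpoints. For the joint distribution I would use that MTR is equivalent to $F\left(y,y\right)=F_{1}\left(y\right)$ for every $y$: when $y_{1}\leq y_{0}$ the interval collapses onto the $F_{1}$-interval just found, and when $y_{0}<y_{1}$ I would, for each $z$ and each admissible pair of conditional marginals from the previous paragraph, apply the MTR-adapted copula bound (Lemma \ref{L4.6}, in the specialized form of Lemma B.3 in the proof of Theorem \ref{T4}) to $P\left(Y_{0}\leq y_{0},Y_{1}\leq y_{1}|d,z\right)$, then take the union over admissible marginals and the intersection over $z$. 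As in the proof of Lemma B.3 the diagonal constraint does not bind on the upper side, so the upper bound reduces to the Fr\'{e}chet--Hoeffding upper bound on the conditionals while the lower bound retains the MTR-diagonal term; sharpness again comes from the data-generating processes of the first paragraph together with the extremal copulas of Lemma B.2.

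For the DTE I would invoke that NSM has no direct identifying power beyond the worst case (Subsection \ref{c2s3s2}) and that MTR improves only the lower bound (Subsection \ref{c2s3s4}), so the DTE remains bounded by the Makarov / MTR bounds of Lemma \ref{L5} applied, through the $z$-and-$D$ decomposition of $P\left(Y_{1}-Y_{0}\leq\delta\right)$, with the conditional marginals ranging over the intervals of the first paragraph. Since the Makarov-type lower bound and the building block of the MTR upper bound are each governed by the conditional Fr\'{e}chet--Hoeffding lower bound (the argument illustrated by Figure \ref{Makarovbounds} and Lemma B.1), optimizing over the Makarov partitions and intersecting over $z$ with the appropriate NSM- and MTR-improved counterfactual endpoints delivers the displayed $F_{\Delta}^{L}$ and $F_{\Delta}^{U}$.

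The hard part will be sharpness of the joint-distribution bound for $y_{0}<y_{1}$, where MTR (a support restriction) and NSM (a dependence restriction) both act on the conditional law of $\left(\varepsilon_{0},\varepsilon_{1}\right)$ given $U$: I must produce a single coupling that is first-order stochastically monotone in $U$, is supported on $\left\{Y_{1}\geq Y_{0}\right\}$, and makes the relevant conditional joint probability equal to the claimed value for every $z$ at once. I would build it piecewise on the $U$-intervals cut out by $\underline{p}$ and $\overline{p}$, using on each piece the comonotone coupling furnished by Lemma B.2 subject to the diagonal value dictated by MTR, and then check that the resulting conditional distributions are monotonically ordered across pieces so that NSM holds globally.
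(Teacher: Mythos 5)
The paper contains no written proof of Corollary \ref{T5}: it is asserted in Subsection \ref{c2s4s2} to be straightforward from the results of Subsections \ref{c2s3s2}--\ref{c2s3s4}, i.e.\ precisely the assembly you propose --- combine the NSM-improved endpoints $L_{10}^{sm},U_{01}^{sm}$ from Lemma \ref{L2} with the MTR-improved endpoints $L_{01}^{mtr},U_{10}^{mtr}$ from Lemma \ref{L4} to get $P_{1}\left(y,0|z\right)\in\left[L_{10}^{sm},U_{10}^{mtr}\right]$ and $P_{0}\left(y,1|z\right)\in\left[L_{01}^{mtr},U_{01}^{sm}\right]$, then feed these into the marginal, Fr\'{e}chet--Hoeffding/MTR, and Makarov/MTR formulas of Theorems \ref{T1} and \ref{T4} and intersect over $z$. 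Your marginal and joint upper bounds match the displayed formulas, and your sharpness discussion is actually more explicit than the paper's: the paper never verifies that the NSM-extremal and MTR-extremal configurations are mutually compatible, while you correctly isolate this as the hard step, although your piecewise coupling is only sketched, so that part remains incomplete (as it does in the paper).

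There is, however, a concrete mismatch between what your route produces and two of the displayed formulas. For the joint lower bound with $y_{0}<y_{1}$ you invoke Lemma \ref{L4.6} (via Lemma B.3), which yields a diagonal-type bound of the form $\sup_{y_{0}\leq y\leq y_{1}}\left\{\cdot\right\}$ as in Theorem \ref{T4}, now with NSM-improved counterfactual endpoints; the corollary instead displays the product form $\sup_{z}\left\{P\left(y_{0}|0,z\right)L_{10}^{sm}\left(y_{1},z\right)+L_{01}^{mtr}\left(y_{0},z\right)P\left(y_{1}|1,z\right)\right\}$, which is the CPQD-style bound of Subsection \ref{c2s4s2} obtained from Chebyshev's integral inequality and is not derivable from your argument without imposing CPQD (indeed the paper explicitly declines to combine CPQD with MTR). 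Similarly, you claim that plugging the improved endpoints into the Makarov/MTR lower bound ``delivers the displayed $F_{\Delta}^{L}$,'' but the displayed $F_{\Delta}^{L}$ uses the worst-case $U_{01}^{wst}$ and $L_{10}^{wst}$, whereas your construction would substitute $U_{01}^{sm}$ and $L_{10}^{sm}$ (which is what the numerical discussion in Section \ref{c2s5} indicates is intended). So either these displayed expressions carry over labels from the NSM+CPQD and MTR-only cases and your (tighter) formulas are the intended ones, or your proof establishes a statement different from the one printed; as written, the final step asserting coincidence with the displayed bounds is not checked.
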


\end{appendix}

\end{document}